\newcommand{\search}{{\sf Search}}
\newcommand{\searchfast}{{\sf Search^{\star}}}
\newcommand{\ascend}{{\sf Ascend}}
\newcommand{\scan}{{\sf Scan}}
\newcommand{\update}{{\sf Update}}
\def\phi{\varphi}
\def\odd{\mathrm{odd}}
\def\K{\mathcal{K}}
\def\X{\mathcal{X}}
\def\P{\mathcal{P}}
\def\G{\mathcal{G}}
\def\S{\mathcal{S}}
\def\T{\mathcal{T}}
\def\A{\mathcal{A}}
\def\B{\mathcal{B}}
\def\C{\mathcal{C}}
\def\Zp{\mathbf{Z}_{+}}
\newcounter{rnc}
\newcommand{\rn}[1]{\setcounter{rnc}{#1}\roman{rnc}}
\newtheorem{theorem}{Theorem}[section]
\newtheorem{lemma}[theorem]{Lemma}
\newtheorem{claim}[theorem]{Claim}
\newtheorem{corollary}[theorem]{Corollary}
\newtheorem{proposition}[theorem]{Proposition}
\newcommand{\seq}{{\sf seq}}
\newcommand{\sym}{{\sf sym}}
\newcommand{\minimal}{{\sf min}}
\newcommand{\reconst}{{\sf Reconst}}
\newcommand{\grow}{{\sf Grow}}
\newcommand{\find}{{\sf Find}}
\newcommand{\link}{{\sf Link}}
\def\F{\mathcal{F}}
\def\val{\mathrm{val}}
\def\OPT{\mathrm{opt}_\mathbf{R}}
\def\OPTint{\mathrm{opt}_\mathbf{Z}}
\def\odd{\mathrm{odd}}
\def\K{\mathcal{K}}
\def\X{\mathcal{X}}
\def\P{\mathcal{P}}
\def\G{\mathcal{G}}
\def\V{\mathcal{V}}
\def\Y{\mathcal{Y}}
\title{Finding Maximum Edge-Disjoint Paths \\ 
Between Multiple Terminals\thanks{A preliminary version has appeared in Proceedings of the 31st Annual ACM-SIAM
Symposium on Discrete Algorithms (SODA 2020), pp.~1933--1944.}
}
\author{Satoru Iwata
\thanks{Department of Mathematical Informatics, University of Tokyo, Tokyo 113-8656, Japan, and Institute for Chemical Reaction Design and Discovery, Hokkaido University, Sapporo, Hokkaido, 001-0021, Japan.
Supported by Grants-in-Aid for Scientific Research, 17H01699 from JSPS and 20H05965 from MEXT. Also supported by CREST JPMJCR14D2 and ERATO JPMJER1903 from JST.}
\and 
Yu Yokoi
\thanks{National Institute of Informatics, Tokyo 101-8430, Japan. Supported by JSPS KAKENHI JP18K18004, JST PRESTO JPMJPR212B, and JST CREST JPMJCR14D2.}
}
\begin{document}
\maketitle
\begin{abstract}
Let $G=(V,E)$ be a multigraph with a set $T\subseteq V$ of terminals. A path in $G$ is called a $T$-path if 
its ends are distinct vertices in $T$ and no internal vertices belong to $T$. In 1978, Mader showed 
a characterization of the maximum number of edge-disjoint $T$-paths. 

In this paper, we provide a combinatorial, deterministic algorithm for finding the maximum number of 
edge-disjoint $T$-paths. The algorithm adopts an augmenting path approach. More specifically, we utilize a new concept of short augmenting walks in auxiliary labeled graphs to capture a possible augmentation of 
the number of edge-disjoint $T$-paths. To design a search procedure for a short augmenting walk, we introduce 
blossoms analogously to the matching algorithm of Edmonds (1965). When the search procedure terminates without finding a short  augmenting walk, 
the algorithm provides a certificate for the optimality of the current edge-disjoint $T$-paths. 
From this certificate, one can obtain the Edmonds--Gallai type decomposition introduced by Seb\H{o} and Szeg\H{o} (2004).  
The algorithm runs in $O(|E|^2)$ time, which is much faster than 
the best known deterministic algorithm based on a reduction to linear matroid parity. 

We also present a strongly polynomial algorithm for the maximum integer free multiflow problem, 
which asks for a nonnegative integer combination of $T$-paths maximizing the sum of the coefficients 
subject to capacity constraints on the edges. 
\end{abstract}

\section{Introduction}
Let $G=(V,E)$ be a multigraph without selfloops. For a specified set $T\subseteq V$ of terminals,  
a path in $G$ is called a $T$-path if its ends are distinct vertices in $T$ and no internal vertices belong to $T$. 
In 1978, Mader~\cite{MaderA} showed a characterization of the maximum number of edge-disjoint $T$-paths. 
The theorem naturally extended a previously known min-max theorem on the inner Eulerian case due to 
Cherkassky~\cite{Cherkassky77} and Lov\'asz~\cite{Lovasz76}. Unlike this preceding result, the original proof was not constructive. 

Subsequently, Mader~\cite{MaderH} extended his theorem to the problem of maximum number of openly disjoint $T$-paths. 
The result contains as its special case Gallai's min-max theorem on maximum number of vertex-disjoint $T$-paths \cite{Gallai61}, which is equivalent to the Tutte-Berge formula on maximum matching \cite{Berge58,Tutte47}.
Lov\'asz~\cite{Lovasz80} then introduced an equivalent variant, called disjoint $\A$-paths, where $\A$ is 
a given partition of the terminals, to provide 
an alternative proof via the matroid matching theorem. See also \cite{TY16} for a minor correction. 
Schrijver~\cite{Schr01} provided a short alternative proof for Mader's theorem on disjoint $\A$-paths based on Gallai's min-max theorem.
The proof was again nonconstructive, and it did not lead to an efficient algorithm. Analogously to the Edmonds--Gallai decomposition for maximum matching, 
Seb\H{o} and Szeg\H{o} \cite{SS04} introduced a canonical decomposition that captures all the disjoint $\A$-paths.

Schrijver~\cite{Schr03} described a reduction of the disjoint $\A$-paths problem to the linear matroid parity problem. 
Consequently, one can use efficient linear matroid parity algorithms \cite{CLL14,GS86,Orlin08,OV92} for finding 
the maximum number of disjoint $\A$-paths (or openly disjoint $T$-paths). The current best running time bound 
is $O(n^\omega)$, where $n$ is the number of vertices and $\omega$ is the exponent of the fast matrix multiplications. 
This bound is achieved by the randomized algebraic algorithm of Cheung, Lau, and Leung~\cite{CLL14}. The best deterministic 
running time bound due to Gabow and Stallmann \cite{GS86} is $O(mn^\omega)$, where $m$ is the number of edges. 
Without using the reduction to linear matroid parity, Chudnovsky, Cunningham, and Geelen~\cite{CCG08} devised 
a combinatorial algorithm that runs in $O(n^5)$ time. 
When applying these methods to the edge-disjoint $T$-paths problem, one has to deal with the line graph of $G$. 
Thus the best known randomized and deterministic running time bounds for the maximum edge-disjoint $T$-paths
are $O(|E|^\omega)$ and $O(|E|^{\omega+2})$, respectively. 

Another approach to finding maximum edge-disjoint $T$-paths is based on linear programming. Keijsper, Pendavingh, and Stougie~\cite{KPS06} provided a dual pair of linear programs whose optimal value coincides with the maximum number of edge-disjoint $T$-paths. Giving an efficient separation procedure for this linear program, they showed that one can find maximum edge-disjoint $T$-paths in polynomial time via the ellipsoid method. 

In this paper, we provide a combinatorial, deterministic algorithm for finding the maximum number of edge-disjoint 
$T$-paths. The algorithm adopts an augmenting path approach. More specifically, we introduce a novel concept of 
augmenting walks in auxiliary labeled graphs to capture a possible augmentation of the number of edge-disjoint $T$-paths. 
To design a search procedure for an augmenting walk, we introduce blossoms analogously to the matching algorithm 
of Edmonds~\cite{Edmonds65}, although the present problem is neither a special case nor a generalization of the matching problem. 
When the search procedure terminates without finding an augmenting walk, the algorithm provides a certificate for the optimality of the current edge-disjoint $T$-paths.
Thus the correctness argument of the algorithm serves as an alternative direct proof of Mader's theorem on edge-disjoint $T$-paths. 
In addition, the optimality certificate obtained by the algorithm coincides with the Edmonds--Gallai type decomposition for edge-disjoint $T$-paths introduced by 
Seb\H{o} and Szeg\H{o} \cite{SS04}. 
The algorithm runs in $O(|E|^2)$ time. This is definitely faster than the above mentioned algorithms.

A preliminary version of this paper presented an $O(|V|\cdot |E|^2)$ algorithm for finding maximum edge-disjoint $T$-paths \cite{IwataYokoi20}. 
In that version, 
however, performing an augmentation step was quite complicated. In fact, it took $O(|V|\cdot |E|)$ time for an augmentation, and the search procedure also took $O(|V|\cdot |E|)$ time, 
which led to the 
$O(|V|\cdot|E|^2)$ running time bound in total.  
Subsequently, Hummel~\cite{Hummel20} improved the algorithm to run in $O((|V|^2+|E|)\cdot |E|)$ time.

In the present paper, we focus on short augmenting walks, which mean augmenting walks without certain structures called shortcuts. It will be shown that an augmentation along a short augmenting walk takes $O(|E|)$ time and results in a larger number of edge-disjoint $T$-paths. We also devise a procedure to find a short augmenting walk in $O(|E|)$ time. Consequently, one can obtain 
$k$ edge-disjoint $T$-paths in $O(k\,|E|)$ time. Since the number of edge-disjoint $T$-paths is at most $|E|$, the overall running time is $O(|E|^2)$.

A natural generalization of the present setting is to think of finding maximum edge-disjoint $T$-paths of 
minimum total cost, where the cost is defined to be the sum of the costs of the included edges. 
Karzanov~\cite{Karzanov97} gave a min-max theorem and described a combinatorial algorithm for this problem. 
The detailed proof of correctness was given in a technical report of more than 60 pages \cite{Karzanov93}.
Hirai and Pap~\cite{HiraiPap14} dealt with a weighted maximization of edge-disjoint $T$-paths, 
where the weight is given by a metric on the terminal set $T$. They clarified that this problem with edge costs 
can be solved in polynomial time if the weight is given by a tree metric and that it is NP-hard otherwise. 
Mader's edge-disjoint $T$-paths problem corresponds to the case with a tree metric that comes from a star. 
They adopted a novel polyhedral approach to prove a min-max theorem that extends Mader's theorem on 
edge-disjoint $T$-paths. Their algorithm, however, depends on the ellipsoid method. 
Our algorithm may serve as a prototype of possible combinatorial algorithms for this generalization. 

Another natural generalization is a capacitated version, which is called the integer free multiflow problem.
For the fractional relaxation of this problem, one can find an optimal solution, which is half-integral, 
in strongly polynomial time with the aid of an algorithm by Ibaraki, Karzanov, and Nagamochi~\cite{IKN98}. 
Rounding down this half-integral solution, we obtain an integral multiflow, which serves as an initial feasible solution. 
We can transform this feasible solution to an optimal solution by repeatedly applying our search and augmentation procedures developed for maximum edge-disjoint $T$-paths.
The number of augmentations 
is bounded by a polynomial in the size of the graph. Thus we obtain a strongly polynomial algorithm for the 
integer free multiflow problem. 

\smallskip
The rest of this paper is organized as follows.
Section~\ref{sec:Mader} describes the statement of Mader's theorem on
edge-disjoint $T$-paths. Before providing a precise description of our algorithm, we first explain motivations of our algorithm design in Section~\ref{sec:outline}. In Section~\ref{sec:augmentation}, we formally introduce the notion of short augmenting walks in auxiliary labeled graphs, and provide a procedure to increase the number of edge-disjoint $T$-paths
using a short augmenting walk.
Section~\ref{sec:search} presents a procedure to find a short augmenting walk. The detailed implementation and time complexity are discussed in Section~\ref{sec:implementation}.
Sections~\ref{sec:EG} is devoted to the connection to the Edmonds--Gallai type decomposition. Finally, in Section~\ref{sec:IFMF},  we extend our algorithm to solve the maximum integer free multiflow problem.

\section{Mader's Theorem}\label{sec:Mader}
For a multigraph $G=(V,E)$ and a set $T\subseteq V$ of terminals, a collection $\X=\{X_s\}_{s\in T}$ of mutually disjoint sets $X_s\subseteq V$  is called a $T$-subpartition 
if $X_s\cap T=\{s\}$ holds for each $s\in T$. 
For any $X\subseteq V$, we denote by $\delta(X)$ the set of edges 
in $E$ between $X$ and $V\setminus X$. We also denote $d(X)\coloneqq |\delta(X)|$.

Since each $T$-path between $s,t\in T$ contains at least one edge in 
$\delta(X_s)$ and at least one edge in $\delta(X_t)$, the number of 
edge-disjoint $T$-paths is at most $\frac{1}{2}\sum_{s\in T} d(X_s)$. 
This is not a tight bound. A more detailed analysis, however, 
leads to a tighter upper bound as follows.

For a $T$-subpartition $\X$, let $G\setminus \X$ denote 
the graph obtained from $G$ by deleting all the vertices 
in $\bigcup_{s\in T} X_s$ and incident edges. 
A connected component of $G\setminus\X$ is said to be {\em odd} 
if its vertex set $K$ has odd $d(K)$ and {\em even} otherwise. Then $\odd(G\setminus \X)$ 
denotes the number of odd components in $G\setminus\X$. 

\begin{lemma}
\label{lem:Mader}
For a multigraph $G=(V,E)$ and a set $T\subseteq V$, the number of edge-disjoint $T$-paths is at most 
$$\kappa(\X)\coloneqq \frac{1}{2}\left[\sum_{s\in T}d(X_s)-\odd(G\setminus\X)\right]$$
for any $T$-subpartition $\X$.  
\end{lemma}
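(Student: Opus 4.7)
The plan is to combine the naive boundary-crossing bound with a parity correction coming from each odd component. Fix a maximum collection $P_1,\dots,P_k$ of edge-disjoint $T$-paths, and for each $s\in T$ let $m_s$ denote the number of edges of $\delta(X_s)$ used by some $P_i$. By edge-disjointness, $m_s$ equals the total number of incidences $\sum_i |E(P_i)\cap \delta(X_s)|$, and of course $m_s\le d(X_s)$.

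First I would establish the weak bound $2k\le \sum_{s\in T} m_s$ via a parity argument on the subpartition. If $P_i$ has endpoints $s,t\in T$, then $P_i$ starts in $X_s$ and ends outside $X_s$ (since $X_s\cap T=\{s\}$), so $|E(P_i)\cap \delta(X_s)|$ is odd, hence at least $1$; similarly for $X_t$. For every other $r$, the count $|E(P_i)\cap \delta(X_r)|$ is even (possibly zero). Consequently $\sum_{s\in T}|E(P_i)\cap\delta(X_s)|\ge 2$, and summing over $i$ yields $\sum_s m_s \ge 2k$.

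Next I would extract one unused boundary edge from each odd component. Fix an odd component $K$ of $G\setminus \X$. Since $K\cap T=\emptyset$, every $P_i$ starts and ends outside $K$, so $|E(P_i)\cap\delta(K)|$ is even. Thus the total number of edges of $\delta(K)$ used by the paths is even, while $d(K)$ is odd, so at least one edge $e_K\in \delta(K)$ is unused by all the $P_i$'s. Because the vertices of $G\setminus\X$ lie outside $\bigcup_{s}X_s$ and each $e_K$ has one endpoint in $K$ and the other in $V\setminus K$, that other endpoint must lie in some $X_{s(K)}$; hence $e_K\in\delta(X_{s(K)})$. Moreover, for two distinct components $K,K'$ of $G\setminus\X$, any edge in $\delta(K)\cap\delta(K')$ would connect $K$ and $K'$ inside $V\setminus\bigcup_s X_s$, contradicting that they are different components of $G\setminus\X$; so $e_K\neq e_{K'}$. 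This gives $\odd(G\setminus\X)$ distinct unused edges inside $\bigcup_s \delta(X_s)$, i.e.
$$\sum_{s\in T}(d(X_s)-m_s)\ \ge\ \odd(G\setminus \X).$$

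Combining with $2k\le \sum_s m_s$ yields $2k\le \sum_s d(X_s)-\odd(G\setminus\X)$, which is the claim. The only genuinely delicate point is the step verifying that the $e_K$'s are distinct and all lie in $\bigcup_s\delta(X_s)$; once the placement of these edges is correctly identified, the rest is bookkeeping of parities.
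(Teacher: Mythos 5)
Your proof is correct; every step checks out, including the two points that actually need care (the parity of $|E(P_i)\cap\delta(X_r)|$ according to whether $r$ is an endpoint of $P_i$, and the fact that the unused edges $e_K$ are pairwise distinct and each lies in exactly one $\delta(X_{s(K)})$). The route is recognizably the same min--max bookkeeping, but organized differently from the paper. The paper counts \emph{paths}: each $T$-path either uses an edge running directly between two distinct parts of $\X$ or passes through some component $K$, consuming two edges of $\delta(K)$; this gives $k\leq d(\X)+\sum_{K}\lfloor d(K)/2\rfloor$, and the identity $d(\X)=\frac{1}{2}\bigl[\sum_s d(X_s)-\sum_K d(K)\bigr]$ converts the floors into the $-\odd(G\setminus\X)$ term. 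You instead count \emph{edge incidences}: each path crosses each $\delta(X_s)$ an odd number of times at its two endpoints and an even number of times elsewhere, giving $2k\leq\sum_s m_s$, and then you harvest one explicitly unused boundary edge per odd component to account for the correction term. The two arguments are essentially dual bookkeepings of the same quantities; yours has the small advantage of exhibiting a concrete unused edge in $\delta(X_{s(K)})$ for each odd component (a certificate of the slack), at the cost of being a bit longer than the paper's three-line computation with $\lfloor d(K)/2\rfloor$. Either version is a complete and valid proof of the lemma.
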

\begin{proof}
Suppose that there are $k$ edge-disjoint $T$-paths in $G$. 
Let $\K$ be the collection of vertex sets of connected components of 
$G\setminus\X$. A $T$-path contains an edge between two distinct 
components of $\X$ or passes through a member $K$ of $\K$. In the latter 
case, the $T$-path must contain two edges in $\delta(K)$. Therefore, 
we have the inequality
$k\leq d(\X)+\sum_{K\in\K}\left\lfloor\frac{d(K)}{2}\right\rfloor,$
where $d(\X)$ is the number of edges between distinct components of $\X$. 
Since 
$d(\X)=\frac{1}{2}\left[\sum_{s\in T}d(X_s)-\sum_{K\in\K}d(K)\right],$
this implies  
$k\leq\frac{1}{2}\left[\sum_{s\in T}d(X_s)-\sum_{K\in\K}\left(d(K)
-2\left\lfloor\frac{d(K)}{2}\right\rfloor\right)\right],$
and the right-hand side equals $\kappa(\X)$. 
\end{proof}

Mader's edge-disjoint $T$-paths theorem asserts that this upper bound is tight. 

\begin{theorem}[Mader~\cite{MaderA}]
\label{th:Mader}
The maximum number of edge-disjoint $T$-paths equals  
the minimum of $\kappa(\X)$ among all the $T$-subpartitions $\X$. 
\end{theorem}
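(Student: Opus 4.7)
The plan is to combine Lemma~\ref{lem:Mader}, which already gives $\max \le \min_\X \kappa(\X)$, with a constructive argument that produces, for any maximum collection $\P$ of edge-disjoint $T$-paths, a specific $T$-subpartition $\X^\star$ with $\kappa(\X^\star)=|\P|$. Following the paper's promise in the introduction, I would obtain $\X^\star$ as a byproduct of a failed search for an augmenting walk; this way the correctness analysis of the algorithm developed in Sections~\ref{sec:search}--\ref{sec:correctness} doubles as a direct proof of Theorem~\ref{th:Mader}.

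First I would fix a maximum family $\P$ and build the auxiliary labeled graph that encodes how $\P$ traverses $G$. Starting from every unsaturated terminal, I would run the Edmonds-style search procedure of Section~\ref{sec:search}, growing alternating labels from terminals and shrinking blossoms whenever two labeled branches meet. By maximality of $\P$, no augmenting walk can be found, since otherwise the augmentation procedure of Section~\ref{sec:augmentation} would enlarge $\P$; hence the search terminates in a ``stable'' configuration with labels and blossoms frozen in place.

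Next I would read $\X^\star$ off the final labels: for each $s\in T$, let $X^\star_s$ be the set of vertices whose final label is rooted at $s$ (after unfolding all nested blossoms). The construction of the search guarantees that the $X^\star_s$ are pairwise disjoint and that $X^\star_s\cap T=\{s\}$, so $\X^\star$ is a valid $T$-subpartition. The connected components of $G\setminus\X^\star$ are precisely the regions left unlabeled by the search, and I would show that a component has odd boundary degree exactly when a single parity obstruction blocked the search on it.

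The hardest step is the counting identity
\[
 2|\P| \;=\; \sum_{s\in T} d(X^\star_s) - \odd(G\setminus \X^\star),
\]
which must be established edge by edge across $\bigcup_s\delta(X^\star_s)$. The obstacle is that blossoms may be deeply nested and the shrink/absorb operations scramble the original graph, so tracing how each path in $\P$ contributes exactly two units to $\sum_s d(X^\star_s)$ while the surplus of non-$\P$ edges is matched off against odd components requires unfolding blossoms layer by layer. This blossom-level bookkeeping, together with the augmentation guarantee, is the core technical content of Sections~\ref{sec:search}--\ref{sec:correctness}, and once it is in hand Mader's theorem follows immediately.
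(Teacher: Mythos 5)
Your overall strategy is exactly the paper's: Lemma~\ref{lem:Mader} supplies the inequality $\max\le\min_\X\kappa(\X)$, and the matching certificate is read off from a terminated, unsuccessful run of the search procedure applied to a maximum family $\P$ (this is precisely Proposition~\ref{prop:tightness}, which together with Propositions~\ref{prop:augment} and \ref{prop:shrinking} constitutes the paper's proof of Theorem~\ref{th:Mader}). One step of your plan, however, leans the wrong way and would break the counting identity if implemented literally: you define $X^\star_s$ as the vertices ``whose final label is rooted at $s$,'' i.e., apparently by the root of the search tree containing the vertex. The paper instead puts $v$ into $X_s$ when its projection $v'$ satisfies $\mrk_F(v')=s$, where $\mrk_F(v')$ is the \emph{last terminal symbol} read along the forest path $P_F(v')$ --- and this generally differs from the root. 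A tree rooted at $s$ can grow along a labeled edge $e$ incident to $s$ with $\sigma_E(e,s)\ne s$; the new endpoint then has mark equal to the other terminal of the $T$-path containing $e$ and must be assigned to that terminal's class. If it were assigned to $X^\star_s$ instead, the edge $e$ would cross $\delta(X^\star_s)$ with $\sigma_E(e,s)\ne s$, the $T$-path through $e$ would not be counted exactly twice in $\sum_{s}d(X^\star_s)$, and the identity $2|\P|=\sum_{s}d(X^\star_s)-\odd(G\setminus\X^\star)$ would fail. Relatedly, the components of $G\setminus\X^\star$ are not ``the regions left unlabeled by the search'': they comprise both unreached vertices and vertices that \emph{were} reached but carry mark $*$ (those absorbed into blossoms), and it is exactly the latter kind of component that contributes the single free stalk edge producing the $\odd(G\setminus\X^\star)$ correction, while unreached components contribute no free boundary edge. (Also, there are no ``unsaturated terminals'' here --- every terminal is a root of the forest.) With $X^\star_s$ defined by marks rather than tree roots, the remainder of your outline coincides with the paper's argument.
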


\section{Motivation of the Algorithm Design} \label{sec:outline}
This section describes high level ideas in the design of our algorithm 
for finding the maximum number of edge-disjoint $T$-paths. 

The algorithm iteratively increases the number of edge-disjoint $T$-paths. 
Given a family $\P$ of edge-disjoint $T$-paths, let $Q$ be a walk such that both 
ends are in $T$ and the inner vertices are in $V\setminus T$. Consider the 
symmetric difference $E(\P)\triangle E(Q)$ between $E(\P)$ and $E(Q)$, 
where $E(\P)$ denotes the set of edges in $T$-paths in $\P$ and $E(Q)$ 
the set of edges that appear in $Q$ an odd number of times. 
The resulting subgraph $H^*=(V,E(\P)\triangle E(Q))$ is inner Eulerian, i.e., 
each vertex in $V\setminus T$ has even degree. Then $H^*$ can be decomposed into 
an edge-disjoint family $\P'$ of $T$-paths and cycles. If the first and last 
edges of $Q$ are not in $E(\P)$, then the sum of the degrees of the terminals 
in $E(\P)\triangle E(Q)$ is two more than that in $E(\P)$. If in addition 
$\P'$ is free from $T$-cycles, i.e., cycles that contain exactly one terminal, 
then $\P'$ includes a family of $|\P|+1$ edge-disjoint $T$-paths. Thus a crucial 
point of the algorithm is to find an appropriate walk $Q$ so that $E(\P)\triangle E(Q)$ 
can be decomposed into $\P'$ that is free from $T$-cycles. 
 
For the sake of simplicity, assume tentatively that $Q$ is a path 
and 
$Q$ shares at most one segment (i.e., maximal subpath) with each $T$-path in $\P$. Let $P_1,\ldots,P_{\ell-1}\in\P$ 
be the $T$-paths that share a segment in $Q$ in this order. We denote the segment of $P_i$ 
shared with $Q$ by $P_i(u_i,v_i)$, where $u_i$ and $v_i$ are the first and last vertices 
of the segment, respectively. We also denote by $s_i$ and $t_i$ the ends of $P_i$ such that 
$s_i,u_i,v_i,t_i$ appear in this order along $P_i$. In addition, the first and last vertices 
of $Q$ are denoted by $t_0$ and $s_\ell$, respectively. See Figure~\ref{fig:motivation1}. Then it is easy to 
observe that $E(\P)\triangle E(Q)$ includes a $T$-cycle if $s_i=t_{i-1}$ holds for some $i=1,\ldots,\ell$.

\begin{figure}[b]
	\begin{center}		\includegraphics[width=0.46\hsize]{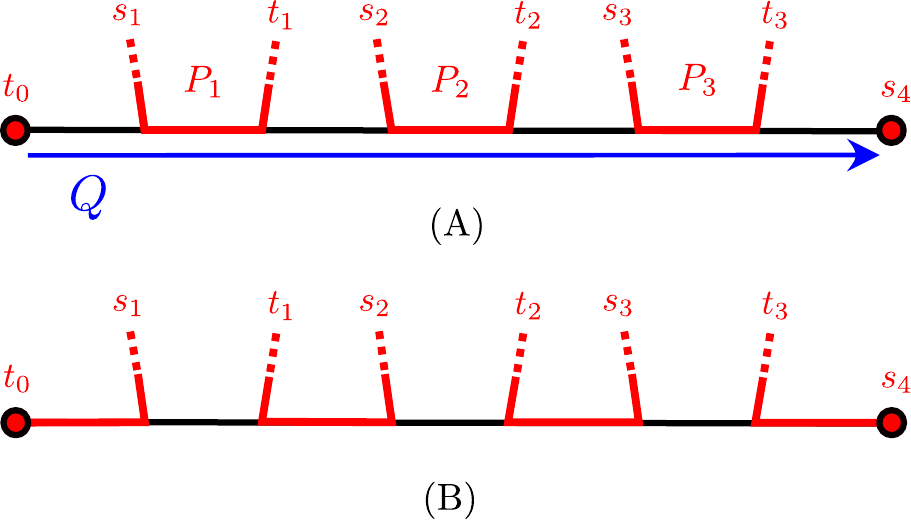}
	\end{center}
  	\vspace{-2mm}
	\caption{
		\small\baselineskip=11pt
		(A) A walk $Q$ (blue line) and the $T$-paths $P_1$, $P_2$, $P_3$ (red lines) each of which shares a subpath with $Q$. Black lines represent edges not used in any $T$-path. The walk $Q$ starts at $t_0\in T$ and ends at $s_4\in T$. Each $P_i~(i=1,2,3)$ has end-terminals $s_i$ and $t_i$. (B) The symmetric difference of $Q$ and $P_1, P_2, P_3$.
	}
	\label{fig:motivation1}
\end{figure}

In order to find an appropriate walk avoiding such 
an occurrence of a $T$-cycle,
we 
introduce an auxiliary labeled graph. Each edge in a $T$-path $P\in\P$ is labeled by $st$, where $s$ and $t$ are the ends of $P$. More precisely, for an edge $e$ with $\partial e=\{u,v\}$, 
the symbols $s$ and $t$ are assigned to the $u$-side and $v$-side of $e$, respectively, where 
$s,u,v,t$ are supposed to appear in this order along $P$. The edges in $E\setminus E(\P)$ 
are not assigned any labels. In addition, each terminal $t\in T$ is assigned $t$ as its label. 
Then the sequence of symbols that appear along a walk $Q$ is required to contain no consecutive 
appearance of a symbol. 
In the above mentioned case, this requirement on $Q$ guarantees that $s_i\neq t_{i-1}$ holds for any $i=1,\ldots,\ell$.

This requirement is not sufficient to capture all walks that lead to successful augmentations. 
A simple example depicted in Figure~\ref{fig:motivation2}\,(A) consists of three terminals $r,s,t$ and 
an inner vertex $u$. The terminals $s$ and $t$ are connected by a $T$-path through $u$. Given 
this $T$-path as the only member of $\P$, we do not find a walk without consecutive appearance of a symbol while the original graph obviously admits two edge-disjoint $T$-paths. In order to deal with such 
a situation, we attach a selfloop to each inner vertex of a $T$-path in $P\in\P$. The selfloop is assigned a label $st$, where $s$ and $t$ are the ends of $P$. (More precisely, we attach two selfloops with labels $st$ and $ts$.) Using such a selfloop in the example in Figure~\ref{fig:motivation2}\,(B), one can find a walk $Q$ from $r$ to $r$ through $u$ without consecutive 
appearance of a symbol, and the symmetric difference $E(\P)\triangle E(Q)$ can be decomposed into 
two edge-disjoint $T$-paths. 
\begin{figure}[h]
	\begin{center}
		\includegraphics[width=0.8\hsize]{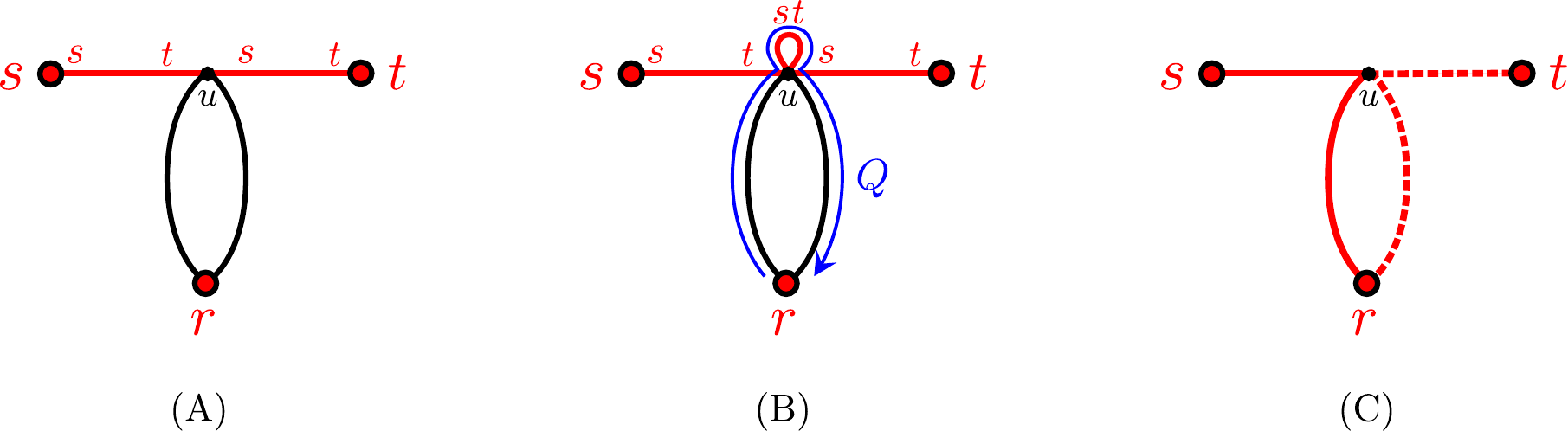}
	\end{center}
	\vspace{-4mm}
	\caption{
		\small\baselineskip=11pt
		(A) A $T$-path $P$ is represented by a red line. (B) An auxiliary graph with a selfloop at $u$. The blue line represents a walk $Q$ on which symbols $rstr$ appear in this order. (C) Two $T$-paths obtained by taking symmetric difference of $P$ and $Q$. The use of the selfloop at $u$ causes a switch of the transition at $u$. 
	}
	\label{fig:motivation2}
\end{figure}

A formal definition of the auxiliary labeled graph thus motivated will be given in Section~\ref{sec:augmentation}. 
A walk $Q$ that satisfies the above conditions is defined to be an augmenting walk there. Then we expect that the symmetric difference $E(\P)\triangle E(Q)$ necessarily leads to a larger number of 
edge-disjoint $T$-paths even without the above tentative assumption. It turns out that this 
is not always the case. Simple counterexamples will be given in Section~\ref{sec:augmentation}. 
In those examples, however, there are other augmenting walks with smaller number of $\P$-segments (i.e., maximal subwalks formed by subpaths of the $T$-paths in $\P$). 
Thus one can expect that an augmenting walk with the smallest number of $\P$-segments leads to a 
successful augmentation. This is indeed the case as will be proved in Theorem~\ref{thm:SAW1} in 
a slightly more general form. We identify a structure called shortcut that possibly prevents 
a successful augmentation. An augmenting walk with a shortcut can be transformed into another 
augmenting walk with smaller number of $\P$-segments. An augmenting walk will be called short if 
it does not admit a shortcut. The statement of Theorem~\ref{thm:SAW1} says 
that a short augmenting walk leads to a successful augmentation. 

We will then focus on a search procedure for a short augmenting walk. This is designed in Section~\ref{sec:search} 
analogously to the blossom algorithm for maximum matching \cite{Edmonds65}.  
With the aid of the techniques developed for an efficient implementation of the matching algorithm, 
one can implement the search procedure to run in $O(|E|)$ time. The details will be described in 
Section~\ref{sec:implementation}. 

\paragraph{Comparison with the previous versions.}
In the previous version \cite{IwataYokoi20}, the authors introduced the auxiliary labeled graph and augmenting walks therein.
We described a search procedure for finding an augmenting walk. Then it was also shown that a family $\P$ of 
edge-disjoint $T$-paths and an augmenting walk $Q$ can be collectively transformed into other ones with 
smaller number of $\P$-segments. Repeating this procedure, one can finally obtain an augmenting walk that 
is free from $\P$-segments.
Then it is easy to obtain a larger number of edge-disjoint $T$-paths than before. 
The repeated applications of the transformation require $O(|V|\cdot|E|)$ time. The search procedure for
an augmenting walk also required $O(|V|\cdot |E|)$ time. Hummel~\cite{Hummel20} then improved both of these procedures 
to run in $O(|V|^2+|E|)$ time. Since the number of augmentation is $O(|E|)$, the total running time of these algorithms 
are $O(|V|\cdot |E|^2)$ and $O((|V|^2+|E|)|E|)$. 

In the present paper, we show that a short augmenting walk produces a larger number of edge-disjoint $T$-paths 
in $O(|E|)$ time. We also devise a search procedure to find a short augmenting walk and improve it to run in 
$O(|E|)$ time. Thus the running time per augmentation is $O(|E|)$. Since the number of augmentations is $O(|E|)$, 
the total running time bound is $O(|E|^2)$. 

\vspace{-2mm}
\paragraph{Connection to search for a directed path.}
The search for an augmenting walk in the labeled graph 
naturally generalizes basic graph search problems. In the special case with $|T|=\{s,t\}$, there are only two symbols in our auxiliary labeled graph. Finding 
an augmenting walk in that setting is nothing but searching for a directed walk in a mixed graph, 
where all the edges in the current $s$-$t$ paths 
are directed from $t$ to $s$ and the other edges 
are undirected. 
The general setting of the present problem requires us to 
associate more information with each edge than direction.

\section{Augmentation}\label{sec:augmentation}
Given a collection $\P$ of edge-disjoint $T$-paths in a multigraph $G=(V,E)$ without selfloops,
we intend to characterize when $|\P|+1$ edge-disjoint $T$-paths exist in $G$.
We now introduce an auxiliary labeled graph $\G(\P)=((V,E\cup L), \sigma_V,\sigma_E,\sigma_L)$,
by adding certain selfloops to $G$ and assigning symbols to edges and vertices as labels.
For each $T$-path $P\in \P$, we attach two selfloops at each internal vertex of $P$.
A vertex has $2k$ selfloops if it belongs to $k$ paths in $\P$.
We denote by $L$ the set of all the attached 
selfloops.
For each $e\in E\cup L$, we denote by $\partial e$ the set of its end-vertices.
We say that an edge $e\in E\cup L$ {\em comes from} $P\in \P$ if it is used in $P$ or is a selfloop defined for $P$.

We denote by $E(\P)$ the set of edges used in $\P$.
If an edge $e\in E(\P)$ with $\partial e=\{u,v\}$ belongs to a $T$-path $P$ from $s$ to $t$,
and $s,u,e,v, t$ appear in this order along $P$,
we assign symbols $\sigma_E(e,u)\coloneqq s$ and $\sigma_E(e,v)\coloneqq t$.
No symbols are assigned to edges in $E\setminus E(\P)$.
An edge $e\in E$ is called {\em labeled} or {\em free} depending on whether $e\in E(\P)$ or not.
The two selfloops $e,\bar{e}\in L$ at a vertex $v$ coming from $P\in \P$ are assigned ordered pairs of symbols $\sigma_L(e)\coloneqq st$ and $\sigma_L(\bar{e})\coloneqq ts$, where $s$ and $t$ are terminals of $P$.
Any terminal vertex $t\in T$ is assigned $\sigma_V(t)\coloneqq t$,
and other vertices $v\in V\setminus T$ have no symbols.

A walk in $\G(\P)=((V,E\cup L), \sigma_V,\sigma_E,\sigma_L)$ is a sequence $Q=(v_0,e_1,v_1,\ldots,e_\ell,v_\ell)$ of
vertices $v_i\in V$ and edges $e_i\in E\cup L$ such that $\partial e_i=\{v_{i-1},v_i\}$ for $i=1,\ldots,\ell$.
For a walk $Q$, we associate a string $\gamma(Q)$ defined to be the sequence of symbols that appear as labels in $Q$. 
If $e_i$ is a labeled edge, then $e_i$ is assigned $\sigma_E(e_i,v_{i-1})\sigma_E(e_i,v_{i})$.
If $e_i\in L$, then $e_i$ is assigned $\sigma_L(e_i)$. If $v_i=t\in T$, then $v_i$ is assigned $\sigma_V(v_i)=t$.
Free edges and non-terminal vertices are assigned no symbols.

We call $Q$ an {\em augmenting walk} in $\G(\P)$ if $\ell>0$ and it satisfies the following
three conditions.
\begin{itemize}
\setlength{\itemsep}{0mm}
\setlength{\leftskip}{2mm}
\item[(A1)] 
The end-vertices $v_0$ and $v_\ell$ are in $T$, and intermediate vertices $v_1,\ldots,v_{\ell-1}$ are not in $T$.
\item[(A2)] The string $\gamma(Q)$ has no consecutive appearance of a symbol.
\item[(A3)] Each free edge appears at most once in $Q$.
\end{itemize}

In Figure~\ref{fig:augmentation1}, we provide an example of an augmenting walk, which suggests the significance of selfloops and double use of labeled edges.
In this case, we can augment the number of edge-disjoint $T$-paths by taking the symmetric difference 
between $E(\P)$ and $E(Q)$, where $E(Q)$ denotes the set of edges that appear in $Q$ odd number of times. 
This simple operation does not always work.
Figure~\ref{fig:augmentation2} gives an example for which the symmetric difference
gives an edge set that cannot be decomposed into edge-disjoint $T$-paths.

\begin{figure}[ht]
	\begin{center}
		\includegraphics[width=0.95\hsize]{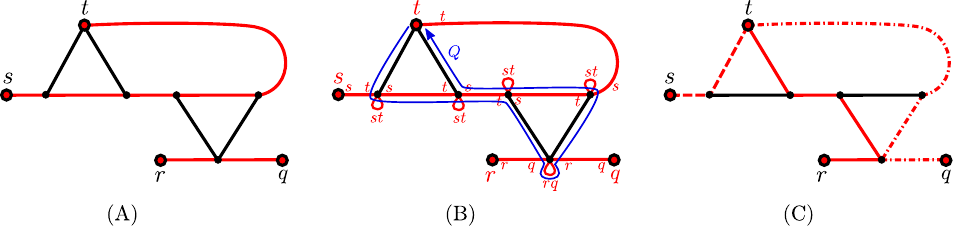}
	\end{center}
	\vspace{-4mm}
	\caption{
		\small\baselineskip=11pt
		(A) A graph $G=(V,E)$ with terminals $T=\{s,t,r,q\}$.  Red and black edges represent labeled and free edges, respectively, i.e.,
		red edges form $T$-paths $\P$.
		(B) The auxiliary labeled graph $\G(\P)$ and an augmenting walk $Q$ with $\gamma(Q)=tststrqtstst$.
		For simplicity, one of two selfloops at each vertex is omitted in the figure.
		(C) Three edge-disjoint $T$-paths in $G$.
	}
	\label{fig:augmentation1}
\end{figure}
\begin{figure}[ht]
	\begin{center}
		\includegraphics[width=0.95\hsize]{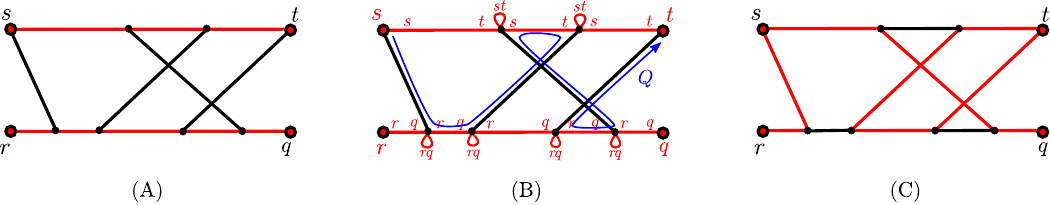}
	\end{center}
	\vspace{-4mm}
	\caption{
		\small\baselineskip=11pt
		(A) A graph $G=(V,E)$ with terminals $T=\{s,t,r,q\}$, and $T$-paths $\P$.
		(B)  The auxiliary labeled graph $\G(\P)$ and an augmenting walk $Q$, where $\gamma(Q)=srqtsqrt$.
		For simplicity, one of two selfloops at each vertex is omitted in the figure.
		(C) Red edges represent the symmetric difference of $Q$ and the $T$-paths $\P$.
		This can not be decomposed into three edge-disjoint $T$-paths
	}
	\label{fig:augmentation2}
\end{figure}

We will show that, if an augmenting walk satisfies a particular condition, called {\em shortness}, one can successfully augment edge-disjoint $T$-paths by the symmetric difference operation.

\subsection{Short Augmenting Walks}
We introduce some notations to define short augmenting walks.

For distinct vertices $u,v\in V\setminus T$ on a $T$-path $P\in \P$,
we denote by $P(u,v)$ the subpath of $P$ from $u$ to $v$.
We say that $P(u,v)$ is {\em $st$-directed} if $\gamma(P(u,v))$ is a repetition of $st$.
Also, $P(u,u)$ denotes one of the two selfloops at $u$ coming from $P$,
whose direction is specified when we use this notation.

For a walk $Q=(v_0,e_1,v_1,\ldots,e_\ell,v_\ell)$
and indices $a,b$ with $0\leq a\leq b\leq \ell$,
we denote by $Q[a,b]$ its subwalk $(v_a,e_{a+1},\ldots,e_b,v_b)$.
A subwalk $Q[a,b]$ with $a<b$ is called a {\em $P$-segment} of $Q$ if either
(i) $Q[a,b]$ is a maximal subwalk forming a subpath of $P$ or (ii) $Q[a,b]$ consists of a single selfloop coming from $P$ (hence $a+1=b$).
We call $Q[a,b]$ a {\em $\P$-segment} if it is a $P$-segment for some $P\in \P$.

Recall that Figure~\ref{fig:augmentation2} gives an example of an augmenting walk $Q$ for which the symmetric difference operation does not work.  However, we can also observe that another augmenting walk, say $Q'$, shown in Figure~\ref{fig:augmentation3} yields a successful augmentation. This $Q'$ is obtained from $Q$ by connecting two $\P$-segments coming from the same $T$-path. We now define this kind of operation formally.

\begin{figure}[t]
	\begin{center}
		\includegraphics[width=0.95\hsize]{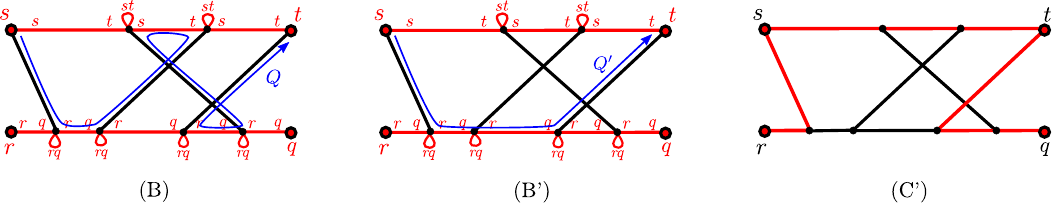}
	\end{center}
	\vspace{-4mm}
	\caption{
		\small\baselineskip=11pt
		(B)  The same figure as that in Figure~\ref{fig:augmentation2}. 
            (B') The walk $Q'$ obtained from $Q$ in (B) by bridging the two $\P$-segments coming from the bottom $T$-path. It satisfies (A2) as $\gamma(Q')=srqrqt$.
		(C) Red edges represent the symmetric difference of $Q'$ and the $T$-paths $\P$.
		This can be decomposed into three $T$-paths.
	}
	\label{fig:augmentation3}
\end{figure}

For an augmenting walk $Q=(v_0,e_1,v_1,\ldots,e_\ell,v_\ell)$, consider distinct $\P$-segments $S=Q[a,b]$ and $S'=Q[c,d]$ that come from the same $T$-path, say $P\in \P$, and appear in $Q$ in this order. 
The {\em bridging operation} applied to the pair $(S, S')$ means defining a new walk by $Q'\coloneqq Q[0,a]+P(v_a, v_d)+Q[d,\ell]$.  Note that the subpaths $S$ and $S'$ of $P$ may or may not be included in $P(v_a, v_d)$ and that the resultant walk $Q'$ is not necessarily an augmenting walk as it may violate (A2). When $Q'$ satisfies (A2), we call the pair $(S, S')$ a {\em simple shortcut}.

By requiring the nonexistece of simple shortcuts, we can exclude the augmenting walk given in Figure~\ref{fig:augmentation2}. However, it turns out that prohibiting simple shortcuts is insufficient to guarantee a successful augmentation. In Figure~\ref{fig:augmentation4}, the depicted augmenting walk admits no simple shortcuts, but the symmetric difference operation cannot augment the number of $T$-paths. 
However, if we apply the bridging operation to multiple pairs of $\P$-segments simultaneously, we obtain an augmenting walk for which symmetric difference operation works. 

\begin{figure}[h]
\begin{center}
\vspace{5mm}
\includegraphics[width=0.98\hsize]{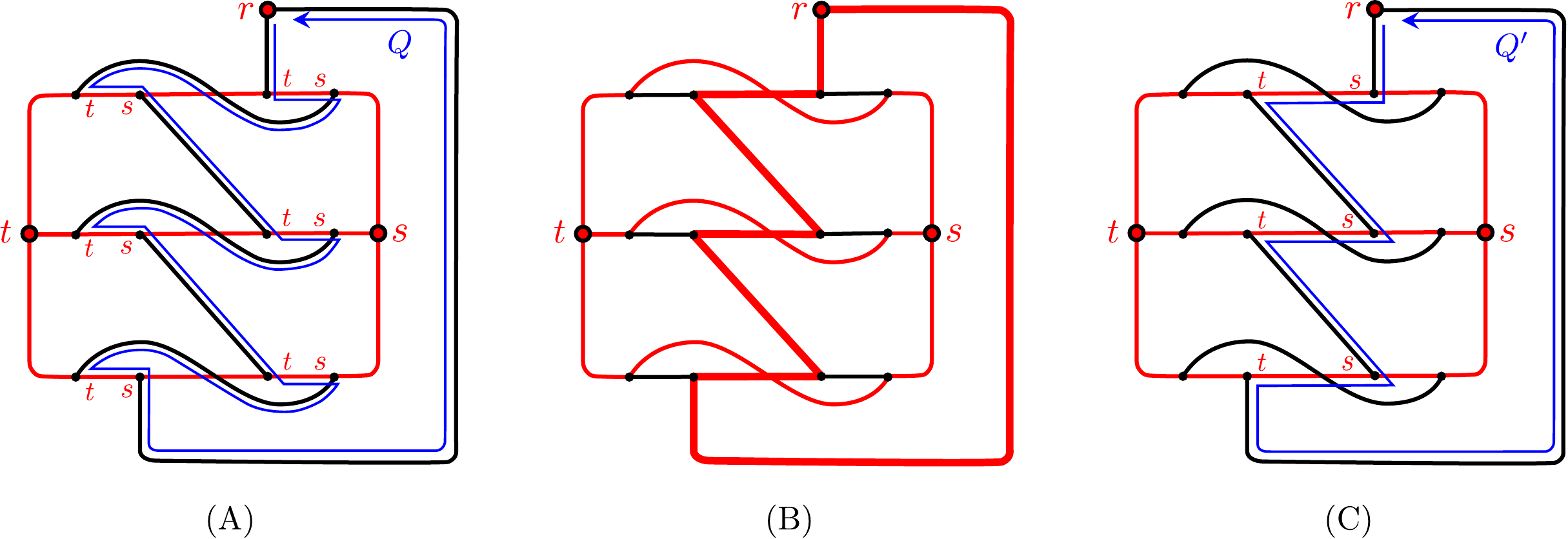}
\end{center}
\vspace{-2mm}
\caption{
\small\baselineskip=11pt
(A) The auxiliary labeled graph $\G(\P)$ and an augmenting walk $Q$, where selfloops in $\G(\P)$ are omitted in the figure for simplicity. Bridging any pair of $\P$-segments coming from the same $T$-path causes a consecutive appearance of a symbol.
(B) Red edges represent the symmetric difference of $Q$ and $\P$.
It consists of a $T$-cycle and three $T$-paths.
(C) By bridging three pairs of $\P$-segments simultaneously, we obtain an augmenting walk $Q'$.
The symmetric difference of $Q'$ and $\P$ is decomposed into four $T$-paths
}
	\label{fig:augmentation4}
\end{figure}

We then introduce a more general notion of shortcuts.
A {\em shortcut} is a collection $\{(S_i, S'_i)\}_{i=1}^k$ of pairs of $\P$-segments with $k\geq 1$ satisfying the following conditions.

\begin{itemize}
\item[(S1)] For each $i=1,2,\dots, k$, the $\P$-segments $S_i$ and $S'_i$ are coming from the same $T$-path.
\item[(S2)] The $\P$-segments $S_1, S'_1, S_2, S'_2, \dots, S_k, S'_k$ are all distinct and appear in $Q$ in this order.
\item[(S3)] The walk obtained by bridging all the pairs $\{(S_i, S'_i)\}_{i=1}^k$ satisfies (A2).
\end{itemize}
Note that a simple shortcut defined before is a shortcut with $k=1$.

In addition to the nonexistence of shortcuts, we require one more condition to define shortness, which is adopted just for the simplicity of the analysis in the subsequent parts. We say that a walk $Q$ {\em transfers} at a vertex $v$ if $v$ appears in $Q$ as the first vertex or the last vertex of some $\P$-segment. It is possible that $v$ is the last vertex of some $\P$-segment and the first vertex of another $\P$-segment (when these two $\P$-segments are consecutive), and this is also regarded as one transfer. We call an augmenting walk {\em short} if it admits no shortcut and it transfers at each vertex at most twice.
This condition is not restrictive in the sense of existence as shown in the following proposition.

\begin{proposition}
If $\G(\P)$ admits an augmenting walk, then it admits a short augmenting walk.
\end{proposition}
\begin{proof}
Suppose that  $\G(\P)$ admits an augmenting walk and let $Q=(v_0,e_1,v_1,\ldots,e_\ell,v_\ell)$ be the one that minimizes the number of $\P$-segments and, subject to that, minimizes the length $\ell$. We show that $Q$ is short.
Suppose to the contrary that $Q$ has a shortcut $\S$. Then, by bridging all the pairs in $\S$, we obtain an augmenting walk with smaller number of $\P$-segments, which contradicts the choice of $Q$.
Next, suppose to the contrary that $Q$ transfers more than twice at some vertex.
Then, $v_a=v_b=v_c$ for some indices $a<b<c$.  
Consider three walks $Q[0,a]+Q[b,\ell]$, $Q[0,a]+Q[c,\ell]$, and $Q[0,b]+Q[c,\ell]$. By the minimality of $\ell$, all of them violate (A2). From the first two, we obtain that the last symbol of $\gamma(Q[0,a])$ coincides with the first symbols of $\gamma(Q[b,\ell])$ and $\gamma(Q[c,\ell])$. From the third, we obtain that the last symbol of $\gamma(Q[0,b])$ coincides with the first symbol of $\gamma(Q[c,\ell])$. These together imply that the last symbol of $\gamma(Q[0,b])$  coincides with the first symbol of $\gamma(Q[b,\ell])$, which contradicts the fact that $Q=Q[0,b]+Q[b,\ell]$ satisfies (A2).
\end{proof}

The following is a basic property of a short augmenting walk.
\begin{lemma}
\label{lem:property_of_shortness}
For a short augmenting walk $Q$ and a $T$-path $P\in\P$, no two $P$-segments in the same direction share a vertex. 
Consequently, $Q$ uses each labeled edge at most once in each direction.
\end{lemma}
\begin{proof}
Let $Q[a,b]$ and $Q[c,d]$ with $a<c$ be $P$-segments of $Q=(v_0,e_1,v_1,\ldots,e_\ell,v_\ell)$ and suppose that they are both $st$-directed. Suppose to the contrary that they share a vertex $v^*$.
Then $s,v_a,v^*,v_d,t$ appear in this order on $P$ (possibly, $v_a=v^*=v_d$), and hence $P(v_a,v_d)$ is $st$-directed (where we let $P(v_a, v_d)$ be the $st$-directed selfloop at $v$ if $v_a=v_d$). 
Then, the pair $(Q[a,b], Q[c,d])$ forms a simple shortcut of $Q$, a contradiction.
\end{proof}

\subsection{Transition Systems}
\label{sec:ts}
In the original graph $G$, a vertex in $V\setminus T$ may be incident to more than two edges in $E(\P)$,
and hence the edge set $E(\P)$ is not sufficient to represent the collection $\P$ of edge-disjoint $T$-paths.
We will describe $\P$ and the symmetric difference operation for it
using transition systems.

A walk in $G$ is called a {\em $T$-trail} if it uses each edge at most once and
its first and last vertices are in $T$ while all internal vertices are in $V\setminus T$.
In particular, a $T$-trail is called a {\em $T$-circuit} if its first and the last vertices coincide.
If a $T$-trail is not a $T$-circuit, then it contains a $T$-path in the sense of edge set inclusion.

A {\em transition} $\T(v)$ at each vertex $v\in V\setminus T$ is a set of pairs of edges incident to $v$.
The collection $\T=\{\T(v)\}_{v\in V\setminus T}$ is called a {\em transition system}. In particular, it is called 
{\em consistent} if, for every vertex $v\in V\setminus T$ , the collection $\T(v)$ consists of disjoint pairs  and any edge $e$ appearing in $\T(v)$ is either incident to some terminal in $T$ 
or appears in $\T(u)$, where $\partial e=\{v,u\}\subseteq V\setminus T$.    

For a transition system $\T=\{\T(v)\}_{v\in V\setminus T}$, let $E(\T)$ denote the set of edges that appear in $\T(v)$ at some $v\in V\setminus T$. 
If $\T$ is consistent, we see that the subgraph $H=(V,E(\T))$ is {\em inner Eulerian}, 
i.e., every inner vertex $v\in V\setminus T$ has even degree. A consistent transition system $\T$ naturally provides 
a decomposition of $H=(V,E(\T))$ into edge-disjoint $T$-trails and inner circuits, where an {\em inner circuit} 
means a closed walk that is disjoint from $T$ and uses each edge at most once.

For a transition $\T(v)$ at $v\in V\setminus T$ and a pair $\{e,f\}$ of edges incident to $v\in V\setminus T$,
we define an operation $\T(v)\blacktriangle\{e,f\}$ by 
$$\T(v)\blacktriangle\{e,f\}\coloneqq \left\{\begin{array}{ll}
\T(v)+\{e,f\} & (e,f\in E\setminus E(\T(v))) \\
\T(v)-\{e,f\} & (\{e,f\}\in\T(v)) \\
\T(v)-\{e,e'\}+\{e',f\} & (\exists e'\in E: \{e,e'\}\in \T(v), ~f\in E\setminus E(\T(v))) \\
\T(v)-\{e,e'\}-\{f,f'\}+\{e',f'\} & (\exists e',f'\in E: \{e,e'\}, \{f,f'\}\in \T(v)),
\end{array}\right.$$
where $E(\T(v))$ is the union of pairs in $\T(v)$.
We denote by $\T\blacktriangle\{e,f\}$ the collection $\{\T'(u)\}_{u\in V\setminus T}$ such that
$\T'(v)=\T(v)\blacktriangle\{e,f\}$ and $\T'(u)= \T(u)$ for every $u\in V\setminus (T\cup \{v\})$.

The family $\P$ of edge-disjoint $T$-paths determines a transition system $\T_{\P}=\{\T_{\P}(v)\}_{v\in V\setminus T}$, 
i.e., each $\T_{\P}(v)$ consists of all pairs $\{e,e'\}$ such that $e$ and $e'$ are incident to $v$ and used in some $T$-path $P\in\P$ consecutively.

For an augmenting walk $Q=(v_0,e_1,v_1,e_2,\ldots,e_\ell,v_\ell)$ in the labeled graph $\G(\P)$, 
we define a transition system $\T_Q=\{\T_{Q}(v)\}_{v\in V\setminus T}$ as follows. If $e_i\in E$, set $\hat{e}_i\coloneqq e_i$. If $e_i$ is a self-loop at $v$ 
coming from a $T$-path $P$ with $\sigma_L(e_i)=st$, let $\hat{e}_i$ denote the edge in $P$ incident 
to $v$ with $\sigma_E(\hat{e}_i,v)=s$. Then for any vertex $v\in V\setminus T$, 
let $\T_{Q}(v)$ be a collection of pairs $\{\hat{e}_{i},\hat{e}_{i+1}\}$ with $v_i=v$.
Note that pairs in $\T_{Q}(v)$ are not necessarily mutually disjoint. 
They may even coincide (and hence $\T_{Q}(v)$ is a multi-set).
Recall that $E(Q)$ denotes the set of edges that appear in $Q$ odd number of times. In particular, if $Q$ is a short augmenting walk, $E(Q)$ coincides with the set of edges that appear in $Q$ exactly once.

We then define $\T_{\P\triangle Q}=\{\T_{\P\triangle Q}(v)\}_{v\in V\setminus T}$ by 
$\T_{\P\triangle Q}\coloneqq \T_{\P}\blacktriangle\{\hat{e}_1,\hat{e}_2\}\blacktriangle\{\hat{e}_2,\hat{e}_3\}\blacktriangle\cdots\blacktriangle\{\hat{e}_{\ell-1},\hat{e}_\ell\}$.
We see that $\T_{\P\triangle Q}$ is a consistent transition system with 
$E(\T_{\P\triangle Q})=E(\P)\triangle E(Q)$, which determines a collection
of edge-disjoint $T$-trails and inner circuits. This is called a {\em switching operation} of $\P$ by $Q$.
The resulting collection of edge-disjoint $T$-trails and inner circuits are denoted by $\P\triangle Q$. (In each of Figures~\ref{fig:motivation2}, \ref{fig:augmentation1}, and \ref{fig:augmentation2}, (C) depicts $\P\triangle Q$.)

In this way,  for each vertex $v\in V\setminus T$, the collection $\T_{\P\triangle Q}(v)$ is defined from $\T_{\P}(v)$ and $\T_{Q}(v)$. 
In the following proposition, we analyze $\T_{Q}(v)$ by using $\T_{\P}(v)$ and $\T_{\P\triangle Q}(v)$, when $Q$ is assumed to be a short augmenting walk.

\begin{lemma}
\label{lem:sawtrans} 
Let $Q$ be a short augmenting walk. For any vertex $v\in V\setminus T$ and any edges $e, f$ incident to $v$, suppose that $\{e,f\}\in \T_{\P\triangle Q}(v)$ holds. Then the following {\em (\rn{1})--(\rn{4})} hold. 
\begin{itemize}
\item[{\em (\rn{1})}] If both $e$ and $f$ are free edges, then either $\{e,f\}\in \T_Q(v)$ or there exists a pair $\{g,g'\}\in \T_{\P}(v)$ such that $\{e,g\}, \{f,g'\}\in \T_Q(v)$.
\item[{\em (\rn{2})}] If $\{e,f\}\in \T_\P(v)$, then $\T_Q(v)$ contains $\{e,f\}$ twice or neither $e$ nor $f$ appears in $\T_Q(v)$. 
\item[{\em (\rn{3})}] If $e$ is free, $f$ is labeled, and $\{f,f'\}\in \T_\P(v)$, then $\T_Q(v)$ contains $\{e,f\}$ or $\{e,f'\}$ or there exists a pair $\{g,g'\}\in \T_{\P}(v)$ such that $\{e,g\}, \{\tilde{f},g'\}\in   \T_Q(v)$ for some $\tilde{f}\in\{f,f'\}$. 
\item[{\em (\rn{4})}] If $e$ and $f$ are labeled and $\{e,e'\}, \{f,f'\} \in \T_\P(v)$, then $\T_Q(v)$ contains $\{e,f\}$, $\{e,f'\}$, $\{e',f\}$, or $\{e',f'\}$ or there exists a pair $\{g,g'\}\in \T_{\P}(v)$ such that 
$\{\tilde{e},g\}, \{\tilde{f},g'\}\in \T_Q(v)$ for some $\tilde{e}\in \{e,e'\}$ and $\tilde{f}\in\{f,f'\}$.  
\end{itemize}
\end{lemma}
\begin{proof}
Before showing (\rn{1})--(\rn{4}), we prepare two observations. 
\begin{itemize}
\setlength{\leftskip}{20mm}
\item[Observation 1.~] If $g$ and $h$ are free edges with $\{g,h\}\in \T_Q(v)$, then $\{g,h\}\in \T_{\P\triangle Q}(v)$. 
\item[Observation 2.~] For any pair $\{g,g'\}\in \T_{\P}(v)$, at most two pairs in $\T_Q(v)$ contains $g$ or $g'$.
\end{itemize}
The first one easily follows from the fact that each free edge appears at most once in $Q$.
To see the second one, suppose to the contrary that there are three pairs in $\T_Q(v)$ containing $g$ or $g'$. 
Then, there are three $\P$-segments coming from the same $T$-path and containing $v$. At least two of them have the same direction, which contradicts Lemma~\ref{lem:property_of_shortness}.

\medskip
 (\rn{1}):  Since $e$ and $f$ are free edges with $\{e,f\}\in \T_{\P\triangle Q}(v)$, each of them appears in $\T_Q(v)$ exactly once. 
Suppose $\{e, f\}\not \in \T_Q(v)$.
Then, $\{e,g\}, \{f, h\}\in \T_Q(v)$ for some edges $g$ and $h$ with $g, h\not\in \{e,f\}$. 
By Observation 1, $g$ and $h$ are labeled edges. 
Let $g'$ be the edge with $\{g, g'\}\in \T_{\P}(v)$.
We intend to show $h=g'$. 
Since $Q$ transfers at $v$ at most twice, any pair in $\T_Q(v)$ other than $\{e,g\}$ and $\{f,h\}$ is either consisting of free edges or belonging to $\T_{\P}(v)$. 
Therefore, if $h\not\in\{g, g'\}$, then $e$ is paired with $g$ or $g'$ in $\T_{\P\triangle Q}(v)$, a contradiction. Thus, we have $h\in \{g, g'\}$. In particular, we have $h=g'$. 
To see this, suppose to the contrary that $h=g$. Then, $\{e,g\}, \{f, g\}\in \T_Q(v)$ and any other pair in $\T_Q(v)$ contains neither $g$ nor $g'$ by Observation 2.  This implies $\{e,g'\}, \{f, g\}\in \T_{\P\triangle Q}(v)$, a contradiction.
Thus, we obtain $h=g'$, and hence $\{e,g\}, \{f,g'\}\in \T_Q(v)$.

\medskip
(\rn{2}): Since $\{e, f\}$ belongs to $\T_{\P}(v)$ and $\T_{\P\triangle Q}(v)$, each of $e$ and $f$ appears in $Q$ twice or never. If both of them appear twice, then $\T_Q(v)$ contains $\{e,f\}$ twice by Observation 2. It then suffices to show that it never happens that one of $e$ and $f$ appears twice in $Q$ and the other never appears. Suppose, to the contrary, that $e$ appears twice while $f$ never appears. Then, there are pairs $\{e,g\}, \{e,h\}\in \T_Q(v)$ for some edges $g, h$ distinct from $f$. As $Q$ transfers at $v$ at most twice, any pair in $\T_Q(v)$ other than $\{e,g\}$ and $\{e,h\}$ either consists of free edges or belongs to $\T_{\P}(v)$. If $g$ is a free edge, then $g$ is paired with $e$ or $f$ in $\T_{\P\triangle Q}(v)$, a contradiction. Then, $g$ is labeled. 
Let $g'$ be the edge with $\{g, g'\}\in \T_{\P}(v)$. If $h\not\in \{g,g'\}$, then $f$ is paired with $g$ or $g'$ in $\T_{\P\triangle Q}(v)$, a contradiction. If $h=g$, then $\T_Q(v)$ contains $\{e,g\}$ twice and any other pair in $\T_Q(v)$ contains neither $e$, $f$, $g$ nor $g'$ by Observation 2.  This implies $\{f,g'\}, \{e,g\}\in \T_{\P\triangle Q}(v)$, a contradiction. Thus, $g'=h$ must hold and $\{e,g\}, \{e,g'\}\in \T_Q(v)$ follows. Let $P\in \P$ be the $T$-path containing $g$ and $g'$. 
By Lemma~\ref{lem:property_of_shortness}, $e$ must be used once in each direction in $Q$.
This implies that the two $P$-segments containing $g$ and $g'$ have the same direction, which contradicts Lemma~\ref{lem:property_of_shortness}.

\medskip
 (\rn{3}): Since $e$ is free and appearing in $\T_{\P\triangle Q}(v)$, it appears exactly once in $Q$. 
Suppose that $\T_Q(v)$ contains neither $\{e,f\}$ nor $\{e, f'\}$. Then, $\{e,g\}\in \T_Q(v)$ for some edge $g$ with $g\not\in \{f, f'\}$, which is a labeled edge by Observation~1. 
Let $g'$ be an edge such that $\{g, g'\}\in \T_{\P}(v)$.
Since $f$ is not paired with $f'$ in $\T_{\P\triangle Q}(v)$ while $\{f, f'\}\in \T_{\P}(v)$, some pair in $\T_Q(v)$ contains exactly one of $f$ and $f'$, denoted by $\tilde{f}$. That is,  $\{\tilde{f}, h\}\in \T_Q(v)$ for some $h\not\in \{f,f'\}$.
Since $Q$ transfers at $v$ at most twice, any pair in $\T_Q(v)$ other than $\{e,g\}$ and $\{\tilde{f},h\}$ either consists of free edges or belongs to $\T_{\P}(v)$. 
If $h\neq g'$, then $e$ is paired with $g$ or $g'$ in $\T_{\P\triangle Q}(v)$, a contradiction.
Thus, we have $\{e,g\}, \{\tilde{f}, g'\}\in\T_Q(v)$. 

\medskip
 (\rn{ 4}): Suppose that $\T_Q(v)$ contains none of $\{e,f\}$, $\{e,f'\}$, $\{e',f\}$, nor $\{e',f'\}$. 
Since $e$ (resp., $f$) is not paired with $e'$ (resp., $f'$) in $\T_{\P\triangle Q}(v)$, there exist edges $g\not\in \{e,e'\}$ and $h\not\in \{f,f'\}$ such that $\{\tilde{e}, g\}, \{\tilde{f}, h\}\in \T_Q(v)$ for some $\tilde{e}\in \{e, e'\}$ and $\tilde{f}\in \{f, f'\}$. Since $Q$ transfers at $v$ at most twice, any pair in $\T_Q(v)$ other than $\{\tilde{e},g\}$ and $\{\tilde{f},h\}$ either consists of free edges or belongs to $\T_{\P}(v)$. 
In addition, $\T_Q(v)$ contains $\{e,e'\}$ if and only if $\tilde{e}=e$ since the labeled edge $e$ appears in $Q$ twice or never. 
If $g$ is a free edge, then $g$ is paired with $e$ in $\T_{\P\triangle Q}(v)$, a contradiction. 
Thus, $g$ is a labeled edge.
Let $g'$ be the edge with $\{g, g'\}\in \T_{\P}(v)$. If $h\neq g'$, then $e$ is paired with $g$ or $g'$ in $\T_{\P\triangle Q}(v)$, a contradiction.
Thus, we have $\{\tilde{e},g\}, \{\tilde{f},g'\}\in \T_Q(v)$.
\end{proof}
In the second assertion of (i), (iii), and (iv) of Lemma~\ref{lem:sawtrans}, if $\{g,g'\}\in \T_{\P}(v)$ comes from a $T$-path $P\in \P$, we say that $v$ is a {\em junction of $Q$ and $P$ pairing $e$ and $f$}. 

\subsection{Validity of Augmentation}
As mentioned in Section~\ref{sec:ts}, for an augmenting walk $Q$, the collection
$\T_{\P\triangle Q}$ is a transition system
of the subgraph $H^*\coloneqq (V,E(\P)\triangle E(Q))$, which determines a collection $\P\triangle Q$
of edge-disjoint $T$-trails and inner circuits.
Since the first and last edges of $Q$ are distinct free edges by (A1)--(A3),
the sum of the degrees of terminals in $H^*$
is $2(|\P|+1)$. Thus, $\P\triangle Q$ contains $|\P|+1$ edge-disjoint $T$-trails.

The following theorem states that, for a short augmenting walk $Q$,
none of $T$-trails in $\P\triangle Q$ are $T$-circuits. 
This immediately implies that $H^*$ includes $|\P|+1$ edge-disjoint $T$-paths.

\begin{theorem}\label{thm:SAW1}
For a family $\P$ of edge-disjoint $T$-paths, if $Q$ is a short augmenting walk,
then $\P\triangle Q$ contains $|\P|+1$ edge-disjoint $T$-trails none of which are $T$-circuits.
\end{theorem}
\begin{proof}
We intend to show that, if $\P\triangle Q$ contains a $T$-circuit, then $Q$ has a shortcut, which contradicts the shortness of $Q$.

To this end, we first observe some properties of $T$-trails in  $\P\triangle Q$.
Let $C$ be an arbitrary $T$-trail in $\P\triangle Q$.
We then define a collection $\C$ of edge-disjoint subwalks of $C$ as follows. Any maximal subwalk of $C$ forming a subpath of some $T$-path in $\P$ is a member of $\C$. A single inner vertex $v\in V\setminus T$ on $C$ is also a member of $\C$ if it is a junction of $Q$ and some $P\in \P$ pairing the two edges on $C$ incident to $v$. If $C$ starts (resp., ends) with a free edge, then the first (resp., last) terminal of $C$ is regarded as a member of $C$. These are all that $\C$ contains. 
We attach indices to the members of $\C$ so that $\C=\{C_0, C_1,\dots, C_\ell\}$ and they appear in this order along $C$.
See Figure~\ref{fig:augment} for an example.
If $\ell=0$, then $C$ coincides with some $T$-path in $\P$ and is not a $T$-circuit obviously.
We then assume $\ell\geq 1$.

For $i=0,1,\ldots,\ell$, let $x_i$ and $y_i$ denote the end-vertices of $C_i$ that appear in this order
along $C$. We have $x_i= y_i$ if $C_i$ is a single vertex.  
The subwalk of $C$ from $y_{i-1}$ to $x_i$,
denoted by $D_i$, consists of free edges for each $i=1,\ldots,\ell$.
Note that $y_{i-1}$ and $x_i$ may be identical. See Figure~\ref{fig:augment} for an example.
We say that $Q$ {\em traverses} $C_i$ (resp., $D_i$) if $C_i$ or its reverse (resp., $D_i$ or its reverse) appears in $Q$ as a subwalk. 
\begin{figure}[htbp]
    \vspace{4mm}
	\begin{center}
		\includegraphics[width=0.98\hsize]{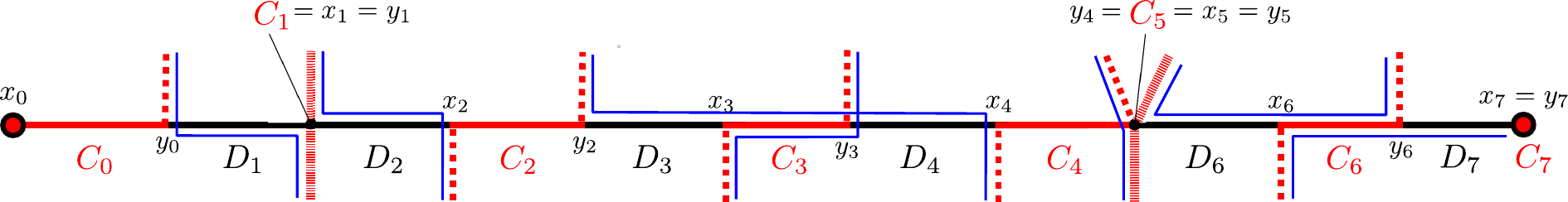}
	\end{center}
	\vspace{-0mm}
	\caption{
		\small\baselineskip=11pt
		The horizontal line represents a $T$-trail $C$ in $\P\triangle Q$, where red and black edges are labeled and free, respectively. $C_1$ and $C_5$ are junctions. Blue lines represent subwalks of a short augmenting walk $Q$. 
	}
	\label{fig:augment}
\end{figure}

\begin{claim}\label{clm:Q2}
For each $i\in\{1,2,\ldots,\ell-1\}$ with $x_i\neq y_i$, either $Q$ traverses $C_i$ twice (once in each direction) or $Q$ never uses an edge on $C_i$.
In addition, $Q$ never uses an edge on $C_0$ or $C_\ell$. 
\end{claim}
\begin{proof}
By the definition of $E(\P)\triangle E(Q)$,
any edge on each $C_i$ is used never or twice in $Q$. 
Then the first statement follows from Lemma~\ref{lem:sawtrans} (\rn{2}). 
We see that labeled edges incident to the terminal $x_0$ or $y_\ell$ cannot appear in $Q$ by (A2),
and hence the second statement follows.
\end{proof}

For $C_i\in \C$ that is not a single vertex,
let $P_i$ be the $T$-path from which $C_i$ comes.
If $C_i\in \C$ is a single vertex in $V\setminus T$, let $P_i$ be the $T$-path of which the vertex $C_i$ is a junction. 
For an end-vertex $v\in \{x_i, y_i\}$ of each $C_i$, we say that $Q$ {\em enters} (resp., {\em leaves})
$P_i$ at $v$ if there are two edges $e, f$ incident to $v$ such that $e$ and $f$ appear in $Q$ consecutively in this order and $f$ (resp., $e$) comes from $P_i$ (possibly, a selfloop) while $e$ (resp., $f$) does not. 
For convenience, we also say that $Q$ leaves (resp., enters) $P_0$ at $y_0$ if $C_0$ consists of a single terminal vertex and
$D_1$ is the first (resp., the last) part of $Q$.
Similarly, we say that $Q$ leaves (resp., enters) $P_{\ell}$ at $x_{\ell}$
if $C_\ell$ is a terminal and $D_{\ell}$ is the first (resp., the last) part of $Q$.

\begin{claim}
\label{clm:Q}
For each $i\in\{1,2,\ldots,\ell\}$, the following {\em (a)} or {\em (b)} occurs.
\begin{itemize}
\item[{\em (a)}] The walk $Q$ leaves $P_{i-1}$ at $y_{i-1}$, traverses $D_{i}$ from $y_{i-1}$ to $x_{i}$,
and enters $P_{i}$ at $x_{i}$ consecutively.
\item[{\em (b)}] The walk $Q$ leaves $P_{i}$ at $x_{i}$, traverses $D_i$ from $x_{i}$ to $y_{i-1}$,
and enters $P_{i-1}$ at $y_{i-1}$ consecutively.
\end{itemize}
\end{claim}
\begin{proof}
If $y_{i-1}\neq x_{i}$, it follows from Lemma~\ref{lem:sawtrans} (\rn{3}) that
$\T_{Q}(x_{i})$ contains a pair which consists of the free edge on $D_i$ incident to $x_{i}$ and some edge coming from $P_{i}$.
Then, $Q$ enters or leaves $P_{i}$ at $x_{i}$, and if it leaves (resp., enters), the edge immediately after leaving (resp., before entering) is the free edge on $D_i$ incident to $x_{i}$.
Similarly, we see that $Q$ enters or leaves $P_{i-1}$ at $y_{i-1}$, and
if it enters (resp., leaves), then the edge immediately before entering
(resp., after leaving) is the free edge on $D_i$ incident to $y_{i-1}$.
By Lemma~\ref{lem:sawtrans} (\rn{1}) and the definition of $\C$, consecutive edges in $D_i$ must appear in $Q$ consecutively.
Hence, either (a) or (b) occurs.

If $y_{i-1}=x_{i}$, then $D_i$ consists of the single vertex $x_i$. 
It follows from Lemma~\ref{lem:sawtrans} (\rn{3})-(\rn{4}) that $\T_{Q}(x_i)$ contains a pair
consisting of one edge from $P_{i-1}$ and one edge from $P_{i}$.
Then $Q$ leaves one of $\{P_{i-1}, P_{i}\}$ at $y_{i-1}=x_{i}$ and immediately enters the other, which means that (a) or (b) occurs.
\end{proof}
By Claim~\ref{clm:Q}, the event (a) or (b) described there occurs for each $i=1,2,\dots,\ell$.
If both (a) and (b) occur for $i$ (which happens only if $C_i$ is a single vertex), then we take one of them arbitrarily.
If (a) occurs for $i$, let $S'_{i-1}$ be the $P_{i-1}$-segment just before $Q$ leaves $P_{i-1}$ at $y_{i-1}$ and let $S_{i}$ be the $P_i$-segment just after $Q$ enters  $P_{i}$ at $x_i$. That is, the event (a) implies that $S'_{i-1}, D_i, S_i$ appear in $Q$ in this order consecutively.
Similarly, if (b) occurs for $i$, let $S'_{i-1}$ and $S_i$ be the $P_{i-1}$-segment and $P_i$-segment, respectively, such that $S_i, \overline{D_i}, S'_{i-1}$ appear in $Q$ in this order consecutively, where $\overline{D_i}$ is the reverse walk of $D_i$.

In addition, if (a) (resp., (b)) occurs for $i$,  let $s_{i-1}$ be the last (resp., first) symbol in $S'_{i-1}$ and $t_i$ be the first (resp., last) symbol in $S_{i}$. We then have $s_{i-1}\neq t_{i}$ since $D_i$ contains no symbols and $Q$ satisfies (A2).
Observe that $t_0$ or $s_\ell$ may be undefined (if $C_0$ or $C_\ell$ is a single vertex) while $s_0$ and $t_\ell$ are always defined.

We may have $S_i=S'_i$, which implies that either (1) $S'_{i-1}, D_i,S_i(=S'_i), D_{i+1}, S_{i+1}$ appear in $Q$ consecutively in this order or
(2) so do $S_{i+1}, \overline{D_{i+1}},S'_i(=S_i), \overline{D_i}, S'_{i-1}$ \,(a situation like the subwalk in Figure~\ref{fig:augment} containing $D_3$ and $D_4$).
In these cases, we redefine $S_i$ or $S'_i$ as follows.
In case (1), $S_i$ coincides with $C_i=P_i(x_i, y_i)$, which is $t_i s_i$-directed. By Claim~\ref{clm:Q2}, then $Q$ has an $s_i t_i$-directed $P_i$-segment $S_i^*$ including $\overline{C}_i\coloneqq P_i(y_i, x_i)$. We redefine $S_i$ (resp.,  $S'_i$) to be $S^*_i$ if $S^*_i$ appears in $Q$ before (resp.,  after) $S_i=S'_i$.
We say that $S_i$  (resp., $S'_i$) is  {\em adjusted} if it is updated to $S^*_i$.  After the adjustment, $S_i$ and $S'_i$ are distinct $P_i$-segments appearing in this order and the adjusted one is $s_i t_i$-directed.
In case (2), $S_i$ coincides with $\overline{C}_i$, and  we adjust $S_i$ or $S'_i$, i.e., we redefine one of them to be the $P_i$-segment including $C_i$ so that  $S'_i$ and $S_i$ are distinct $P_i$-segments appearing in this order. 

\medskip
We are now ready to show that $C$ is not a $T$-circuit.
Suppose, to the contrary, that $C$ is a $T$-circuit, i.e., the two end-vertices of $C$ are both $r$ for some terminal $r\in T$.
We now show that there exists a pair $(a,b)$ of indices with $0<a\leq b<\ell$ such that 
$\{(S_i, S'_i)\}_{i=a, a+1,\dots,b}$ or $\{(S'_i, S_i)\}_{i=b, b-1, \dots, a}$ forms a shortcut, which contradicts the shortness of $Q$.

Let $I, J\subseteq \{1,\ldots, \ell-1\}$ be the subsets of indices defined as follows.
The set $I$ is the union of $I_1, I_2, I_3\subseteq \{1,\ldots, \ell-1\}$, where
\begin{eqnarray*}
I_1 & = & \set{i\mid s_i\neq s_{i-1}}, \\
I_2 & = & \set{i\mid \text{$S_i$ and $S'_i$ appear in $Q$ in this order and $S_i$ is $s_i t_i$-directed}}, \\
I_3 & =&\set{i\mid \text{$S'_i$ and $S_i$ appear in $Q$ in this order and $S_i$ is $t_i s_i$-directed}}.
\end{eqnarray*}
Similarly, the set $J$ is the union of $J_1,J_2,J_3\subseteq  \{1,\ldots, \ell-1\}$, where
\begin{eqnarray*}
J_1 & = & \set{i\mid t_{i}\neq t_{i+1} },\\
J_2 & = & \set{i\mid \text{$S_i$ and $S'_i$ appear in $Q$ in this order and $S'_i$ is $s_i t_i$-directed}}, \\
J_3 & = & \set{i\mid \text{$S'_i$ and $S_i$ appear in $Q$ in this order and $S'_i$ is $t_i s_i$-directed}}.
\end{eqnarray*}
We let $(a, b)$ be a pair of indices that minimizes $b-a$ subject to $a\in I$, $b\in J$, and $0<a\leq b<\ell$. The existence of such a pair is guaranteed by the following claim.
\begin{claim}\label{clm:cd}
There exists a pair $(c,d)$ of indices such that $c\in I_1$, $d\in J_1$, and $0< c\leq d<\ell$.
\end{claim}
\begin{proof}
We have $\ell\geq 1$ and the condition (A2) of $Q$ implies $s_{\ell-1}\neq t_{\ell}$.
Since the two end-vertices of $C$ are both $r\in T$, the last statement of Claim~\ref{clm:Q2} implies $s_0=r$ and $t_\ell=r$.
Then $s_0\neq s_{\ell-1}$, and hence $I_1$ is nonempty. Let $c$ be the minimum index in $I_1$.  Then $r=s_{c-1}\neq t_{c}$.
Let $d$ be the largest index in $J_1$ with $c\leq d<\ell$.
Such an index $d$ exists because otherwise $r\neq t_c=t_{c+1}=\cdots=t_{\ell}=r$, a contradiction. The pair $(c,d)$ satisfies the required conditions.
\end{proof}

By Claim~\ref{clm:cd}, there exists a pair $(a,b)$ of indices with $a\in I$, $b\in J$, and $0<a\leq b<\ell$.
Among all such pairs, let $(a,b)$ be the one that attains the minimum value of $b-a$. 
Set $s\coloneqq s_a$ and $t\coloneqq t_b$.
\begin{claim}\label{clm:between}
For each index $i$ with $a\leq i\leq b$, we have $s_i=s$ and $t_i=t$.
In particular, when $x_i\neq y_i$, $C_i=P_i(x_i,y_i)$ is $st$-directed if $C_i$ is never traversed by $Q$ and $ts$-directed if $C_i$ is traversed twice.
\end{claim}
\begin{proof}
The first statement is obvious if $a=b$.
Suppose $a<b$.
By the choice of $a$ and $b$, we have $a\not\in J_1$, $b\not\in I_1$, and
$i\not\in J_1\cup I_1$ for any $i$ with $a<i<b$.
Thus, $s_i=s$ and $t_i=t$ if $a\leq i\leq b$.

To show the second statement, assume $x_i\neq y_i$. Note that Lemma~\ref{lem:property_of_shortness} implies that any vertex is contained in at most two $\P$-segments coming from the same $T$-path. Note also that $S_i$ and $S'_i$ contain vertices $x_i$ and $y_i$, respectively.
If $C_i=(x_i, y_i)$ is traversed by $Q$ twice, then each of $S_i$ and $S'_i$ includes the first or the second traversal of $C_i$ since otherwise $x_i$ or $y_i$ is contained in more than two $\P$-segments coming from the same $T$-path.
Thus, the definitions of $s_i$ and $t_i$ imply that $C_i$ is $t_i s_i$-directed, and hence it is $ts$-directed.
If $C_i=(x_i, y_i)$ is never traversed by $Q$, then $S_i$ and $S'_i$ are edge-disjoint from $C_i$.
Then, the definitions of $s_i$ and $t_i$ imply that $C_i$ is $s t$-directed.
\end{proof}


\begin{claim}\label{clm:direction}
The following {\em (\rn{1})} or {\em (\rn{2})} holds.
\begin{itemize}
\item[{\em (\rn{1})}] The $\P$-segments $S_a, S'_{a}, S_{a+1}, S'_{a+1}, \dots, S_{b}, S'_{b}$ appear in this order in $Q$. For each $i$ with $a<i\leq b$, we have that $S'_{i-1}, D_{i},  S_{i}$ appear in $Q$ in this order consecutively and both $S'_{i-1}$ and $S_i$ are $ts$-directed.
\item[{\em (\rn{2})}] The $\P$-segments $S'_b, S_{b}, S'_{b-1}, S_{b-1}, \dots, S'_{a}, S_{a}$ appear in this order in $Q$. For each $i$ with \mbox{$a<i\leq b$}, we have that $S_{i}, \overline{D_i}, S'_{i-1}$ appear in $Q$ in this order consecutively and both $S'_{i-1}$ and $S_i$ are $st$-directed.
\end{itemize}
\end{claim}
\begin{proof}
The statement is trivial if $a=b$. We now suppose $a<b$. Then, $a\not\in J$ and $b\not\in I$ hold by the minimality of $b-a$.
Since $a\not\in J_2\cup J_3$, either (1) $S_a$ and $S'_a$ appear in this order and $S'_a$ is $ts$-directed or 
(2) $S'_a$ and $S_a$ appear in this order and $S'_a$ is $st$-directed.

Here we consider case (1), which implies that $S'_a$  is not adjusted while $S_a$ may be adjusted (recall the definition of the adjustment).
Then, $S'_a, D_{a+1}, S_{a+1}$ appear in this order consecutively in $Q$, and hence $S_{a+1}$ is also $ts$-directed.

For any $i$ with $a<i<b$, we have $i\not\in  I_3\cup J_2$, and hence $S_{i}$ being $ts$-directed implies that $S_{i}$ and $S'_{i}$ appear in this order in $Q$ and $S'_{i}$ is also $ts$-directed.
This means that neither $S_i$ nor $S'_i$ is adjusted since otherwise they have opposite directions.
Thus, $S'_{i-1}, D_{i}, S_{i}$ appear in this order
consecutively in $Q$.
By applying this  implication to $i=a+1, a+2, \dots, b-1$, we obtain that 
$S'_{a}, S_{a+1}, S'_{a+1}, \dots, S_{b-1}, S'_{b-1}, S_{b}$ appear in this order in $Q$ and they are all $ts$-directed.
Since $b\not\in I_3$, the segments $S_b$ and $S'_b$ appear in this order in $Q$. As $S_b$ is $ts$-directed, it is not adjusted, and hence $S'_{b-1}, D_{b}, S_{b}$ appears in this order consecutively in $Q$. Thus, we have shown that 
(i) holds in case (1).

By the same arguments, we can show that (ii) holds in case (2).
\end{proof}

Without loss of generality, we assume that (\rn{1}) in Claim~\ref{clm:direction} holds.
(In case of~(\rn{2}), we can apply the same arguments to the reverse walk of $C$.)
We now show that $\S\coloneqq \{(S_i, S'_i)\}_{i=a, a+1,\dots,b}$ forms a shortcut of $Q$.  By the definitions of $S_i$ and $S'_i$, 
the assumption implies (S1) and (S2). Then, it suffices to show (S3).

By applying the bridging operation to $(S_i, S'_i)$, the subwalk of $Q$ from $S_i$ to $S'_i$ (including them) is replaced with the subpath of $P_i$ from the first vertex of $S_i$ to the last vertex of $S'_i$. This inserted subpath of $P_i$ is $st$-directed for any $i=a, a+1,\dots,b$ as follows.
If $C_i$ is never traversed by $Q$, then $C_i=P_i(x_i, y_i)$ is $st$-directed by Claim~\ref{clm:between}. As $S_i$ and $S'_i$ are incident to $x_i$ and $y_i$, respectively, the inserted subpath coincides with $C_i$, and hence is $st$-directed. If $C_i$ is traversed twice by $Q$, then $C_i$ is $ts$-directed by Claim~\ref{clm:between} and $S_i$ and $S'_i$ traverse $C_i$ in the opposite directions each other.  Since $S_i$ or $S'_i$ is not adjusted, $S_i$ starts at $x_i$ or $S'_i$ ends at $y_i$. In each case, the subpath of $P_i$ from the first vertex of $S_i$ to the last vertex of $S'_i$ is $st$-directed (possibly a selfloop assigned with $st$).

By (\rn{1}), $S'_{i-1}, D_{i}, S_{i}$ appear in this order consecutively in $Q$ for each $i=a+1,\dots,b$. Therefore, by bridging all pairs in $\S\coloneqq \{(S_i, S'_i)\}_{i=a, a+1,\dots,b}$, the subwalk of $Q$ from $S_a$ to $S'_b$ is replaced with the walk consisting of $st$-directed $\P$-segments and free edges.
To obtain (S3), then it suffices to show that the last symbol in $Q$ before $S_a$ is not $s$ and the first symbol after $S'_b$ is not $t$.
Note that  (\rn{1}) implies $a\not\in I_3$ and $b\not\in J_3$.
If $S_a$ is $st$-directed, then clearly the last symbol before $S_a$ is not $s$. 
If $S_a$ is $ts$-directed, then $a\not\in I_2\cup I_3$, and hence $a\in I_1$.
Also, $S_a$ being $ts$-directed implies that $Q$ leaves $P_{a-1}$ and traverses $D_i$ just before it enters $S_a$, and hence  the last symbol before $S_a$ is $s_{a-1}$, where we have $s_{a-1}\neq s_a=s$ by $a\in I_1$. Thus, the last symbol in $Q$ before $S_a$ is not $s$ in any case.
We can similarly show that the first symbol in $Q$ after $S'_b$ is not $t$. 
Therefore, $\S$ satisfies (S3). 
\end{proof}

\section{Search for a Short Augmenting Walk} \label{sec:search}
In this section, we design a search algorithm to find a short augmenting walk.
Basically, our algorithm is a construction of search trees rooted at terminals, where a special operation is applied whenever it detects a structure called a blossom.
Each tree is constructed so that, if we shrink all blossoms, then any path on the tree from the root satisfies (A2). 
Since we should avoid shortcuts, it seems better to grow the trees using few long subpaths of the current $T$-paths rather than many short subpaths.
To implement this idea, we add to $\G(\P)$ new labeled edges representing all subpaths of $T$-paths in $\P$.

Formally, we  define a labeled graph $\G^*(\P)=((V,E^*\cup L), \sigma_V, \sigma_{E^*}, \sigma_L)$, which is a supergraph of $\G(\P)=((V,E\cup L), \sigma_V,\sigma_E,\sigma_L)$. The edge set $E^*$ is obtained by adding new edges to $E$ as follows. For each pair of vertices $u$ and $v$ on a $T$-path $P\in \P$ such that $P(u,v)$ contains more than one edge, 
we attach a {\em jumping edge} $e$ with $\partial e=\{u,v\}$. 
The labels associated with this jumping edge $e$ are defined by $\sigma_{E^*}(e,u)\coloneqq s$ 
and $\sigma_{E^*}(e,v)\coloneqq t$, provided that the subpath $P(u,v)$ is $st$-directed. 
Other labels in $\G^*(\P)$ remain the same as those in $\G(\P)$.

We call a walk $Q=(v_0,e_1,v_1,\ldots,e_\ell,v_\ell)$ in the labeled graph $\G^*(\P)$ {\em admissible} if it satisfies the following condition in addition to (A2) and (A3).
\begin{itemize}
	\setlength{\itemsep}{0mm}
	\setlength{\leftskip}{2mm}
	\item[(A1*)]
	The initial vertex $v_0$ is in $T$, and intermediate vertices $v_1,\ldots,v_{\ell-1}$ are not in $T$.
\end{itemize}
In contrast to (A1), this does not require $v_\ell\in T$.
We apply the definition of shortcuts in Section~\ref{sec:augmentation} to admissible walks in $\G^*(\P)$. 

Note that any admissible walk in $\G^*(\P)$ is identified with a walk in $\G(\P)$ by replacing each jumping edge with the corresponding subpath in $\P$. We can easily see that the resultant walk is admissible in $\G(\P)$. In addition, this transformation does not change the end-vertices of each $\P$-segment. 
The definition of shortness then implies the following observation.
\begin{lemma}
\label{lem:saw_2}
A short augmenting walk in $\G^*(\P)$ corresponds to a short augmenting walk in $\G(\P)$.
\end{lemma}
In the rest of this section, we present an algorithm that finds a short augmenting walk in $\G^*(\P)$. 

\subsection{Algorithm Description}\label{sec:alg}
The algorithm works on $\G^*(\P)=((V,E^*\cup L), \sigma_V, \sigma_{E^*}, \sigma_L)$ and maintains a forest $F$ in $G^*\coloneqq (V,E^*)$ and a laminar family $\B$ on $V\setminus T$.
The forest $F$, which represents the search history, consists of $|T|$ disjoint trees rooted at $T$.
We denote by $V(F)$ and $E^*(F)$ the vertex and edge sets of $F$, respectively.
For any $v\in V(F)\setminus T$, a vertex on the path from $T$ to $v$ in $F$ is called an {\em ancestor} of $v$,
and the ancestor adjacent to $v$ in $F$ is the {\em parent} of $v$.
The edge connecting $v$ and its parent is called the {\em stalk} of $v$.
A vertex $u$ is called a {\em descendant} of $v$ if $v$ is an ancestor of $u$. Note that $v$ itself is also regarded as an ancestor and a descendant of $v$.

Each member $B$ of $\B$, called a {\em blossom}, is a subset of $V(F)\setminus T$ such that the subgraph $F[B]$ of $F$ induced by $B$ forms a subtree of $F$. We denote by $V(\B)$ the union of all members of $\B$.
Each blossom $B$ is associated with an edge $e_B\in E^*\cup L$ such that $e_B\not\in E^*(F)$ and $\partial e_B\subseteq B$, i.e., $e_B$ is spanned by the subtree $F[B]$.
We call the root vertex of $F[B]$ the {\em calyx} of $B$ and denote it by $w_B$.
As will be shown in Lemma~\ref{lem:subtree}, our algorithm constructs $F$ and $\B$ so that the following property is preserved.
\begin{itemize}
	\setlength{\itemsep}{0mm}
	\setlength{\leftskip}{2mm}
	\item[(B)] For each blossom $B\in\B$, the induced subgraph $F[B]$ is a subtree of $F$ disjoint from $T$,
	and the stalk of the calyx $w_B$ is a free edge.
\end{itemize}

In the algorithm, each vertex $v\in V(F)$ on the tree rooted at $t\in T$ stores a walk $W_1(v)$ from $t$ to $v$.
In addition, if $v\in V(\B)$, then $v$ also stores another walk $W_2(v)$ from $t$ to $v$.
These walks are called {\em search walks}. In particular, $W_1(v)$ and $W_2(v)$ are respectively called the {\em primary} and {\em secondary} search walk of $v$.
As will be shown later, search walks consist of edges in $E^*(F)\cup \set{e_B | B\in \B}$ and satisfy
$\lambda(W_1(v))\neq \lambda(W_2(v))$,
where $\lambda(W)$ denotes the last symbol of $\gamma(W)$ for a walk $W$.
\medskip

Our algorithm can be seen as a generalization of the breadth-first search. 
It manages vertices to check in two queues, denoted by $\Phi_1$ and $\Phi_2$.
A vertex $v$ is put into the {\em primary queue} $\Phi_1$ when $v$ is added to $F$.
At this moment, the primary search walk $W_1(v)$ is found.
Subsequently, $v$ is put into the {\em secondary queue} $\Phi_2$ if a blossom containing $v$ is produced for the first time.
At this moment, the secondary search walk $W_2(v)$ is found, which can be seen as a backup walk as follows. 

Recall that we have to search avoiding a consecutive appearance of a symbol, i.e., preserving (A2). For a labeled edge $e$ incident to $v$, it can happen that appending $e$ to $W_1(v)$ violates (A2) due to $\lambda(W_1(v))=\sigma_{E^*}(e,v)$ while appending $e$ to some other admissible walk ending at $v$ preserves (A2). The secondary search walk $W_2(v)$ is used to continue the search through $e$ in such a case. The condition $\lambda(W_1(v))\neq \lambda(W_2(v))$ is required for this purpose.

While either of the two queues is nonempty, the algorithm takes a vertex out of them with priority on the secondary queue $\Phi_2$. In other words, the algorithm takes the first vertex out of $\Phi_2$ if it is nonempty. Otherwise, it takes the first vertex out of $\Phi_1$. 
The algorithm scans all edges incident to the vertex $v$ taken out of the queues.
Here we introduce three types of edges the algorithm should detect.
We use the symbol $\cdot$ to represent a concatenation and denote by $\overline{W}$ the reverse walk of a walk $W$.
Let $e\in E^*\cup L$ be an edge in $\G^*(\P)$ with $\partial e=\{v,u\}$.
\begin{itemize}
	\setlength{\leftskip}{-3mm}
	\item The edge $e$ is {\em frontier} if $v\in V(F)$, $u\not\in V(F)$, and $W_i(v)\cdot e\cdot u$ satisfies (A2) for some $i\in\{1,2\}$.
	\item The edge $e$ is {\em exterior} if 
	$v\in V(\B)$, $u\in V(F)\setminus V(\B)$, and $e\in E^*(F)$ is a free edge that is the stalk of $u$.
	\item The edge $e$ is {\em interior} if $u,v\in V(F)$, $e\not\in E^*(F)$, there is no blossom $B\in \B$ with $\{u,v\} \subseteq B$, and the walk $W_i(v)\cdot e\cdot \overline{W_j(u)}$ satisfies (A2) for some $i,j\in \{1,2\}$.
\end{itemize}
In the definitions of frontier and interior edges,  $i$ and $j$ should be indices for which $W_i(v)$ and $W_j(u)$ have been determined.
Note that an interior edge may be a selfloop at $v=u$ while a frontier edge and an exterior edge must belong to $E^*$ by their definitions.

Consider an interior edge $e$ and take $i$ and $j$ in the definition such that $i+j$ is minimized.
If $v$ and $u$ have no common ancestor in $V\setminus T$, then
the algorithm returns a walk $W_i(v)\cdot e\cdot \overline{W_j(u)}$,
which is a short augmenting walk as will be shown in Theorem~\ref{thm:search-main}.
Otherwise, an interior edge $e$ {\em produces} a new blossom 
as follows.

Among the common ancestors of $v$ and $u$, 
let $w$ be the furthest one from the common root subject to the constraint that 
the stalk of $w$ is a free edge.
Such a vertex $w$ does exist in $V\setminus T$ because the edges on $F$ incident to $T$ should be free edges in the algorithm.
Let $Y$ be the set of vertices on the $w$--$v$ and $w$--$u$ paths on $F$.
A new blossom $B$ produced by $e$ is the union of $Y$ and the members of $\B$ intersecting with $Y$.
\begin{figure}[ht]
	\begin{center}
		\includegraphics[width=0.27\hsize]{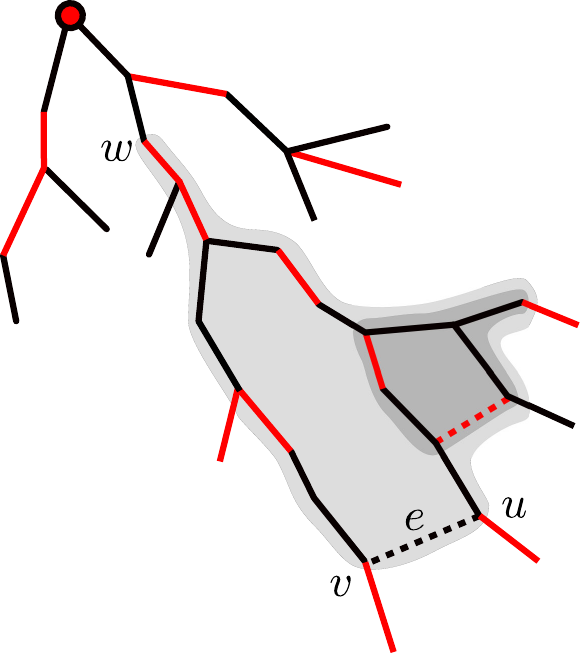}
	\end{center}
	\vspace{-2mm}
	\caption{
		\small\baselineskip=11pt
		The light gray part represents a blossom $B$ produced by an interior edge $e$ with $\partial e=\{v,u\}$.
		The dark gray part is some other blossom produced before $B$ and is included by $B$.
	}
	\label{fig:blossom}
\end{figure}
The algorithm associates the edge $e$ to $B$, i.e., sets $e_B\coloneqq e$. 
See Figure~\ref{fig:blossom}.
When a blossom $B$ is produced, the algorithm puts all vertices in $B\setminus V(\B)$ into the secondary queue $\Phi_2$ in an {\em ascending order} with respect to $F$, i.e., an ancestor $x\in B\setminus V(\B)$ of $y\in B\setminus V(\B)$ is put into $\Phi_2$ after $y$. 

When our search algorithm takes a vertex $v$ out of the primary or secondary queue,
the following procedure $\scan(v, e)$ is called for edges $e\in E^*\cup L$ incident to $v$.
It describes how to update the forest $F$ and the blossom family $\B$ when $e$ is detected as a frontier, exterior, or interior edge.
The notation $W_i(y,v)$ in Case~(\rn{3}) means a subwalk of $W_i(v)$ from $y$ to the end.
We will show in the proof of Lemma~\ref{lem:shrink} that, whenever this notation is used, $y$ appears exactly once in $W_i(v)$,
which ensures that $W_i(y,v)$ is well-defined.

\makeatletter
\renewcommand{\ALG@name}{Procedure}
\begin{algorithm}[h]
	\caption{~$\scan(v,e)$}
	\begin{enumerate}
		\setlength{\itemsep}{0mm}
		\setlength{\leftskip}{0mm}
		\item[] Let $u$ be the vertex such that $\partial e=\{v,u\}$ (possibly, $v=u$). 
		Check whether $e$ is either a  frontier, exterior, or interior edge and apply the following operations accordingly.
			\begin{description}
				\setlength{\leftskip}{0mm}
				\setlength{\itemsep}{2mm}
				\item[Case (\rn{1}):  A frontier edge.] Let $i\in \{1,2\}$ be the smallest index with $W_i(v)\cdot e\cdot u$ satisfying (A2). 
				Set $W_1(u)\coloneqq W_i(v)\cdot e\cdot u$, add $e$ and $u$ to $F$, and put $u$ into the primary queue~$\Phi_1$.
				\item[Case (\rn{2}): An exterior edge.] 
				Set $W_2(u)\coloneqq W_2(v)\cdot e\cdot u$, add $u$ to all $B\in \B$ with $v\in B$, and
				put $u$ into the secondary queue $\Phi_2$.
				\item[Case (\rn{3}): An interior edge.]
				Let $i, j\in \{1,2\}$ be indices that minimize $i+j$ subject to  \mbox{$W_i(v)\cdot e\cdot \overline{W_j(u)}$} satisfying (A2).
				\begin{itemize}
				\setlength{\leftskip}{0mm}
				\item If $v$ and $u$ do not share a common ancestor in $V\setminus T$,
				then return $W_i(v)\cdot e\cdot \overline{W_j(u)}$ and halt.
				\item Otherwise, let $B$ be the new blossom produced by $e$ and set $e_B\coloneqq e$.
				For each vertex $y\in B\setminus V(\B)$, set
                $$W_2(y)\coloneqq \left\{\begin{array}{ll}
                W_i(v)\cdot e\cdot \overline{W_j(y,u)} & (\mbox{if $y$ is an ancestor of $u$}), \\
                W_j(u)\cdot e\cdot \overline{W_i(y,v)} & (\mbox{otherwise}).
                \end{array}\right.$$
                Put ancestors of $u$ in $B\setminus V(\B)$ into $\Phi_2$ in the ascending order of $F$ and then put ancestors of $v$ in  $B\setminus V(\B)$ into $\Phi_2$ in the ascending order of $F$. Add $B$ to $\B$.
				\end{itemize}
				\end{description}
	\end{enumerate}
\end{algorithm}

Using this procedure, our search algorithm $\search$ works as described in Algorithm~\ref{algo:search}.
Here we provide a simple version that requires $O(|E^*|)=O(|E|^2)$ calls of $\scan$. In Section~\ref{sec:linear-scan}, we will show that one can reduce the number of scans to $O(|E|)$ by utilizing pointers that move on the $T$-paths in $\P$.

\makeatletter
\renewcommand{\ALG@name}{Algorithm}
\begin{algorithm}[h]
	\caption{~$\search$}
	\begin{description}
		\setlength{\itemsep}{-1mm}
		\item[Input:] A graph $G=(V, E)$ and edge-disjoint $T$-paths $\P$.
		\item[Output:] A short augmenting walk in $\G^*(\P)$ or a message ``no augmenting walk.''
	\end{description}
	\begin{enumerate}
		\setlength{\itemsep}{0mm}
		\setlength{\leftskip}{-1mm}
		\item 	Set $F\gets(T,\emptyset)$ and $\B\gets\emptyset$.
		For each terminal vertex $t\in T$, set $W_1(t)\coloneqq  t$ and put $t$ into the primary queue $\Phi_1$. 
		\item While the queue $\Phi_1$ or $\Phi_2$ is nonempty, do the following.
		\begin{enumerate}
		\item[(2-1)] If the secondary queue $\Phi_2$ is nonempty, then take the first element out of $\Phi_2$. Otherwise, take the first element out of $\Phi_1$. Let $v$ be the vertex thus dequeued. 
		\item[(2-2)] For each edge $e\in E^*\cup L$ incident to $v$, call $\scan(v,e)$.
		\end{enumerate}
		\item  Return ``no augmenting walk.'' 
	\end{enumerate}
	\label{algo:search}
\end{algorithm}

At Step 1, the algorithm starts with $F=(T, \emptyset)$.
Note that $F$ is updated only when Case (i) of $\scan(v, e)$ is applied. 
By the definition of a frontier edge, we see that  $F$ is indeed a forest rooted at $T$ at any moment of the algorithm. 
Note that any labeled edge $e\in E^*$ incident to a terminal $t\in T$ satisfies $\sigma_{E^*}(e,t)=t$, and hence it cannot be a frontier edge. 
Therefore, at any moment of the algorithm, all edges in $E^*(F)$ incident to $T$ are free edges.

The algorithm also starts with $\B=\emptyset$, which is a laminar family. 
When $\B$ is updated in Case~(\rn{2}) of $\scan(v,e)$, each blossom $B\in\B$ that contains $v$ is replaced by $B\cup\{u\}$. 
Since $u$ was not contained in any blossom before this update, the family $\B$ remains to be laminar.
When the algorithm produces a new blossom $B$ in Case~(\rn{3}), any other blossoms that intersect
with $B$ are included in $B$. Thus the algorithm maintains $\B$ as a laminar family.
The family $\B$ also satisfies the property (B) as follows.
\begin{lemma}\label{lem:subtree}
The family $\B$ of blossoms satisfies {\em (B)} throughout the algorithm.
\end{lemma}
\begin{proof}
The family $\B$ is empty at the beginning of Step 2 and updated in Cases (\rn{2}) and (\rn{3}) of $\scan(v,e)$.
The update in Case (\rn{2}) just adds a new vertex $u$ to all the blossoms that contain $v$, which is the parent of $u$. 
Then, (B) is preserved.
We next consider Case (\rn{3}).
Let $\B$ denote the laminar family before $B$ is added and define $w$ as in the definition of a new blossom mentioned above.
Because all blossoms in $\B$ form subtrees of $F$ by the inductive assumption,
the definition of $B$ implies that the induced subgraph $F[B]$ is connected, and therefore it forms a subtree of $F$.
If $w\not\in V(\B)$, then the calyx of $B$ is $w$, and the stalk of $w$ is free by definition.
Otherwise, let $B'\in \B$ be the maximal blossom with $w\in B'$.
Then, the calyx of $B$ coincides with the calyx $w_{B'}$ of $B'$,
where the stalk of $w_{B'}$ is free by the inductive assumption. Thus, (B) is preserved.
\end{proof}

\subsection{Admissibility}\label{sec:admissible}
Obviously from the description of $\search$, each vertex $x\in V(F)$ is associated with the primary search walk $W_1(x)$.
In addition, if $x\in V(\B)$, then $x$ is also associated with the secondary search walk $W_2(x)$.
By the description, any search walk determined in the algorithm consists of edges in $E^*(F)\cup \set{e_B|B\in \B}$ and satisfies {\rm (A1)}.
In this section, we show that any search walk is admissible and that $\lambda(W_1(x))\neq \lambda(W_2(x))$ holds for each $x\in V(\B)$, i.e., 
the primary and secondary search walks of $x$ have different last symbols.

For a blossom $B\in\B$, {\em shrinking} $B$ means contracting all edges $e\in E^*(F)$ with $\partial e\subseteq B$. 
Let $F/\B$ denote the graph obtained from $F$ by shrinking every member of $\B$. Since the induced subgraph $F[B]$ on any blossom $B\in\B$ forms a subtree disjoint from $T$ by (B), the resulting $F/\B$ is again a forest rooted at $T$. 
For a vertex $x\in V(F)$, its {\em projection} in $F/\B$ is the vertex $x$ itself if $x\not\in V(\B)$ and otherwise the shrunk vertex corresponding to the maximal blossom containing $x$.
For any search walk $W_k(x)$, we denote by $W_k(x)/\B$ the walk on $F/\B$ obtained as the projection of $W_k(x)$. 
We now show that any search walk satisfies the following property.
\begin{itemize}
	\item[($\star$)] 
	For any search walk $W_k(x)$, 
	the walk $W_k(x)/\B$ is the unique path on $F/\B$ from $T$ to the projection of $x$.
\end{itemize}
\begin{lemma}\label{lem:shrink}
The property {\rm($\star$)} is preserved at any moment of the algorithm.
\end{lemma}
\begin{proof}
At the beginning of Step 2, any search walk consists of a single vertex, and hence ($\star$) holds.
We show by induction that updates in Cases (\rn{1})--(\rn{3}) of $\scan(v,e)$ preserves ($\star$).

In Cases (i) and (ii), the new search walk is in the form of $W_i(v)\cdot e\cdot u$, where $e$ is the stalk of $u$ by the definitions of frontier and exterior edges. Since $W_i(v)$ satisfies the property in ($\star$) by the inductive assumption, the new walk also satisfies the property.

In Case (\rn{3}), a new search walk $W_2(y)$ is determined for $y\in B\setminus V(\B)$.
Before proving the property in ($\star$), we show that $W_2(y)$ is indeed well-defined. 
First, suppose that $y$ is an ancestor of $u$. 
Since $W_j(u)$ satisfies the property in ($\star$) by the inductive assumption and $y\not\in V(\B)$ holds, 
the ancestor $y$ of $u$ appears exactly once in $W_j(u)$.
Thus, $W_j(y,u)$ is well-defined, and so is $W_2(y)=W_i(v)\cdot e\cdot \overline{W_j(y,u)}$.
Similarly, in case $y$ is not an ancestor of $u$ but that of $v$, we see that $v$ appears exactly once in $W_i(v)$, and hence $W_2(y)=W_j(u)\cdot e\cdot \overline{W_i(y,v)}$ is well-defined.

By the inductive assumption, $W_i(v)$ and $W_j(u)$ satisfy the property in ($\star$).
In case $W_2(y)=W_i(v)\cdot e\cdot \overline{W_j(y,u)}$, 
every vertex on $W_j(y,u)$ is either on the $y$--$u$ path on $F$ or belongs to some blossom intersecting this path.
This implies that all vertices on $W_j(y,u)$ belong to the new blossom $B$ by the definition of $B$.
Similarly, in case $W_2(y)=W_j(u)\cdot e\cdot \overline{W_i(y,v)}$, all vertices on $W_i(y,v)$ belong to $B$.
Thus, $W_2(y)$ satisfies the property in ($\star$) with the new blossom family $\B\cup \{B\}$.
\end{proof}

We say that a walk $W$ is an {\em extension} of a walk $W'$ if $W=W'+Q$ for some walk $Q$.
\begin{lemma}\label{lem:extension}
For a vertex $z$ on $W_k(x)$, if $z\not\in V(\B)$ holds, then $W_k(x)$ is an extension of $W_1(z)$.
\end{lemma}
\begin{proof}
We show by induction that the statement holds for any search walk $W_k(x)$.
We first suppose that a new search walk is determined in Cases (i) and (ii) of $\scan(v,e)$. Then the new search walk is in the form of $W_i(v)\cdot e\cdot u$. In addition, $i=2$ holds only if $v\in V(\B)$. Since $W_i(v)$ satisfies the property by the inductive assumption, so does the new walk. We next suppose that a new search walk is determined in Case (iii) by $W_2(y)=W_i(v)\cdot e\cdot \overline{W_j(y,u)}$ (resp.,  $W_2(y)=W_j(u)\cdot e\cdot \overline{W_i(y,v)}$). Then, all vertices on $W_j(y,u)$ (resp., $W_i(y,v)$) belong to the new blossom as mentioned in the proof of Lemma~\ref{lem:shrink}. Since $W_i(v)$ (resp., $W_j(u)$) satisfies the property by the inductive assumption, so does $W_2(y)$.
\end{proof}

\begin{lemma}\label{lem:C1}
Any search walk uses each free edge at most once and each vertex at most twice.
\end{lemma}
\begin{proof}
Since the statement clearly holds at the beginning of Step 2, 
it suffices to show that any new walk determined in $\scan(v,e)$ satisfies the properties.

When an update in Case (i) or (ii) of $\scan(v,e)$ is applied, a new search walk is in the form of $W_i(v)\cdot e\cdot u$, where $u\not\in V(\B)$ and $e$ is the stalk of $u$.
By the property ($\star$), $u\not\in V(\B)$ implies that $e$ and $u$ are never used in $W_i(v)$.
Since $W_i(v)$ uses each free edge at most once and each vertex at most twice, 
so does the new walk $W_i(v)\cdot e\cdot u$. 

We next consider an update in Case (\rn{3}). 
By this update, a new search walk $W_2(y)$ is determined for any $y\in B\setminus V(\B)$,
where $\B$ denotes the blossom family before the new blossom $B$ is added.
We suppose that $y$ is an ancestor of $u$ and hence $W_2(y)=W_i(v)\cdot e\cdot \overline{W_j(y,u)}$, because the other case is shown similarly.
We consider two cases depending on whether $y$ is also an ancestor of $v$ or not.

In case $y$ is not an ancestor of $v$, the property ($\star$) and $y\not\in V(\B)$ imply that $W_j(y,u)$ is vertex-disjoint from $W_i(v)$. 
Since each of $W_i(v)$ and $W_j(u)$ uses each free edge at most once and each vertex at most twice by the inductive assumption,
so does $W_2(y)$.

In case $y$ is an ancestor of $v$, let $r$ be the lowest common ancestor of $v$ and $u$. As in the definition of a new blossom $B$, 
let $w$ be the common ancestor furthest from the common root such that the stalk of $w$ is a free edge. Since $y\in B$, it follows that $y$ is on the $w$--$r$ path in $F$.
By the definition of $w$, all edges on this path are labeled.
We claim that the $y$--$r$ path does not intersect any blossom. 
Suppose, to the contrary, that there is a vertex $x$ on the $y$--$r$ path with $x\in B'$ for some blossom $B'\in \B$.
Then $w_{B'}$ is an ancestor of $r$ and the stalk of $w_{B'}$ is free by (B), and hence $w_{B'}$ is an ancestor of $w$ (possibly, $w$ itself). 
Then, all vertices on the $w_{B'}$--$x$ path, including $y$, are contained in $B'$, which contradicts $y\not\in V(\B)$.
Thus, no vertex on the $y$--$r$ path belongs to $V(\B)$. 
By ($\star$), this implies that $W_j(y,u)$ is a concatenation of the $y$--$r$ path and $W_j(r,u)$.
In addition, $r\not\in V(\B)$ and ($\star$) imply that $W_j(r,u)$ is vertex-disjoint from $W_i(v)$ except $r$.
Note that the $y$--$r$ path consists of labeled edges and vertices in $V(F)\setminus V(\B)$, which are used exactly once in $W_i(v)$ by ($\star$).
Therefore, $W_2(y)$ uses each free edge at most once and each vertex at most twice.
\end{proof}

We now complete the proof of admissibility of search walks.

\begin{lemma}[\bf Admissibility]\label{lem:C2}
Any search walk is admissible and uses each vertex at most twice. In addition, $\lambda(W_1(x))\neq \lambda(W_2(x))$ holds for any $x\in V(\B)$.
\end{lemma}
\begin{proof}
What is left after Lemma~\ref{lem:C1} is to show that any search walk satisfies (A2)
and that $\lambda(W_1(x))\neq \lambda(W_2(x))$ holds for any $x\in V(\B)$.
At the beginning of Step 2, all search walks clearly satisfy (A2) and $V(\B)=\emptyset$.
We show by induction that the statement is preserved by any update.

When  Case (i) of $\scan(v,e)$ is applied, the newly determined search walk $W_1(u)=W_i(v)\cdot e\cdot u$ clearly satisfies (A2) and $u\not\in V(\B)$ by the definition of a frontier edge.

When  Case (ii) is applied, the secondary search walk $W_2(u)=W_2(v)\cdot e\cdot u$ is determined.
By the definition of an exterior edge, $e$ is a free edge that is the stalk of $u$, and hence $W_1(u)=W_1(v)\cdot e\cdot u$.
Then, $\lambda(W_i(u))=\lambda(W_i(v))$  for $i=1,2$.
Since $W_2(v)$ satisfies (A2) and $\lambda(W_1(v))\neq\lambda(W_2(v))$ holds by the inductive assumption, 
the same properties are satisfied with $u$.

When Case (\rn{3}) is applied,  a new blossom $B$ is produced and the secondary search walk $W_2(y)$ is determined for any $y\in B\setminus V(\B)$, where $\B$ is the blossom family before $B$ is added.
By the definition of an interior edge,  $Q\coloneqq W_i(v)\cdot e\cdot \overline{W_j(u)}$ satisfies (A2). 
In addition, since $y\not\in V(\B)$, Lemma~\ref{lem:extension} and the property ($\star$) imply that we have $W_j(u)=W_1(y)+W_j(y,u)$
if $y$ is an ancestor of $u$
and $W_i(v)=W_1(y)+W_i(y,v)$ 
if $y$ is an ancestor of $v$.
Thus, $W_2(y)+\overline{W_1(y)}$ coincides with $Q$ or its reverse. Therefore, $\lambda(W_2(y))\neq \lambda(W_1(y))$ follows.
\end{proof}

\subsection{Shortness}\label{sec:shortness}
We next aim at showing the properties of the search walks that leads to the shortness of the output walk of $\search$. Recall that an augmenting walk is short if it has no shortcuts and transfers at each vertex at most twice. The latter condition easily follows from the fact that any search walk uses each vertex at most twice, which is shown in Lemma~\ref{lem:C1} above. 

Therefore, in what follows, we focus on the nonexistence of shortcuts.
Here we use the fact that our algorithm maintains vertices in the two queues $\Phi_1$ and $\Phi_2$, and takes a vertex out of them with priority on the secondary queue $\Phi_2$. 
We first prepare some lemmas.

\begin{lemma}\label{lem:queue1}
At the moment when a vertex $x$ is put into $\Phi_k$ with $W_k(x)$ being determined, every vertex $z$ on $W_k(x)$ except $x$ has been taken out of $\Phi_1$ earlier or has been put into $\Phi_2$ earlier. In particular, the latter holds if $W_k(x)$ is not an extension of $W_1(z)$.
\end{lemma}
\begin{proof}
We show the statement by induction. It obviously holds for the search walks determined in Step~1.
We then show that the statement holds for any new walk determined in $\scan(v,e)$.

In Cases (i) and (ii), the new walk is determined in the form of $W_i(v)\cdot e\cdot u$.
Note that $W_i(v)$ has been determined earlier, and the statement holds for $W_i(v)$ by the inductive assumption. Also, $v$ has been taken out of some queue just before $\scan(v,e)$ is called, and in particular, it is taken out of $\Phi_2$ if $i=2$.
Thus, the statement holds for the new walk $W_i(v)\cdot e\cdot u$.

We next consider Case (\rn{3}), in which $e$ is detected as an interior edge and a new blossom $B$ is produced. 
Let $W_2(y)$ be the secondary walk newly determined for  $y\in B\setminus V(\B)$.
We first assume $W_2(y)\coloneqq  W_i(v)\cdot e\cdot \overline{W_j(y,u)}$. By the same argument as above, the property in the statement is satisfied for every vertex $z$ on $W_i(v)$. By the definition of $B$, every vertex on $\overline{W_j(y,u)}$ is either a descendant of $y$ or belonging to some blossom defined earlier than $B$. Because the algorithm puts vertices on $B\setminus V(\B)$ into the secondary queue $\Phi_2$ in the ascending order, all vertices on $\overline{W_j(y,u)}$ have been put into $\Phi_2$ before $y$. Thus, the statement holds.
We next assume $W_2(y)\coloneqq  W_j(u)\cdot e\cdot \overline{W_i(y,v)}$. By the inductive assumption, every vertex on $W_j(u)$ except $u$ satisfies the property in the statement. Also, by the description of Case (\rn{3}), if $u\not\in V(\B)$, then $u$ is put into $\Phi_2$ before $y$. Similarly to the first case, we can see that all vertices on $\overline{W_i(y,v)}$ are put into $\Phi_2$ before $y$. Thus, the statement holds also in this case.
\end{proof}


\begin{lemma}\label{lem:secondary_queue}
At any moment of the algorithm, if the secondary queue $\Phi_2$ is nonempty, then there exists a blossom $B\in \B$ that contains all the vertices in $\Phi_2$. 
\end{lemma}
\begin{proof}
We show this by induction. At the beginning of the algorithm, $\Phi_2$ is empty, and hence the claim is valid. New vertices are put into $\Phi_2$ only when Case (\rn{2}) or (\rn{3}) of $\scan(v,e)$ is executed for some vertex $v$ and an edge $e$ incident to $v$. 
If $v\in V(\B)$, let $B$ denote the maximal blossom that contains $v$ before $\scan(v,e)$. Otherwise, set $B=\emptyset$. By the inductive assumption, all vertices in $\Phi_2$ must belong to $B$. 
In Case (\rn{2}), the algorithm adds a vertex $u$ to $B$, putting $u$ into $\Phi_2$ at the same time. In Case (\rn{3}), the algorithm produces a new blossom $B'$ that includes $B$. All new vertices the algorithm puts into $\Phi_2$ are contained in $B'$. Thus, in either case, the new vertices $\scan(v,e)$ puts into $\Phi_2$ are in the same maximal blossom containing $v$ after the execution.
\end{proof}

\begin{lemma}\label{lem:queue0}
At the moment when a vertex $x$ is taken out of $\Phi_1$ or $\Phi_2$, if a vertex $z$ belongs to $V(\B)$ and there is no blossom $B\in \B$ with $\{x, z\}\subseteq B$, then $z$ has been taken out of $\Phi_2$ earlier.
\end{lemma}
\begin{proof}
Suppose that a vertex $x$ is taken out of $\Phi_1$ or $\Phi_2$ and a vertex $z$ satisfies $z\in V(\B)$ and there is no blossom $B$ with $\{x,z\}\subseteq B$.
By the algorithm,  $z\in V(\B)$ implies that $z$ has been put into $\Phi_2$ earlier.
In case $x$ is taken out of $\Phi_1$, then $\Phi_2$ is empty.
In case $x$ is taken out of $\Phi_2$, by Lemma~\ref{lem:secondary_queue}, there exists a blossom containing $x$ and all vertices in $\Phi_2$, which implies that $z$ is not in $\Phi_2$ currently.
Thus, in both cases,  $z$ is not in $\Phi_2$, and hence it has been taken out of $\Phi_2$ earlier.
\end{proof}

We are now ready to prove that the primary search walks are free from shortcuts
and secondary search walks may contain shortcuts in a very restricted manner.
For an augmenting walk $Q$ and a shortcut $\S$, we denote by $Q\ast \S$ the augmenting walk obtained from $Q$ by bridging all the pairs in $\S$.
\begin{lemma}[\bf No Shortcut]\label{lem:no-shortcut}
The following two statements hold.
\begin{itemize}
\setlength{\itemsep}{0mm}
\item[\rm (a)] Any primary search walk $W_1(x)$ admits no shortcut.
\item[\rm (b)] If a secondary search walk $W_2(x)$ has a shortcut $\S$, then $\lambda(W_2(x)\ast \S)=\lambda(W_1(x))$.
\end{itemize}
\end{lemma}
\begin{proof}
At the beginning of Step 2, all primary search walks clearly satisfy (a), and no secondary walk is determined.
We show by induction that any new search walk determined in Cases (\rn{1})--(\rn{3}) of $\scan(v,e)$ 
satisfies (a) and (b).


\medskip

{\bf Case (\rn{1}).}
Let $W_1(u)$ be a primary search walk newly determined by $W_1(u)=W_i(v)\cdot e\cdot u$ for $u\not\in V(F)$.
Suppose, to the contrary, that $W_1(u)$ has a shortcut $\S=\{(S_h, S'_h)\}_{h=1}^k$.

We first consider the case where $S'_k$ does not contain the last edge $e$ of $W_1(u)$. In this case, all the $\P$-segments in $\S$ are included in $W_i(v)$, and hence $W_1(u)\ast \S=(W_i(v)\ast \S)\cdot e\cdot u$, which satisfies (A2).
Hence, $\S$ is also a shortcut of $W_i(v)$.
By the inductive assumption, we then have $i=2$ and $\lambda(W_2(v)\ast \S)=\lambda(W_1(v))$ holds.
Therefore, $W_1(v)\cdot e\cdot u$ satisfies (A2), which contradicts $i=2$ by the choice of $i$.

We next consider the case where $S'_k$ contains the last edge $e$ of $W_1(u)$.  In this case, $u$ is the last vertex of $S'_k$.
Let $w$ be the first vertex of $S_k$ and let $P$ be the $T$-path containing $S_k$ and $S'_k$.
In addition, let $Q_w$ be the subwalk of $W_i(v)$ from the first vertex to $w$ (i.e., just before $S_k$) and let $\S'\coloneqq \{(S_h, S'_h)\}_{h=1}^{k-1}=\S\setminus \{(S_k, S'_k)\}$. Then, $W_1(u)\ast \S=(Q_w\ast \S')+P(w, u)$ and it satisfies (A2). 
\begin{claim}\label{claim:short}
If $W_1(w)+P(w,u)$ violates {\em (A2)}, then $Q_w\neq W_1(w)$. 
\end{claim}
\begin{proof}
Suppose that $W_1(w)+P(w,u)$ violates (A2). 
As $(Q_w\ast \S')+P(w, u)$ satisfies (A2), we have $Q_w\ast \S'\neq W_1(w)$.
If $\S'=\emptyset$, this immediately implies $Q_w\neq W_1(w)$.
If $\S'\neq \emptyset$, then $\S'$ is a shortcut of $Q_w$, and hence $Q_w\neq W_1(w)$ follows as $W_1(w)$ has no shortcuts by the inductive assumption. 
\end{proof}

Since $w$ appears on $W_i(v)$ and $v$ was taken out of some queue just before $\scan(v,e)$ is called, Lemma~\ref{lem:queue1} implies that $w$ has been taken out of some queue before $\scan(v,e)$ is called, and in particular, out of $\Phi_2$ if $Q_w\neq W_1(w)$. Note that $Q_w\neq W_1(w)$ holds if $W_1(w)+P(w,u)$ violates (A2) by Claim~\ref{claim:short}. Note also that, when $w$ is taken out of $\Phi_2$, the walk $W_2(w)$ is determined and satisfies $\lambda(W_2(w))\neq \lambda(W_1(w))$.
Therefore, the jumping edge corresponding to $P(w, u)$ should have been detected as a frontier edge before $\scan(v,e)$ is called.
This contradicts the fact that $u\not\in V(F)$ holds when $\scan(v,e)$ is called.

\medskip

{\bf Case (\rn{2}).}
In this case,  $W_2(u)=W_2(v)\cdot e\cdot u$ is newly determined for $u\not\in V(\B)$.
As $e$ is free, any shortcut of $W_2(u)$ is also a shortcut of $W_2(v)$.
Furthermore, as $v$ is the parent of $u$, we have $W_1(u)=W_1(v)\cdot e\cdot u$,
and hence $\lambda(W_1(v))=\lambda(W_1(u))$.
Since $W_2(v)$ satisfies (b) by the inductive assumption, so does $W_2(u)$.

\medskip

{\bf Case (\rn{3}).} An interior edge $e$ is detected and a new blossom $B$ is produced.
A new search walk is defined by $W_2(y)\coloneqq W_i(v)\cdot e\cdot \overline{W_j(y,u)}$ or
$W_2(y)\coloneqq W_j(u)\cdot e\cdot \overline{W_i(y,v)}$ for each $y\in B\setminus  V(\B)$, where $\B$ denotes the blossom family just before $B$ is added.
We assume $W_2(y)\coloneqq W_i(v)\cdot e\cdot \overline{W_j(y,u)}$ because the other case can be shown similarly.
Consider a walk $Q\coloneqq W_i(v)\cdot e\cdot \overline{W_j(u)}$.
Since $y\notin V(\B)$,  Lemma~\ref{lem:extension} implies
$W_j(u)=W_1(y)+W_j(y,u)$. 
Then, $Q= W_i(v)\cdot e\cdot \overline{W_j(y,u)} +\overline{W_1(y)}= W_2(y)+ \overline{W_1(y)}$.

To show (b), suppose to the contrary that $W_2(y)$ has a shortcut $\S=\{(S_h, S'_h)\}_{h=1}^k$
such that $\lambda(W_2(y)\ast \S)\neq \lambda(W_1(y))$.
Then $W_2(y)\ast \S+\overline{W_1(y)}$ satisfies (A2).
Since $W_2(y)\ast \S+\overline{W_1(y)}=Q\ast \S$,
the walk $Q\ast \S$ satisfies (A2). Consider 
\begin{align*}
\S^v&\coloneqq \set{(S_h, S'_h)\in \S|\text{$S_h$ and $S'_h$ are included in $W_i(v)$}},\\
\S^u&\coloneqq\set{(S_h, S'_h)\in \S|\text{$S_h$ and $S'_h$ are included in $\overline{W_j(u)}$}}.
\end{align*}
Since $S_1, S'_1, \dots, S_k, S'_k$ appear in this order on $W_2(y)=W_i(v)\cdot e\cdot \overline{W_j(y,u)}$, there is at most one index $h^*$ satisfying $(S_{h^*}, S'_{h^*})\not\in \S^v\cup \S^u$.
Thus, either of the following two holds (see Figure~\ref{fig:no-shortcut}): 
\begin{enumerate}
\item ~$\S^v\cup \S^u=\S$,
\item ~$\S\setminus (\S^v\cup \S^u)=\{(S_{h^*}, S'_{h^*})\}$ for some $h^*\in \{1,2,\dots,k\}$.
\end{enumerate}
\begin{figure}[t]
	\begin{center}
		\includegraphics[width=0.6\hsize]{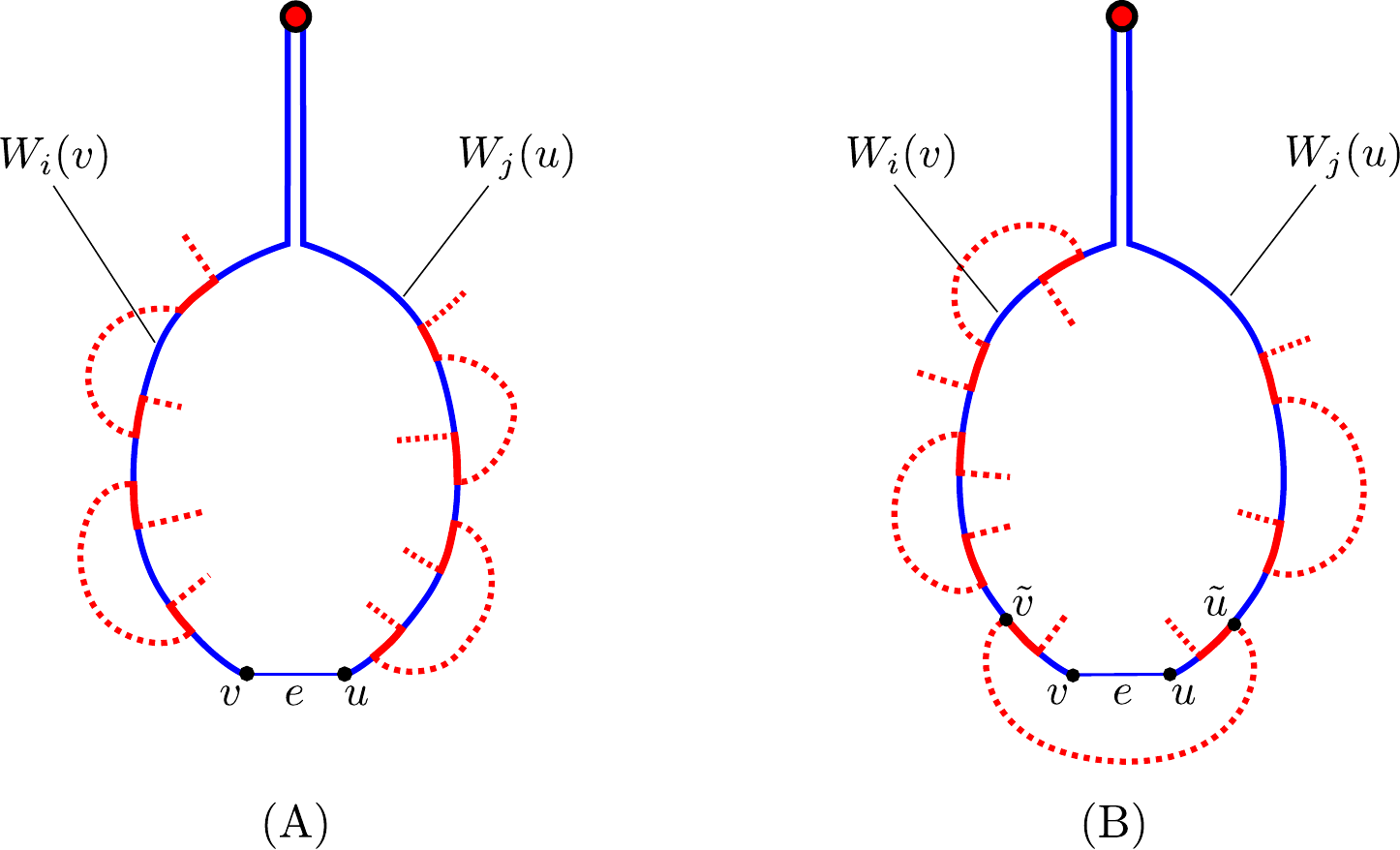}
	\end{center}
	\vspace{-3mm}
	\caption{
		\small\baselineskip=11pt
The figures (A) and (B) respectively show images of the cases $\S^v\cup \S^u=\S$ and $\S^v\cup \S^u\neq \S$. Blue lines represent the search walks $W_i(v)$ and $W_j(u)$. Thick red lines represent the $\P$-segments constituting of the shortcut $\S$. Dashed red lines represent the $T$-paths from which those $\P$-segments come.}
	\label{fig:no-shortcut}
\end{figure}

In the first case, we have $Q\ast \S
=(W_i(v)\ast \S_v)\cdot e\cdot (\overline{W_j(u)}\ast \S_u)$.
Since $Q\ast \S$ satisfies (A2), so do 
$W_i(v)\ast \S_v$ and $\overline{W_j(u)}\ast \S_u$.
By the inductive assumption on $W_i(v)$ (resp.,  $W_j(u)$), we have $\S^v=\emptyset$ (resp, $\S^u=\emptyset$) if $i=1$ (resp., $j=1$) and the statement (b) holds if $i=2$ (resp., $j=2$).
In any case, we have $\lambda(W_i(v)\ast \S^v)=\lambda(W_1(v))$ and $\lambda(W_j(u)\ast \overline{\S^u})=\lambda(W_1(u))$, where $\overline{\S^u}$ is the collection of pairs of $\P$-segments in $W_j(u)$ corresponding to $\S^u$. 
Since $Q\ast \S$ satisfies (A2), then $W_1(v)\cdot e\cdot \overline{W_1(u)}$ satisfies (A2). By the choice of $i$ and $j$, we have $i=j=1$, which implies by the inductive assumption that $\S^v=\S^u=\emptyset$ holds while $\S$ is nonempty, a contradiction.

\medskip
In the rest, we consider the second case (see  Figure~\ref{fig:no-shortcut} (B)). Let $\tilde{v}$ be the first vertex of $S_{h^*}$ and  $\tilde{u}$ be the last vertex of $S'_{h^*}$. 
Then, $\tilde{v}$ and $\tilde{u}$ appear on $W_i(v)$ and $W_j(u)$, respectively.
Let $Q_{\tilde{v}}$ be the subwalk of $W_i(v)$ from the first vertex to $\tilde{v}$ (i.e., just before $S_{h^*}$)
and let $Q_{\tilde{u}}$ be the subwalk of $W_j(u)$ from the first vertex to $\tilde{u}$ (i.e., just before $\overline{S'_{h^*}}$).
Then, $Q\ast \S= (Q_{\tilde{v}}\ast \S^v)+P(\tilde{v}, \tilde{u})+(\overline{Q_{\tilde{u}}}\ast \S^u)$, which satisfies (A2), where $P$ denotes the $T$-path containing $S_{h^*}$ and $S'_{h^*}$.
Since $(Q_{\tilde{v}}\ast \S^v)+P(\tilde{v}, \tilde{u})$ and $(Q_{\tilde{u}}\ast \overline{\S^u})+P(\tilde{u}, \tilde{v})$ satisfy (A2), similarly to Claim~\ref{claim:short}, one can show the following two statements.
\begin{itemize}
\item[($\dag$)] If $W_1(\tilde{v})+P(\tilde{v},\tilde{u})$ violates (A2), then $Q_{\tilde{v}}\neq W_1(\tilde{v})$.
\item[($\ddag)$] If $W_1(\tilde{u})+P(\tilde{u},\tilde{v})$ violates (A2), then $Q_{\tilde{u}}\neq W_1(\tilde{u})$.
\end{itemize}

\begin{claim}\label{claim:short1}
At the moment when the procedure $\scan(v,e)$ is called, there is no blossom $B'\in \B$ that satisfies \mbox{$\{\tilde{v}, \tilde{u}\}\subseteq B'$} or $\{v, \tilde{u}\}\subseteq B'$.
\end{claim}
\begin{proof}
Suppose, to the contrary, there exists a blossom $B'\in \B$ satisfying $\{\tilde{v}, \tilde{u}\}\subseteq B'$ or $\{v, \tilde{u}\}\subseteq B'$. Here we assume the former because the latter case can be shown similarly. 
In the shrunk forest $F/\B$, the projections of $\tilde{v}$ and $\tilde{u}$ must coincide. Since $F[B']$ forms a subtree of $F$, it follows from Lemma~\ref{lem:shrink} that the calyx $w_{B'}$ of $B'$ is a common ancestor of $v$ and $u$. Since the stalk of $w_{B'}$ is a free edge, this implies that the calyx $w_B$ of the new blossom $B$ is a descendant of $w_{B'}$ (possibly $w_B=w_{B'}$), and hence $y\in B\setminus V(\B)$ is also a descendant of $w_{B'}$. Since $\tilde{u}$ is on $W_j(y,u)$, however, $y$ is an ancestor of $\tilde{u}\in B'$. Therefore, we have $y\in B'$, which contradicts $y\not\in V(\B)$.
\end{proof}

Let us denote by $\tilde{e}$ the jumping edge corresponding to $P(\tilde{v}, \tilde{u})$.
\begin{claim}\label{claim:short2}
We have $\tilde{e}\not\in E^*(F)$. 
\end{claim}
\begin{proof}
Suppose, to the contrary, $\tilde{e}\in E^*(F)$. Then $\tilde{e}$ is the stalk of $\tilde{v}$ or $\tilde{u}$. Without loss of generality, let $\tilde{e}$ be the stalk of $\tilde{v}$. Then, the last edge of $W_1(\tilde{v})$ is $\tilde{e}$, and hence $W_1(\tilde{v})+P(\tilde{v}, \tilde{u})$ violates (A2), which implies $Q_{\tilde{v}}\neq W_1(\tilde{v})$ by ($\dag$). Lemma~\ref{lem:extension} then implies $\tilde{v}\in V(\B)$, i.e.,  there exists a blossom $B'$ with $\tilde{v}\in B'$. 
Since the stalk $\tilde{e}$ of $\tilde{v}$ is labeled, $\tilde{v}$ is not the calyx of $B'$, and hence its parent $\tilde{u}$ also belongs to $B'$. 
We then have $\{\tilde{v}, \tilde{u}\}\subseteq B'$, which contradicts Claim~\ref{claim:short1}.
\end{proof}

We now show that, before $W_2(y)$ is determined in $\scan(v,e)$, the algorithm must have detected $\tilde{e}$ as an interior edge, which contradicts Claim~\ref{claim:short1} and completes the proof.

Since $v$ was taken out of some queue before $\scan(v,e)$ is called and $\tilde{v}$ is on $W_i(v)$,  Lemma~\ref{lem:queue1} implies that $\tilde{v}$ was taken out of some queue before $\scan(v,e)$ is called. In particular, $\tilde{v}$ was taken out of $\Phi_2$ if $Q_{\tilde{v}}\neq W_1(\tilde{v})$. This together with ($\dag$) implies that
$W_{i^*}(\tilde{v})+P(\tilde{v},\tilde{u})$ satisfies (A2) for some $i^*$ at the moment when $\tilde{v}$ was taken out of that queue.
Since $\tilde{e}\not\in E^*(F)$ by Claim~\ref{claim:short2}, then the edge $\tilde{e}$ must have been detected as an interior edge except in the case where $W_1(\tilde{u})+P(\tilde{u}, \tilde{v})$ violates (A2) and $W_2(\tilde{u})$ is not determined at that moment.
In such an exceptional case, ($\ddag$) implies $Q_{\tilde{u}}\neq W_2(\tilde{u})$, from which Lemma~\ref{lem:extension} implies that $\tilde{u}\in V(\B)$ holds at the moment $W_j(u)$ is determined and afterwords.
By Claim~\ref{claim:short1} and  Lemma~\ref{lem:queue0}, $\tilde{u}$ has been taken out of $\Phi_2$ before $\scan(v,e)$ is called, and at that moment, $\tilde{e}$ must have been detected as an interior edge. 
Thus, we have shown that in any case $\tilde{e}$ has been detected as an interior edge before $\scan(v,e)$ is called, a contradiction.
\end{proof}

So far, we have seen properties of the search walks determined in the algorithm. We now show the shortness of the walk returned by the algorithm.
\begin{theorem}\label{thm:search-main}
If the algorithm $\search$ terminates with returning a walk, then the output is a short augmenting walk in $\G^*(\P)$, which corresponds to a short augmenting walk in $\G(\P)$.
\end{theorem}
\begin{proof}
Suppose that the algorithm returns a walk $Q\coloneqq W_i(v)\cdot e\cdot \overline{W_j(u)}$ at Step (2-2).
Let $t$ and $f$ be the last vertex and edge of $Q$, respectively, and
delete $t$ and $f$ to obtain $Q'$, i.e., $Q=Q' \cdot f\cdot t$.
By applying the arguments used in the proofs of Lemmas~\ref{lem:C2} and \ref{lem:no-shortcut} to analyze Case (\rn{3}) of $\scan(v,e)$, we see that $Q'$ is an admissible walk that uses each vertex at most twice and that, for any shortcut $\S$ of $Q'$, we have $\lambda(Q'\ast \S)=\lambda(W_1(z))$, where $z$ is the last vertex of $Q'$.
Since $f$ is a free edge, we have $\lambda(W_1(z))=t$, and hence
$\lambda(Q'\ast \S)=\lambda(W_1(z))$  implies that 
$Q\ast \S=(Q'\ast \S)\cdot f\cdot t$ violates (A2).
Therefore, $Q$ has no shortcuts.
In addition, as $Q$ uses each vertex at most twice, it transfers at each vertex at most twice. 
Thus, $Q$ is a short augmenting walk in $\G^*(\P)$. By Lemma~\ref{lem:saw_2}, this output determines a short augmenting walk in $\G(\P)$. 
\end{proof}
As mentioned before, we will show in Section~\ref{sec:linear-scan} that one can improve the algorithm to run with $O(|E|)$ calls of $\scan$ by appropriately skipping redundant scans using pointers on $T$-paths.

\subsection{Optimality}
In this section, we analyze the situation when the algorithm terminates without returning a walk. We intend to show that in such a case the collection $\P$ consists of the maximum number of edge-disjoint $T$-paths.

\begin{theorem}\label{thm:tightness}
If the algorithm terminates without returning a walk,
then $\P$ consists of the maximum number of edge-disjoint $T$-paths in $G=(V,E)$.
\end{theorem}
\begin{proof}
Consider $F$ and $\B$ at the end of the algorithm. 
Recall that  the forest $F$ is a subgraph of $G^*=(V, E^*)$.
For each $s\in T$, define $X_s$ by
\[X_s=\set{v\mid v\in V(F)\setminus V(\B),~ \lambda(W_1(v))=s}.\]
As $s\in X_s$ holds for each $s\in T$, the family $\X=\{X_s\}_{s\in T}$ forms a $T$-subpartition.
Recall that the algorithm terminates when the two queues become empty and that, whenever it takes out a vertex, all incident edges in $E^*\cup L$ are scanned. Therefore, there is neither a frontier, exterior, nor interior edge with respect to $F$ and $\B$.
We show the following three claims. 
\begin{enumerate}
\setlength{\itemsep}{1mm}
\item No free edge connects $X_s$ and $X_t$ with distinct $s,t\in T$.
\item If $e\in E^*$ is a labeled edge with $\partial e=\{u,v\}$, $u\in X_s$, and $v\not\in X_s$, then $\sigma_{E^*}(e,u)=s$.
\item The vertex set $K$ of any connected component of $G^*\setminus \X$ forms a subtree of $F$ or is disjoint from $V(F)$.
In the former case, exactly one free edge connects $\bigcup_{s\in T}X_s$ and $K$.
In the latter case, there is no such free edge.
\end{enumerate}

The first claim is obvious by the nonexistence of an interior edge.

To show the second claim, suppose to the contrary that $\sigma_{E^*}(e,u)\neq s$.
In case $\sigma_{E^*}(e,v)= s$, the walk $W_1(u)\cdot e\cdot v$ satisfies (A2) and its last symbol is $s$.
As $v\not\in X_s$, this means that $e$ is a frontier or interior edge, a contradiction.
In the other case, i.e., when $\sigma_{E^*}(e,v)\neq s$, the vertex $u\in X_s$ has a selfloop with two symbols distinct from $s$,
which is an interior edge, a contradiction.

For the third claim, note that every vertex $v$ in $G^*\setminus \X$
is either $v\in V(\B)$ or $v\in V\setminus V(F)$.
By the nonexistence of a frontier edge, there is no edge connecting $V(\B)$ and $V\setminus V(F)$.
Therefore, the vertex set $K$ of each connected component is included in $V(\B)$ or $V\setminus V(F)$.

We first consider the case $K\subseteq V(\B)$.
Since each blossom in $\B$ is a subtree of $F$ by the property (B) in Lemma~\ref{lem:subtree}, it is connected in $G^*\setminus \X$.
Then, $K$ is the union of one or more maximal members of $\B$.
Suppose, to the contrary, that $K$ is not a subtree of $F$, i.e., the restriction of $F$ to $K$ is not connected.
As $K$ is connected in $G^*\setminus \X$, there is an edge $e\in E^*\setminus E^*(F)$ connecting two disjoint blossoms $B, B'\subseteq K$. Then, there is no blossom containing the two end-vertices of $e$ and each of them has two search walks,
which means that $e$ is an interior edge, a contradiction.
Thus, $K$ forms a subtree of $F$ that consists of one or more blossoms in $\B$.
The root of $K$ is the calyx of some of those blossoms, and hence its stalk is a free edge by (B).
Also, any other edge connecting $K$ and $V(F)\setminus V(\B)$
should be a labeled edge, because otherwise it is an exterior or interior edge.
In case $K\subseteq V\setminus V(F)$,
there is no free edge connecting $K$ and $\bigcup_{s\in T}X_s$,
because otherwise we have a frontier edge.

Using the above three claims, we show
$|\P|=\frac{1}{2}\left[\sum_{s\in T}d(X_s)-\odd(G\setminus\X)\right]$,
which guarantees the maximality of the number of $T$-paths by Lemma~\ref{lem:Mader}.
In the summation $\sum_{s\in T}d(X_s)$, every $T$-path  is counted exactly twice by the second claim.
This implies that labeled edges are counted $2|\P|$ times in total.
Also, free edges are counted $\odd(G\setminus\X)$ times by the first and third claims.
Thus, we have $\sum_{s\in T}d(X_s)=2|\P|+\odd(G\setminus\nobreak\X)$, which is equivalent to the required equality.
\end{proof}

\section{Efficient Implementation}\label{sec:implementation}
Combining results in Sections~\ref{sec:augmentation} and \ref{sec:search},
we obtain an algorithm to compute the maximum number of edge-disjoint $T$-paths. The algorithm starts with $\P\coloneqq \emptyset$. It then repeatedly  finds a short augmenting walk $Q$ in $\G(\P)$ and updates $\P$ with $Q$ by the switching operation. The algorithm terminates when no augmenting walk is found.

By (A3) and Lemma~\ref{lem:property_of_shortness}, the length of a short augmenting walk $Q$ is $O(|E|)$, and hence the switching operation to obtain $\P\triangle Q$ requires $O(|E|)$ time.
In Section~\ref{sec:search}, we have described the algorithm $\search$ in a manner suitable for the correctness proof.
In this section, we present its efficient implementation, which runs in $O(|E|)$ time.

\subsection{Reducing the Number of Scans}\label{sec:linear-scan}
In the algorithm $\search$, each vertex is put into each of the two queues $\Phi_1$ and $\Phi_2$ at most once, and an edge is scanned only when one of its end-vertices is taken out of some queue. Therefore, each edge $e\in E^*\cup L$ is scanned at most four times. Since $|E^*|=O(|E|^2)$ and $|L|=O(|E|)$, this means that the procedure $\scan$ is called $O(|E|^2)$ times in the algorithm. 
Here we show that we can reduce the number of scans to $O(|E|)$ by skipping redundant scans. 

To this end, we introduce pointers that move on the $T$-paths in $\P$.
For each $P\in \P$ with terminals $s$ and $t$, 
the algorithm maintains two pointers $\eta_{P}^{st}$ and $\eta_{P}^{ts}$, each of which points a vertex on $P$.
Initially, $\eta_{P}^{st}$ (resp., $\eta_{P}^{ts}$) points $t$ (resp., $s$), and then it repeatedly gets closer to $s$ (resp., $t$)
during the algorithm.

The new version of $\search$ is obtained by adding the initialization of pointers mentioned above to Step~1 and modifying Step~(2-2) as follows. In the original $\search$, Step~(2-2) calls $\scan(v,e)$ for every edge $e\in E^*\cup L$ incident to the dequeued vertex $v$. In the new version, Step~(2-2) is replaced with the following one.
\begin{enumerate}
\setlength{\leftskip}{10mm}
\item[(2-$2^\star$)] For each free edge $e$ incident to $v$, call $\scan(v,e)$.\\
For each $P\in \P$ containing $v$, let $s$ and $t$ be its end-terminals and do the following. 
\begin{itemize}
\setlength{\itemsep}{0mm}
\setlength{\leftskip}{5mm}
\item If $\lambda(W_i(v))\neq s$ for some $i\in \{1,2\}$ and $v$ is an intermediate vertex of $P(s, \eta_{P}^{st})$, then apply the following.
\begin{itemize}
\setlength{\itemsep}{0mm}
\setlength{\leftskip}{2mm}
    \item For each vertex $u$ on $P(v, \eta_{P}^{st})$,
let $e\in E^{*}\cup L$ be the edge corresponding to the $st$-directed subpath/selfloop $P(v,u)$ and call $\scan(v,e)$. 
\item Update $\eta_{P}^{st}\gets v$.
\end{itemize}
\item Do the same with $s$ and $t$ being all interchanged.
\end{itemize}
\end{enumerate}	

Let us call this version of the algorithm $\searchfast$.
Consider the moment when a vertex $v$ is taken out of some queue and let $P$ be a $T$-path that contains $v$. 
In the original $\search$, all edges $e\in E^*\cup L$ coming from $P$ are scanned. In contrast, 
$\searchfast$ skips to call $\scan(v, e)$ for edges $e$ that do not satisfy the conditions described in Step (2-$2^\star$) above. 

Observe that $\search$ has flexibility 
in the order of edges to be scanned in Step~(2-2). 
We intend to show that, in some execution of $\search$, the edges skipped in $\searchfast$ are never detected as frontier nor interior edges.
This implies that $\searchfast$ indeed simulates some execution of $\search$.
To this end, we prepare some observations on the pointers.
\begin{lemma}\label{lem:scan1}
In $\searchfast$, the following two statements hold for any $P\in \P$ with terminals $s$ and $t$.
\end{lemma}
\begin{enumerate}
	\item[(a)]  $\lambda(W_1(\eta_{P}^{st}))\neq s$ or $\eta^{st}_{P}\in V(\B)$.
	\item[(b)] For any internal vertex $u$ of $P(\eta_{P}^{st}, t)$, we have $\lambda(W_1(u))=t$ or $\{u,\eta_{P}^{st}\}\subseteq B$ for some $B\in\B$.
\end{enumerate}
\begin{proof}
By the description of Step~(2-$2^\star$), the pointer $\eta^{st}_P$ is updated to a vertex $v$ only if $\lambda(W_i(v))\neq s$ for some $i\in \{1,2\}$, which implies (a). To see (b), consider the update of $\eta_P^{st}$ that makes $u$ an internal vertex of $P(\eta_P^{st},t)$ for the first time. Let $v$ be the vertex such that $\eta_P^{st}=v$ as a result of this update. Immediately before the update, $\scan(v,e)$ is applied to a labeled edge $e\in E^*$ with $\partial e=\{v,u\}$. If $u\notin V(F)$ at the moment, $e$ is a frontier edge, and hence $\lambda(W_1(u))=t$ after $\scan(v,e)$. If $u\in V(F)$, $\lambda(W_1(u))\neq t$, and there is no blossom that includes $\{v,u\}$ at the moment, then $e$ is an interior edge and hence $\scan(v,e)$ produces a new blossom that includes $\{v,u\}$. Thus, in any case, $\lambda(W_1(u))=t$ or $\{u,\eta_P^{st}\}\subseteq B$ for some $B\in\B$ immediately after the update. 
It then suffices to show that, once $\{u,\eta_{P}^{st}\}$ gets included in a blossom, this situation is preserved until the end of the algorithm.
Suppose that $\eta_{P}^{st}$ is updated from $v_{1}$ to $v_{2}$ and that $\{u,v_1\}\subseteq B$ holds for some $B\in\B$ immediately before the update, which implies $v_1\in V(\B)$.
In the case where $\{v_1,v_2\}\subseteq B'$ for some blossom $B'\in\B$, there should be some $B''\in \B$ with $\{u,v_1,v_2\}\subseteq B''$ as $\B$ is a laminar family. 
In the other case, the labeled edge corresponding to $P(v_2, v_1)$ is detected as an interior edge,
which produces a blossom including $\{u,v_1,v_2\}$. Thus, the situation is preserved.
\end{proof}

\begin{lemma}\label{lem:scan2}
In $\searchfast$, for any $T$-path $P\in \P$ and its terminals $s$ and $t$,
if $s, \eta_{P}^{st}, \eta^{ts}_{P}, t$ appear in this order on $P$ and $\eta_{P}^{st}\neq \eta^{ts}_{P}$,
then there exists a blossom containing all vertices on  $P(\eta_{P}^{st}, \eta^{ts}_{P})$.
\end{lemma}
\begin{proof}
	At the beginning of the algorithm, we have $\eta^{ts}_{P}=s$ and $\eta_{P}^{st}=t$.
	Consider the moment when $\eta_{P}^{st}$ becomes closer to $s$ than $\eta^{ts}_{P}$ for the first time.
	It happens when $\eta_{P}^{st}$  or $\eta^{ts}_{P}$ is updated.
	Without loss of generality, suppose that this happens when $\eta_{P}^{st}$ is updated from $v_1$ to $v_2$.
	That is, before the update,  $s, v_2, \eta^{ts}_{P}, v_1=\eta_{P}^{st}, t$ appear in this order on $P$ and $v_2\neq \eta^{ts}_{P}$.
	The algorithm performs the update after taking $v_2$ out of some queue and scanning edges incident to $v_2$, including
	the labeled edge $e$ corresponding to the $st$-directed subpath $P(v_2, \eta^{ts}_{P})$. By Lemma~\ref{lem:scan1}\,(a), we have $\lambda(W_1(\eta_{P}^{ts}))\neq t$ or $\eta^{ts}_{P}\in V(\B)$, 
	and hence $e$ is detected as an interior edge unless there is a blossom containing both $v_2$ and $\eta^{ts}_{P}$.
	After $\eta_{P}^{st}$ is updated to $v_2$, there exists a blossom $B^*$ with $\{\eta_{P}^{st}, \eta^{ts}_{P}\}\subseteq B^*$.
	Also, by Lemma~\ref{lem:scan1}\,(b), for each vertex $u$ on $P$ between $\eta_{P}^{st}$ and $\eta^{ts}_{P}$, there exists a blossom $B\in \B$ with $\{u, \eta_{P}^{st}\}\subseteq B$ or $\{u, \eta^{ts}_{P}\}\subseteq B$.
	(To see this, note that $\lambda(W_1(u))=t$ and $\lambda(W_1(u))=s$ cannot hold simultaneously).
	Since we have $\{\eta_{P}^{st}, \eta^{ts}_{P}\}\subseteq B^*\in\B$ and $\B$ is a laminar family, this implies that there exists a blossom containing all vertices on $P(\eta_{P}^{st}, \eta^{ts}_{P})$. For those vertices $u$, as shown in the proof of Lemma~\ref{lem:scan1}, it is preserved through the algorithm that there is a blossom including $\{u, \eta_{P}^{st},\eta^{ts}_{P}\}$. In addition, every vertex that is added to $P(\eta_{P}^{st}, \eta^{ts}_{P})$ by the later updates of $\eta^{st}_P$ and $\eta^{ts}_P$ also satisfies that property by Lemma~\ref{lem:scan1}\,(b). 
\end{proof}

We are now ready to show that $\searchfast$ simulates some execution of $\search$.
\begin{lemma}\label{lem:scan-linear-correct}
The output of $\searchfast$ is the same as that of some execution of $\search$.
\end{lemma}
\begin{proof}
Consider a possible execution of $\search$ such that, in Step~(2-2), the algorithm first scans edges according to Step~(2-$2^\star$) and then scans remaining edges in $E^*\cup L$ incident to $v$.
To see the statement, it suffices to show that in the latter part no edge is detected as a frontier edge nor interior edge. (Note that those remaining edges are labeled and cannot be exterior.)

Suppose that an edge $e\in E^*\cup L$ is incident to $v$ but not scanned in Step~(2-$2^\star$) when $v$ is taken out of some queue.  Set $s\coloneqq \sigma_{E^*}(e,v)$ and $t\coloneqq \sigma_{E^*}(e,u)$ (or $st\coloneqq \sigma_L(e)$ if $e\in L$). Let $P\in \P$ be the $T$-path from which $e$ comes. As $e$ is not scanned, one of the following holds just before Step~(2-$2^\star$) is applied: (i)  $\lambda(W_1(v))=s$ and $v\not\in V(\B)$, (ii) $v$ is on $P(\eta^{st}_P, t)$, or (iii) $u$ is an internal vertex of $P(\eta_P^{st}, t)$. In case (i), clearly $e$ is neither frontier nor interior. In case (ii), 
since $P(v,u)$ is $st$-directed, the condition (iii) also holds unless $e$ is the selfloop at $\eta^{st}_P$, where such a selfloop cannot be interior as it had been scanned when $\eta^{st}_P$ was updated last time. 

Therefore, it suffices to consider the case where (i) fails and (iii) holds.
By Lemma~\ref{lem:scan1}\,(b), $e$ is not frontier. Suppose, to the contrary, that  $e$ is interior just after Step (2-$2^\star$) is applied to $v$. By Lemma~\ref{lem:scan1}\,(b) and the definition of interior edges, we have $u\in V(\B)$, $v\neq u$, and there is no blossom $B\in \B$ with $\{v, u\}\subseteq B$. By Lemma~\ref{lem:queue0}, this implies that $u$ has been  taken out of $\Phi_2$ before the dequeue of $v$, and hence $\eta^{ts}_{P}$ currently points $u$ or a vertex closer to $t$.
Also, since (i) fails, after Step (2-$2^\star$) is applied to $v$, the pointer $\eta^{st}_P$ points $v$ or a vertex closer to $s$. As we have $v\neq u$, these imply that $s, \eta^{st}_P, v, u, \eta^{ts}_P, t$ appear in this order on $P$ and $\eta^{st}_P\neq \eta^{ts}_P$. By Lemma~\ref{lem:scan2}, then there exists a blossom $B$ with $\{v, u\}\subseteq B$, a contradiction.
\end{proof}

The following fact is  observed from the description of the algorithm $\searchfast$.  
\begin{lemma}\label{lem:scan_linear}
The procedure $\scan$ is called $O(|E|)$ times in $\searchfast$.
\end{lemma}
\begin{proof}
Since each vertex is put into each of the two queues at most once, each free edge is scanned at most twice from each end-vertex, and hence at most four times in total. 
For any $T$-path $P\in \P$, let $E(P)$ be the set of edges on $P$, and let $s$ and $t$ be the end-terminals of $P$.
We say that an edge $e\in E^*\cup L$ coming from $P$ is scanned in $st$-direction if $\scan(v,e)$ is called with either $\sigma_{E^*}(e,v)=s$ or $\sigma_{L}(e)=st$. 
Note that, at Step (2-$2^\star$) of $\searchfast$, the number of edges scanned in $st$-direction is the number of vertices in $P(v, \eta_{P}^{st})$,
and after those scans, $\eta_{P}^{st}$ is updated to point $v$.
As the pointer $\eta_P^{st}$ moves monotonically towards $s$, it follows that edges coming from $P$ are scanned in $st$-direction $O(|E(P)|)$ times. 
Thus, the total number of scans during the algorithm is $O(|E|)$.
\end{proof}

\subsection{Implementation Details}
We have shown in Lemma~\ref{lem:scan_linear} that our search algorithm runs with $O(|E|)$ calls of the procedure $\scan$. In this section, we provide other details to attain an $O(|E|)$ time implementation of the search algorithm $\searchfast$.

Our implementation does not store search walks explicitly. Instead, we store $\sym(u)\coloneqq \lambda(W_1(u))$ for each $u\in V(F)$.
Recall that, in $\searchfast$,  a primary search walk $W_1(u)$ is defined when the stalk $e$ of $u$ is detected as a frontier edge. Instead, our implementation keeps $\sym(u)$, which can be determined by $\sym(u)\coloneqq \sigma_{E^*}(e,u)$ if $e$ is labeled and $\sym(u)\coloneqq \sym(v)$ if $e$ is a free edge with $\partial e=\{v,u\}$.
For any edge $e$ with $\partial e=\{v,u\}$, we can easily discern from $\sym(v)$ whether $W_1(v)\cdot e\cdot u$ satisfies (A2) or not (where $W_1(v)$ is assumed to satisfy (A2)). 
Furthermore, when $W_1(v)\cdot e\cdot u$ violates (A2), the condition $\lambda(W_1(v))\neq \lambda(W_2(v))$ implies that $W_2(v)\cdot e\cdot u$ must satisfy (A2).
Thus, it is sufficient to store $\sym(v)$ to detect frontier and interior edges and to specify the indices $i$ and $j$ in $\scan$.
In the following, we describe how to manage blossoms and how to reconstruct the output walk to be returned by $\searchfast$.

\paragraph{Storing blossoms.}
Our implementation stores the blossom family $\B$ as a collection of the indices (names) of the blossoms. 
An edge $e_B$ and a vertex $w_B$ are associated with each $B\in \B$.
For each vertex $v\in V$, we store $\minimal(v)\in \B\cup\{\varnothing\}$, which specifies the minimal blossom containing $v$ if $v\in V(\B)$ and $\varnothing$ otherwise.

We also store a family $\B^*$, which is a partition of $V$ consisting of the maximal members of $\B$ and singletons $\{x\}$ of all $x\in V\setminus V(\B)$.
Each member of $\B^*$ is identified by its representative element,
which is the calyx $w_B$ for a blossom $B\in \B^*$ and is $x$ for a singleton $\{x\}\in \B^*$.
Since each member of $\B^*$ forms a subtree of the forest $F$, the family $\B^*$ can be maintained by using the data structure {\em incremental tree set union} \cite{GT85}, which enables us to apply the following operations so that the total time complexity is linear to the total number of queries.
\begin{itemize}
\item $\grow(v,u)$: Add a new vertex $u$ to $F$ with $v$ being the parent of $u$.
\item $\find(v)$: Return the representative element of the member of $\B^*$ containing $v$.
\item $\link(v)$: Unite the member in $\B^*$ containing $v$ and the member containing the parent of $v$.
\end{itemize}
In our implementation, we call $\grow$ when a frontier edge is detected, and hence $\grow$ is called at most $|V|$ times.
When an exterior edge is detected, we call $\link$ just once. 
When an interior edge is detected, we create a new blossom and update $\B^*$ calling $\link$ and $\find$ multiple times, as described below. 

\paragraph{Creating a new blossom.}
When an edge $e$ is detected as an interior edge, 
a new blossom $B$ with $e_B=e$ is produced or a walk is returned by $\searchfast$.
In the former case, we have to set $\minimal(y)\coloneqq B$ for each $y\in B\setminus V(\B)$ and have to update $\B^*$.

For this purpose, we assign to each vertex $u\in V(F)$ an integer  $\seq(u)$ which reflects the order that $u$ is added to $F$. At the beginning of $\searchfast$, we set $\seq(t)=0$ for every $t\in T$ and set a counter ${\sf cnt}\gets 1$. Whenever a new vertex $u$ is added to $F$, we set $\seq(u)\coloneqq {\sf cnt}$ and increment ${\sf cnt}$. 

We now introduce an operation $\ascend(x)$ that works for $x\in V(F)\setminus T$ as follows.
\begin{itemize} 
\item Call $\link(x)$ and set $x\gets \find(x)$.
\item In addition, if $\minimal(x)=\varnothing$, then set $\minimal(x)\coloneqq B$.
\end{itemize}
When we detect an interior edge $e$, we update the values of $\minimal$ and the data structure for $\B^*$ by the following procedure $\update(e)$, which uses the operation $\ascend$.
The procedure $\reconst$ in Step~4 will be described later.
Observe that, if Step~2 ends with $x=y$, then the vertex $x$ is the lowest common ancestor of $v$ and $u$ or the calyx of the blossom containing it.
\makeatletter
\renewcommand{\ALG@name}{Procedure}

\begin{algorithm}[h]\caption{~$\update(e)$}
\begin{enumerate}
\item Create a new name $B$ of a blossom. Set $e_B\coloneqq e$.
Let $v$ and $u$ be the vertices with $\partial e=\{v,u\}$. If $\minimal(v)=\varnothing$, then $\minimal(v)\coloneqq B$.
If $\minimal(u)=\varnothing$, then $\minimal(u)\coloneqq B$.
\item Set $x\gets \find(v)$ and $y\gets \find(u)$. While $x\neq y$ and $\{x,y\}\not\subseteq T$, do the following:
\begin{itemize}
\item If $\seq(x)>\seq(y)$, then call $\ascend(x)$.
\item If $\seq(x)<\seq(y)$, then call $\ascend(y)$. 
\end{itemize}
\item If $x=y\not\in T$, then set $z\gets x$. While the stalk of $z$ is labeled, call $\ascend(z)$.  
Set $w_B\coloneqq z$.
\item Otherwise (i.e., when $\{x,y\}\subseteq T$),  reconstruct a short augmenting walk by $\reconst(e,x,y)$.
\end{enumerate}
\end{algorithm}
We now analyze the total numbers of queries for $\link$ and $\find$ through $\searchfast$.
Since $\link$ is applied to the representative elements of members of $\B^*$ and they are no more representative elements afterwards, $\link$ is called $O(|V|)$ times.
Note that the number of queries of $\find$ in the creation of one blossom is at most two larger than that of $\link$. 
As $\B$ is a laminar family, the number of blossoms is at most $2|V|$.  Thus, $\find$ is called $O(|V|)$ times in total in creations of blossoms.
We remark that $\find$ is also used to check whether an edge is interior or not because the existence of $B\in\B$ with $\{v,u\}\subseteq B$ 
is equivalent to $\find(v)=\find(u)$. The number of queries for this purpose is $O(|E|)$ 
because $\scan$ is called $O(|E|)$ times in $\searchfast$ by Lemma~\ref{lem:scan_linear}.

\paragraph{Reconstruction of the output walk.}
Since our implementation does not store search walks explicitly, we need to reconstruct the walk $Q\coloneqq W_i(v)\cdot e\cdot \overline{W_j(u)}$ when $\searchfast$ terminates at Step~2.
Before describing a procedure for reconstruction, we observe the following property of $Q$.   
For any blossom $B\in \B$ that intersects $Q$, all appearances of vertices in $B$ forms one subwalk in $Q$, and this subwalk starts or ends at the calyx $w_B$. 
This follows from the definitions of search walks in $\scan$ and the property ($\star$) in Lemma~\ref{lem:shrink}.

The procedure $\reconst$ described below has a recursive form.
If $\reconst(e,x,y)$ is called in $\update(e)$ for an edge $e$ with $\partial e=\{v,u\}$, then it returns $Q= W_i(v)\cdot e\cdot \overline{W_j(u)}$ or its reverse for some indices $i$ and $j$, where $x$ and $y$ are the first and last vertices of the returned walk, respectively. In this case, the walks $Q_v$ and $Q_u$ constructed in the procedure are $\overline{W_i(v)}$ and $\overline{W_j(u)}$, respectively.
If~$\reconst$ is called recursively, it is used in the form of $\reconst(e_{B}, v, w_{B})$ for the edge $e_B$ associated to some blossom $B$, the calyx $w_B$, and some $v\in B$.
The walk returned by $\reconst(e_{B}, v, w_{B})$ corresponds to the subwalk of $Q$ consisting of all vertices of $B$ appearing in $Q$.
For simplicity, we use the dummy symbol $*$, which satisfies $*\neq t$ for all terminals $t\in T$.

\begin{algorithm}[h]\caption{~$\reconst(e,x,y)$}
\begin{enumerate}
\item Let $v$ and $u$ be the vertices with $\partial e=\{v,u\}$. Set walks $Q_v\gets v$ and $Q_u\gets u$. 
Set $(s,t)$ by
$$(s,t)\gets \left\{\begin{array}{ll}
(\sigma_{E^*}(e,v), \sigma_{E^*}(e,u)) & (\text{$e$ is labeled}) \\
(*, *) & (\text{$e$ is free and $\sym(v)\neq \sym(u)$}) \\
(\sym(v), *) & (\text{$e$ is free,  $\sym(v)= \sym(u)$, and $u\not\in V(\B)$})\\
(*, \sym(u)) & (\text{otherwise}).
\end{array}\right.$$
\item While $v \not\in \{x,y\}$, do the following: 
\begin{enumerate}
\item Let $\tilde{e}$ and $\tilde{v}$ be the stalk and parent of $v$, respectively.
\item If $s \neq \sym(v)$ or $\tilde{e}$ is an edge detected as exterior in $\searchfast$, then  $Q_v\gets Q_v\cdot \tilde{e}\cdot \tilde{v}$ and $v\gets \tilde{v}$. 
In particular, in case $\tilde{e}$ is labeled, set $s\gets \sigma_{E^*}(\tilde{e},\tilde{v})$.
\item Otherwise, set $B'\gets \minimal(v)$,
let $Q'$ be the walk returned by $\reconst(e_{B'}, v, w_{B'})$,
and set $Q_v\gets Q_v+Q'$, $v\gets w_{B'}$, and $s\gets *$.
\end{enumerate}
\item
While $\{v, u\}\neq \{x,y\}$, do (a)--(c) with $u$, $Q_u$, and $t$ in places of $v$, $Q_v$, and $s$, respectively.

\item If $(v,u)=(x,y)$, then return $Q\coloneqq \overline{Q_v}\cdot e\cdot Q_u$.  
Otherwise (i.e., if $(v,u)=(y, x)$), return $\overline{Q}$.  
\end{enumerate}
\end{algorithm}
When called in $\update(e)$, the procedure $\reconst(e,x,y)$ returns the walk $W_i(v)\cdot e\cdot \overline{W_j(u)}$ in time linear to the length of the returned walk, which is $O(|V|)$ by Lemma~\ref{lem:C1}.
It takes $O(|E|)$ time to transform the walk $W_i(v)\cdot e\cdot \overline{W_j(u)}$ in $\G^*(\P)$ to the corresponding short augmenting walk in the original labeled graph $\G(\P)$. Thus the additional running time required to obtain a short augmenting walk in $\G(\P)$ after $\searchfast$ is $O(|E|)$.

\bigskip
By combining the implementation details explained above with  Lemmas~\ref{lem:scan-linear-correct} and \ref{lem:scan_linear},
we obtain the following theorem.

\begin{theorem}
The algorithm $\searchfast$ runs in $O(|E|)$ time and outputs a short augmenting walk in $\G^*(\P)$ if it admits one and returns a message``no augmenting walk'' if $\P$ consists of the maximum number of edge-disjoint $T$-paths.
\end{theorem}

We now summarize the above results to analyze the overall running time of our algorithm for finding maximum number of edge-disjoint $T$-paths. Given a collection $\P$ of edge-disjoint $T$-paths, $\searchfast$ finds a short augmenting walk $Q^*$ in $\G^*(\P)$ or certify the nonexistence of an augmenting walk in $O(|E|)$ time. 
Then one can transform $Q^*$ to a short augmenting walk $Q$ in $\G(\P)$ in $O(|E|)$ time. It also takes $O(|E|)$ time to obtain $\P\triangle Q$ by the switching operation. Thus the running time of our algorithm is 
$O(|E|)$ per augmentation. Since the number of augmentation is bounded by the maximum number of edge-disjoint $T$-paths, which is $O(|E|)$, we may conclude that our algorithm runs in $O(|E|^2)$ time. 

\begin{theorem}
Given a multigraph $G=(V,E)$ and a set $T\subseteq V$ of terminals, one can find maximum number of edge-disjoint $T$-paths in $O(|E|^2)$ time.
\end{theorem}

\section{The Edmonds--Gallai Type Decomposition}\label{sec:EG}
Analogously to the Edmonds--Gallai decomposition for maximum matching, Seb\H{o} and Szeg\H{o}~\cite{SS04}
introduced a canonical decomposition for maximum edge-disjoint $T$-paths as follows. 
Let $k$ be the maximum number of edge-disjoint $T$-paths in $G$. For a terminal $s\in T$, 
a vertex $u\in V$ is said to be {\em $s$-rooted} if there exists a family $\P$ of $k$ edge-disjoint $T$-paths 
and an $s$-$u$ path that is edge-disjoint from all the members in $\P$. 
A terminal $s\in T$ is $s$-rooted by definition, and not $t$-rooted for any other terminal $t\in T\setminus\{s\}$.  

We say that a $T$-subpartition $\Y$ is {\em optimal} if it minimizes $\kappa(\Y)$ among all the $T$-subpartitions,
where $\kappa(\Y)$ is defined in Theorem~\ref{th:Mader}.
The following observation is immediate therefrom. 

\begin{lemma}[{\cite[Theorem 5]{SS04}}]
\label{lem:EGD}
For any optimal $T$-subpartition $\Y=\{Y_s\}_{s\in T}$, let $Y_0$ denote the union of the vertex sets of 
even components in $G\setminus\Y$. Then an arbitrary vertex $v\in Y_0$ is not $s$-rooted for any $s\in T$. In addition, 
an arbitrary vertex $v\in Y_t$ is not $s$-rooted for any $s\in T\setminus\{t\}$.  
\end{lemma}

Let $V_s$ denote the set of vertices that are $s$-rooted and not $t$-rooted for any other $t\in T$. 
Then each $s\in T$ is $s$-rooted by definition, and the family $\V\coloneqq \{V_s\}_{s\in T}$ forms a $T$-subpartition. 
We denote by $V_0$ the set of vertices that are not $s$-rooted for any $s\in T$. 
A vertex $v\in V$ is said to be {\em double-rooted} if it is $s$-rooted and $t$-rooted for any pair of distinct terminals 
$s,t\in T$. The set of double-rooted vertices is denoted by $V_\infty$. 

Seb\H{o} and Szeg\H{o}~\cite{SS04} showed that the $T$-subpartition $\V$ provides a canonical decomposition that is 
analogous to the Edmonds--Gallai decomposition.  
In this section, we show that the $T$-subpartition obtained at the end of our algorithm coincides with $\V$, 
which means that our algorithm finds not only maximum edge-disjoint $T$-paths but also the Edmonds--Gallai type decomposition. 

Suppose that $\P$ consists of maximum edge-disjoint $T$-paths. Then $\search$ applied to $\G(\P)$ terminates 
without finding an augmenting walk. Define a $T$-subpartition $\X=\{X_s\}_{s\in T}$ as in the proof of Theorem~\ref{thm:tightness}.
That is, $X_s\coloneqq \set{v\in V| v\in V(F)\setminus V(\B),~ \lambda(W_1(v))=s}$ for each $s\in T$.
In addition, set $X_\infty\coloneqq V(\B)$ and $X_0\coloneqq V\setminus V(F)$.
Then, $\X$ is an optimal $T$-subpartition by Theorem~\ref{thm:tightness}.
The proof of this theorem also implies that  $X_\infty$ is the union of the vertex sets of odd components in $G\setminus\X$
and $X_0$ is the union of the vertex sets of even components in $G\setminus\X$. 
\begin{lemma}
\label{lem:rooted}
The $T$-subpartition $\X$ and the vertex subset $X_\infty$ satisfy the following. 
\begin{itemize}
\item For each terminal $s\in T$, every vertex in $X_s$ is $s$-rooted. 
\item Every vertex in $X_\infty$ is double-rooted. 
\end{itemize}
\end{lemma}
\begin{proof}
Take any terminal $s\in T$ and any vertex  $u\in X_s$.
For an arbitrary terminal $r\in T$ with $r\neq s$, consider a graph $G_{ur}=(V,E\cup\{f\})$ obtained by attaching a new edge $f$ with $\partial f=\{u,r\}$ to $G$. 
In addition, let $\G_{ur}(\P)$ be obtained from $\G(\P)$ by attaching a free edge $f$ to $\G(\P)$.
By Lemmas~\ref{lem:C2}, \ref{lem:no-shortcut}, and $\lambda(W_1(u))=s\neq r$,
the walk $W_1(u)\cdot f\cdot r$ determines a short augmenting walk in $\G_{ur}(\P)$, which we denote by $Q_{ur}$. 
Then,  Theorem~\ref{thm:SAW1} implies that $\P\triangle Q_{ur}$ gives $k+1$ edge-disjoint $T$-paths. 
Among them, let $P$ be the one that contains $f$.
Then, the end-terminals of $P$ are $r$ and some $t\in T$ with $t\neq r$.
If $t=s$, deleting $f$ and $r$ from $P$ yields an $s$-$u$ path that is edge-disjoint from the other $k$ edge-disjoint $T$-paths.
We now prove $t=s$ to complete  the first claim.

Suppose to the contrary that $t\neq s$.
Consider a graph $G_{ut}=(V, E\cup \{f'\})$ with a new edge $f'$ with $\partial f'=\{u,t\}$ and let $\G_{ut}(\P)$ be obtained from $\G(\P)$ by attaching $f'$.
Since $\lambda(W_1(u))=s\neq t$, we can similarly show that $W_1(u)\cdot f'\cdot t$ determines a short augmenting walk $Q_{ut}$,
and $\P\triangle Q_{ut}$ yields $k+1$ edge-disjoint $T$-paths. 
Let $P'$ be the one that contains $f'$.
Since $Q_{ur}$ and $Q_{ut}$ differ only at the last edges $f,f'$ and vertices $r,t$, the definition of switching operation implies that
$P'$ is obtained from $P$ by replacing $f$ and $r$ with $f'$ and $t$, respectively, which implies that the end-terminals of $P'$ are both $t$, a contradiction.    

To show the second claim, take any  $u\in X_\infty$ and let $s\coloneqq \lambda(W_1(u))$.
By applying the argument above, we can obtain that $u$ is $s$-rooted.
Next, consider a graph $G_{us}=(V,E\cup\{f''\})$, where $f''$ is a new edge with $\partial f''=\{u,s\}$.
Then, the walk $R_{us}\coloneqq W_2(u)\cdot f''\cdot s$ determines a short augmenting walk in $\G_{us}(\P)$ by
Lemmas~\ref{lem:C2}, \ref{lem:no-shortcut}, and $\lambda(W_2(u))\neq s$.
Then, Theorem~\ref{thm:SAW1} implies that $\P\triangle R_{us}$ yields $k+1$ edge-disjoint $T$-paths. 
Let $P''$ be the one that contains $f''$.
Then the end-terminals of $P''$ are $s$ and some $t\in T$ with $t\neq s$.
Deleting $f''$ and $s$ from $P''$ yields a $t$-$u$ path that is edge-disjoint from the other $k$ edge-disjoint $T$-paths.
Therefore, $u$ is $t$-rooted. Since $s\neq t$, we may conclude that $u$ is double-rooted. 
\end{proof}

Combining Lemmas~\ref{lem:EGD} and \ref{lem:rooted} with Theorem~\ref{thm:tightness}, we obtain the following theorem, 
which shows that the Edmonds--Gallai type decomposition is obtained at the end of our algorithm for maximum edge-disjoint 
$T$-paths.   
\begin{theorem}
\label{th:EGD}
The $T$-subpartition $\X$ provides the Edmonds--Gallai type decomposition, 
namely $\X=\V$, $V_0=X_0$, and $V_\infty=X_\infty$ hold.  
\end{theorem}
\begin{proof}
Since $\X$ is an optimal $T$-subpartition by Theorem~\ref{thm:tightness}, it follows from Lemma~\ref{lem:EGD} that 
the vertices in $X_s$ are not $t$-rooted for any $t\in T\setminus\{s\}$, which together with Lemma~\ref{lem:rooted} implies 
that $X_s\subseteq V_s$ holds. 
Lemma~\ref{lem:EGD} and Theorem~\ref{thm:tightness} also imply that a vertex in $X_0$ is not $s$-rooted for any $s\in T$.
Hence we have $X_0\subseteq V_0$. By Lemma~\ref{lem:rooted}, each vertex in $X_\infty$ is double-rooted, which implies that 
$X_\infty\subseteq V_\infty$ holds. Since the union of $X_0$,  $X_\infty$, and  $X_s$ for all $s\in T$ coincides with the 
entire vertex set $V$, we have $\X=\V$, $X_0=V_0$, and $X_\infty=V_\infty$.  
\end{proof}

This theorem and the proof of Theorem~\ref{thm:tightness} immediately provide an alternative proof to the structure theorem for edge-disjoint $T$-paths by 
Seb\H{o} and Szeg\H{o}~\cite{SS04}. 
\begin{corollary}[{\cite[Theorem 7]{SS04}}]
\label{cor:EGD}
The family $\V=\{V_s\}_{s\in T}$ is an optimal $T$-subpartition. The set $V_0$ coincides with the union of 
the vertex sets of the even components of $G\setminus \V$, and $V_\infty$ is the union of the vertex sets of the 
odd components of $G\setminus\V$. 
\end{corollary}

\section{Extension to Integer Free Multiflow}\label{sec:IFMF}
In this section, we present a strongly polynomial algorithm for the integer free multiflow problem, 
which is  a capacitated version of the maximum edge-disjoint $T$-paths problem.  

Let $G=(V,E)$ be a simple undirected graph with a set $T\subseteq V$ of terminals and an edge-capacity function $c:E\to\Zp$. 
A {\em multiflow} $\F$ in the network $N=(G,T,c)$ is represented as a pair  $\F=(\braket{P_1,\dots,P_k},\braket{\alpha_1,\dots\alpha_k})$ 
of $T$-paths $P_1,\dots,P_k$ in $G$ and positive real coefficients $\alpha_1,\dots\alpha_k$ such that
$$\zeta_{\F}(e)\coloneqq \sum\set{\alpha_i\mid e\in E(P_i)}\leq c(e),\quad\quad \forall e\in E,$$
where $E(P_i)$ denote the edge set of $P_i$ for $i=1,\ldots,k$. 

A multiflow is {\em integral} if all coefficients $\alpha_i$ in $\F$ are integers.
The total value of $\F$ is defined by $\val(\F)\coloneqq \alpha_1+\cdots+\alpha_k$.
The {\em free multiflow problem} is to find a multiflow with the maximum total value.
This setting is called {\em free} because no pair of terminals is forbidden to use as ends of a $T$-path.
The {\em integer free multiflow problem} is a variant in which the maximum is taken over all integral multiflows.
We write $\OPT(N)\coloneqq \max\set{\val(\F)|\F:\text{multiflow in $N$}}$ and $\OPTint(N)\coloneqq \max\set{\val(\F)|\F:\text{integral multiflow in $N$}}$
for those maximum values.

Note that 
we can reduce the integer free multiflow problem 
to the maximum edge-disjoint $T$-paths problem by replacing each edge $e$ by $c(e)$ parallel edges.
Then Theorem~\ref{th:Mader} implies the following statement. 
Here, $d_c(X)$ is defined by $d_c(X)\coloneqq \sum\set{c(e)|e\in \delta(X)}$ and $\odd_c(G\setminus\X)$
denotes the number of connected components $K$ in $G\setminus \X$ with odd $d_c(K)$.
\begin{theorem}\label{them:Mader-flow}
For a network $N=(G,T,c)$, the optimal value $\OPTint(N)$ equals the minimum of 
$\kappa_{c}(\X)\coloneqq \frac{1}{2}\left[\sum_{s\in T}d_c(X_s)-\odd_c(G\setminus\X)\right]$
among all the $T$-subpartitions $\X=\{X_s\}_{s\in T}$.
\end{theorem}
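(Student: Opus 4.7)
The plan is to reduce the statement directly to Theorem~\ref{th:Mader} via the standard construction that replaces each edge $e\in E$ by $c(e)$ parallel copies to obtain a multigraph $G'=(V,E')$. Applying Mader's theorem to $G'$ with terminal set $T$ yields a min-max equality, and the task reduces to verifying two translations: that the maximum number of edge-disjoint $T$-paths in $G'$ equals $\OPTint(N)$, and that the minimum of $\kappa$ taken over $T$-subpartitions in $G'$ equals the minimum of $\kappa_c$ taken over $T$-subpartitions of $G$.

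For the first translation, I would argue both directions. Given an integral multiflow $\F=(\braket{P_1,\dots,P_k},\braket{\alpha_1,\dots,\alpha_k})$ with $\zeta_{\F}(e)\leq c(e)$, take $\alpha_i$ copies of each $P_i$, realized in $G'$ by choosing, for each edge $e$ on $P_i$, distinct parallel copies from the $c(e)$ available; since $\sum\alpha_i\leq c(e)$ on every edge, disjoint assignments exist and produce $\val(\F)=\sum\alpha_i$ edge-disjoint $T$-paths in $G'$. Conversely, given $N$ edge-disjoint $T$-paths in $G'$, group them by their image in $G$ (i.e., forget which parallel copy each uses), record the number of copies $\alpha_i$ of each distinct $T$-path $P_i$, and observe that on every edge $e$ the total usage is at most $c(e)$; this yields an integral multiflow of value $N$. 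Hence $\OPTint(N)$ equals the maximum number of edge-disjoint $T$-paths in $G'$.

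For the second translation, since $G'$ has the same vertex set as $G$, every $T$-subpartition $\X=(X_s)_{s\in T}$ of $V$ is simultaneously a $T$-subpartition in $G$ and in $G'$. For any $X\subseteq V$ we have $d_{G'}(X)=\sum_{e\in\delta_G(X)}c(e)=d_c(X)$, so $\sum_{s\in T}d_{G'}(X_s)=\sum_{s\in T}d_c(X_s)$. Moreover, $G\setminus\X$ and $G'\setminus\X$ have identical connected components (parallel edges do not alter connectivity), and for each component $K$ the parity of $d_{G'}(K)$ equals the parity of $d_c(K)$, so $\odd(G'\setminus\X)=\odd_c(G\setminus\X)$. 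Therefore $\kappa(\X)$ in $G'$ coincides with $\kappa_c(\X)$ in $G$, and the minima agree.

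Combining these two translations with Theorem~\ref{th:Mader} applied to $G'$ yields the desired equality. No step presents a real obstacle; the only mild subtlety is confirming that the assignment of parallel copies in the forward direction of the first translation can be carried out consistently, which follows immediately from the capacity inequality $\sum_{i:e\in P_i}\alpha_i\leq c(e)$ edge by edge.
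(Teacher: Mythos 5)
Your proposal is correct and follows exactly the route the paper takes: the paper derives Theorem~\ref{them:Mader-flow} from Theorem~\ref{th:Mader} by the same reduction of replacing each edge $e$ with $c(e)$ parallel copies, merely stating the two translations you verify in detail.
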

Moreover, combined with the search algorithm and the augmenting operation described in Sections~\ref{sec:augmentation} and \ref{sec:search}, 
this reduction implies a pseudopolynomial-time algorithm for 
the integer free multiflow problem. 

If the network $N$ is {\em inner Eulerian}, i.e., if $d_c(v)$ is even for every inner vertex $v\in V\setminus T$, 
we have $\OPT(N)=\OPTint(N)$ by the result of Cherkassky \cite{Cherkassky77} and Lov\'asz \cite{Lovasz76}.
For this special case, a strongly polynomial algorithm is given by Ibaraki, Karzanov, and Nagamochi \cite{IKN98}.
\begin{theorem}[Ibaraki et al. \cite{IKN98}]\label{thm:IKN}
For any inner Eulerian network, one can find an integral free multiflow of maximum total value 
in $O(\varphi(|V|,|E|)\log |T|)$ time, where $\varphi$ is the complexity of finding a maximum flow between two terminals.
The output multiflow can be represented by $O(|E|\log|T|)$ $T$-paths along with integer coefficients.
\end{theorem}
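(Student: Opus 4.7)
The plan is to establish the theorem by divide-and-conquer on the terminal set, with the number of maximum flow calls amortized down to $O(\log|T|)$. The starting point is the Cherkassky--Lov\'asz identity $\OPT(N)=\OPTint(N)=\frac{1}{2}\sum_{s\in T}\lambda(s,T\setminus\{s\})$, valid for any inner Eulerian network, where $\lambda$ denotes the maximum integer flow value. This identity guarantees that the optimum is visible through ordinary two-terminal max-flow computations once the terminals are grouped suitably.

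First I would partition $T$ into halves $T_1,T_2$ of sizes $\lceil|T|/2\rceil$ and $\lfloor|T|/2\rfloor$. Contracting each half to a single super-terminal, I would compute an integer max flow between them, decompose it into integer-weighted paths, and collect these (each of which is a $T$-path) into the output multiflow. Subtracting the used capacities yields a residual network $N'$. The crucial structural point is that $N'$ remains inner Eulerian: at every inner vertex the integer flow conserves parity of used incident capacity, so $d_{c'}(v)$ stays even. Hence the Cherkassky--Lov\'asz precondition is preserved, and the scheme can be applied recursively within $(N',T_1)$ and $(N',T_2)$.

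To argue that no value is lost across the split, I would invoke Theorem~\ref{them:Mader-flow} and show that a minimum $T_1$--$T_2$ cut produced by the max flow corresponds to a tight witness $\X$: the cut value matches the total weight of paths crossing it, and inside each side the inner Eulerian structure is undisturbed. Induction on $|T|$ then gives $\val(\F_{12})+\OPTint(N',T_1)+\OPTint(N',T_2)=\OPTint(N,T)$, establishing correctness.

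The main obstacle is meeting the $O(\varphi(|V|,|E|)\log|T|)$ bound: a naive recursion would use up to $|T|-1$ max-flow calls. To reach the logarithmic count I would follow the amortization of \cite{IKN98} and aggregate the max-flow calls belonging to the same recursion depth into a single computation on the residual network, exploiting the fact that after the preceding level's saturation the separating cuts of the left and right subproblems interact only through residual structure already encoded by a single super-source/super-sink construction. Each depth then contributes one max flow and an $O(|E|)$-path flow decomposition, yielding the claimed running time of $O(\varphi(|V|,|E|)\log|T|)$ and an output multiflow representable by $O(|E|\log|T|)$ $T$-paths.
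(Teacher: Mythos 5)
First, note that the paper does not prove this statement at all: Theorem~\ref{thm:IKN} is imported verbatim from Ibaraki, Karzanov, and Nagamochi \cite{IKN98} and is used as a black box to produce the initial integral multiflow in Section~\ref{sec:initial-flow}. So there is no in-paper proof to compare against; your proposal has to stand on its own as a reconstruction of the cited result.

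Your high-level strategy (balanced divide-and-conquer on $T$ driven by the Cherkassky--Lov\'asz identity, with the parity argument showing the residual network stays inner Eulerian) is indeed the strategy of \cite{IKN98}, and the parity observation at inner vertices is correct. But the two load-bearing steps are asserted rather than proved. (1) The additivity $\val(\F_{12})+\OPTint(N',T_1)+\OPTint(N',T_2)=\OPTint(N,T)$ is the heart of the matter, and your justification (``a minimum cut corresponds to a tight witness $\X$'') does not address the real difficulty: the two recursive subproblems live in the \emph{same} residual network $N'$, so a $T_1$-path and a $T_2$-path produced by the two recursive calls may compete for the same residual capacity. One must either show that the subproblems can be confined to the two sides of a minimum $T_1$--$T_2$ cut (not automatic, since a $T_1$-path may cross the cut and return), or prove that a max flow between $T_1$ and $T_2$ decreases each local connectivity $\lambda(s,T\setminus\{s\})$ by exactly the amount of flow absorbed at $s$ and then apply the identity $\OPT=\frac{1}{2}\sum_{s}\lambda(s,T\setminus\{s\})$ to both sides of the recursion; neither is done. (2) The $O(\log|T|)$ factor is precisely what distinguishes this theorem from the older $(|T|-1)$-max-flow algorithms, and you defer it to ``the amortization of \cite{IKN98}'' --- that is, to the very result being proved --- which is circular; the sentence about cuts ``interacting only through residual structure already encoded by a single super-source/super-sink construction'' is not an argument that one max-flow computation per level suffices. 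A smaller gap: an integral flow path from $T_1$ to $T_2$ may pass through a terminal as an internal vertex and is then not a $T$-path; you need to truncate it at the first terminal reached and account for the effect of that truncation on the residual network and on the value count.
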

As the current best time complexity for the maximum flow problem is $O(|V|\cdot|E|)$ due to \cite{Orlin13},
their algorithm runs in $O(|V|\cdot|E|\log |T|)$ time. While the algorithm returns an optimal multiflow 
as a collection of flows in the ``node-arc'' form, one can transform it into the path packing form with $O(|E|\log|T|)$ $T$-paths in
$O(|V|\cdot|E|\log|T|)$ time (see the last paragraph of \cite[Section 2]{IKN98}).

In the rest of this section, we provide a strongly polynomial algorithm that solves the integer free multiflow problem 
for general networks. 
Our algorithm starts with finding an optimal fractional multiflow, which is half-integral, and rounding it down 
to obtain an initial integral multiflow. The algorithm then modifies the integral multiflow by repeatedly applying 
the search and augmentation procedures until it certifies that no further augmentation is possible. 
This technique to exploit the half-integral optimal solution was introduced by Pap~\cite{Pap07} for 
the node-capacitated multiflow problem. 
In the following, we describe the details of each phase and complexity analysis of our algorithm.

\paragraph{Finding an Initial Integral Multiflow}\label{sec:initial-flow}

For a given network $N=(G,T,c)$, which may not be inner Eulerian,
{consider the inner Eulerian network $N'\coloneqq (G,T,2c)$ with each edge-capacity being doubled.} 
By Theorem~\ref{thm:IKN}, we can find a maximum multiflow
$\F'=(\braket{P_1,\dots,P_k},\braket{\beta_1,\dots\beta_k})$ such that $\beta_i$ are all integers and $k=O(|E|\log|T|)$.
Then $\F^*\coloneqq (\braket{P_1,\dots,P_k},\braket{\frac{1}{2}\beta_1,\dots, \frac{1}{2}\beta_k})$ is a fractional multiflow 
in $N$ and satisfies $\val(\F^*)=\OPT(N)\geq \OPTint(N)$.

Furthermore, we define $\F\coloneqq (\braket{P_1,\dots,P_k},\braket{\lfloor\frac{1}{2}\beta_1\rfloor,\dots, \lfloor\frac{1}{2}\beta_k\rfloor})$, 
which is an integral multiflow in $N$. Since $k=O(|E|\log|T|)$, we have $\val(\F^*)-\val(\F)=O(|E|\log|T|)$, 
which together with $\val(\F^*)\geq\OPTint(N)$ implies that $\OPTint(N)-\val(\F)=O(|E|\log|T|)$.

Starting with this integral multiflow $\F$, we repeatedly augment by short augmenting walks until 
we certify that no further augmentation is possible in the network. Since $\OPTint(N)-\val(\F)=O(|E|\log|T|)$ holds, 
{the number of iterations is $O(|E|\log|T|)$.}

\paragraph{Augmentation}
Let $\F=(\braket{P_1,\dots,P_k},\braket{\alpha_1,\dots \alpha_k})$ be an integral multiflow in a network $N=(G=(V,E),T,c)$.
We now define an auxiliary labeled graph $\G(\F)=((V,\tilde{E}\cup L), \sigma_V,\sigma_{\tilde{E}},\sigma_L)$. 
{Set} $\P\coloneqq \{P_1,\dots,P_k\}$ and let $E(\P)$ be the edge set defined by the disjoint union of the $T$-paths $P_1, \dots, P_k$.
The set $\tilde{E}$ is obtained by extending $E$ as follows. 
For any $e=\{u,v\}\in E$ with $\zeta_{\F}(e)\leq c(e)-1$, we add one copy of $e$ to $\tilde{E}$ if the equality holds,
and add two parallel copies of $e$ to $\tilde{E}$ if the strict inequality holds.
The labeled graph $\G(\F)$ is defined as in Section~\ref{sec:augmentation} with respect to the graph $G_\F\coloneqq (V,\tilde{E})$ and $\P$.
Then, $E(\P)$ consists of labeled edges and $\tilde{E}\setminus E(\P)$ consists of free edges.
A short augmenting walk in $\G(\F)$ is defined as in Section~\ref{sec:augmentation}.

If the algorithm $\search$ in Section~\ref{sec:search} returns a walk $Q$ for the input $\G(\F)$, then $Q$ is a short augmenting walk.
By Theorem~\ref{thm:SAW1},  the switching operation of $\P$ by $Q$ yields $k+1$ edge-disjoint $T$-paths $\P'=\{P'_1,\dots,P'_{k+1}\}$ on $G(\F)$.
Note that the number of $\P$-segments in $Q$ is $O(|V|)$ by Lemma~\ref{lem:C1} and that a $T$-path in $P\in \P$ is not affected by the switching operation if $Q$ has no $P$-segment.
This implies that the number of $T$-paths in $\P'\setminus \P$ is $O(|V|)$. 
{We now} denote the $T$-paths in $\P'\setminus \P$ by $P'_1, \dots, P'_{\ell}$, where $\ell=O(|V|)$, 
{and define} $\tilde{\F}$ by $\tilde{\F}=(\braket{P_1, \dots,P_k, P'_1, \dots,P'_{\ell}},\braket{\alpha'_{1}, \dots \alpha'_{k}, 1,\dots,1})$,
where $\alpha'_i=\alpha_i$ if $P_i\in \P\cap \P'$ and $\alpha'_i=\alpha_i-1$ otherwise.
By the construction of $G_\F$, we see that $\tilde{\F}$ is an integral multiflow satisfying $\val(\tilde{\F})=\val(\F)+1$.

\paragraph{Correctness}
As explained above,  we can increase the total value of an integral multiflow if $\search$ returns a walk for the input $\G(\F)$ defined for the current multiflow $\F$.
We now show that, if $\search$ applied to $\G(\F)$ terminates without returning a walk, 
then $\F=(\braket{P_1,\dots,P_k},\braket{\alpha_1,\dotsm\alpha_k})$ is a maximum integral multiflow in $N$, i.e., $\val(\F)=\OPTint(N)$. 

Let $F$ and $\B$ be the forest and the laminar family at the termination of $\search$ applied to $\G(\F)$.
Define a $T$-subpartition $\X=\{X_s\}_{s\in T}$ in the same way as in 
the proof of Theorem~\ref{thm:tightness}. 
By the same arguments, we can verify the three claims presented there. We now use them to show that $\val(\F)=\frac{1}{2}\left[\sum_{s\in T}d_c(X_s)-\odd_c(G\setminus\X)\right]$,
which guarantees the maximality of $\F$ by Theorem~\ref{them:Mader-flow}.
By the second claim, every $T$-path $P_i$ contributes $2\alpha_i$ to $\sum_{s\in T}d_{c}(X_s)$.
The first claim implies $\zeta_{\F}(e)=c(e)$ for every edge connecting $X_s$ and $X_t$ with distinct $s,t\in T$.
The third claim implies that, among the edges connecting $\bigcup_{s\in T}X_s$ and any connected component $K$ of $G\setminus \X$, at most one edge satisfies $\zeta_{\F}(e)=c(e)-1$ and other edges satisfy $\zeta_{\F}(e)=c(e)$.
As $\sum\set{\zeta_{\F}(e)|e\in \delta(K)}$ should be even, 
there is exactly one free edge between $\bigcup_{s\in T}X_s$ and $K$ if $d_c(K)$ is odd and otherwise there is no such free edge. 
Therefore, we obtain $\sum_{s\in T}d(X_s)=2\val(\F)+\odd_c(G\setminus\X)$, which is equivalent to the required equation.

\paragraph{Time Complexity}
We now analyze the time complexity of our algorithm.
As mentioned above, finding an initial integral flow can be done in $O(|V|\cdot|E|\log |T|)$ time by using the algorithm of Ibaraki, et al.~\cite{IKN98}.
As this initial flow $\F$ satisfies $\OPTint(N)-\val(\F)=O(|E|\log|T|)$, 
the number of augmentations is $O(|E|\log|T|)$.
At first, $\F$ consists of $O(|E|\log|T|)$ $T$-paths.
Each augmentation may increase the number of $T$-paths used in $\F$ by $O(|V|)$ as mentioned above.
{Hence, throughout} the algorithm, the number of $T$-paths constituting $\F$ is $O(|V|\cdot|E|\log|T|)$.
Since each path has length at most $|V|$, the size of the labeled graph $\G(\F)$ is $O(|V|^2\cdot|E|\log|T|)$.
The search and augmentation are applied $O(|E|\log|T|)$ times,  
and each application runs in linear time. Therefore,
the total running time of the algorithm is $O(|V|^2\cdot|E|^2\log^2|T|)$.

\section*{Acknowledgements}
The authors thank Hiroshi Hirai for his suggestion on the maximum integer free multiflow problem, Etsuko Ishii for her help in reading the German literature, and Andr\'as Seb\H{o} for informing us of his related works. We are also grateful to the anonymous reviewers 
for helpful suggestions.

\end{document}